\title{Gauge transformations for twisted spectral triples} 
\author{Giovanni Landi and Pierre Martinetti} 
\date{}
\newtheorem{prop}{Proposition}[section]
\newtheorem{Lemma}[prop]{Lemma}
\newtheorem{cor}[prop]{Corollary}
\newtheorem{dfn}[prop]{Definition}
\newtheorem{rem}[prop]{Remark}
\newtheorem{asn}[prop]{Assumption}
\newtheorem{exl}[prop]{Example}
\numberwithin{equation}{section}
\newcommand{\nn}{\nonumber}
\newcommand{\ot}{\otimes}
\newcommand{\beq}{\begin{equation}}
\newcommand{\eeq}{\end{equation}}
\newcommand\A{{\mathcal A}}
\newcommand{\B}{{\mathcal B}}
\newcommand\E{{\mathcal E}}
\newcommand{\M}{{\mathcal M}}
\newcommand{\HH}{{\mathcal H}}
\newcommand{\cE}{{\mathcal E}}
\newcommand{\U}{{\mathcal U}}
\newcommand{\bo}{{\boldsymbol{\omega}}}
\newcommand{\I}{\mbox{\rm 1\hspace {-.6em} l}}
\newcommand{\N}{{\mathbb N}}
\newcommand{\R}{{\mathbb R}}
\newcommand{\C}{\mathbb C}
\newcommand{\cinf}{{C^\infty(\M)}}
\newcommand{\ds}{{\slash \!\!\!\partial}}
\newcommand{\ii}{{\rm i}}
\def\hs#1#2{\left\langle #1,#2\right\rangle}  %
\def\lhs#1#2{{_\bullet\!\!}\left\langle #1,#2\right\rangle}
\def\rhs#1#2{\left\langle #1,#2\right\rangle\!\!{_\bullet}}
\DeclareMathOperator{\Aut}{Aut}
\DeclareMathOperator{\Ad}{Ad}
\DeclareMathOperator{\id}{id}
\DeclareMathOperator{\Mat}{Mat}
\DeclareMathOperator{\Hom}{Hom}
\DeclareMathOperator{\End}{End}
\begin{document}
\maketitle

\begin{abstract}
It is extended to twisted spectral triples the fluctuations 
of the metric as bounded perturbations of the Dirac operator that arises when
a spectral triple is exported between Morita equivalent
algebras; as well as gauge transformations which are
obtained by the action of the unitary endomorphisms of the module
implementing the Morita equivalence. 
It is firstly shown that the twisted gauged  Dirac operators,
previously introduced to generate an extra scalar field in the spectral description of the 
standard model of elementary particles, in fact follow from Morita equivalence between twisted spectral triples.
The law of transformation of the gauge
potentials turns out to be twisted in a natural way. 
In contrast with the non-twisted case, twisted fluctuations do
not necessarily preserve the self-adjointness of the Dirac operator.
For a self-Morita equivalence conditions  are obtained in order to maintain self-adjointness, 
that are solved explicitly for the minimal twist of a Riemannian manifold.
\end{abstract}

\setcounter{tocdepth}{2}
\tableofcontents

\parskip = 1ex

\thispagestyle{empty}

\vfill


\noindent
-------- -------- -------- -------- -------- \\
GL, Universit\`{a} di Trieste, Trieste, Italy and I.N.F.N. Sezione di Trieste, Trieste, Italy; \\
PM, Universit\`{a} di Genova, Genova, Italy; \\
emails: landi@units.it, martinetti@dima.unige.it \hfill \date{v2 4 May 2018; v1 18 April 2017}
 
 \section{Introduction}
The gauge bosons of the standard model of elementary particles are described by (quantum) fields that,
from a mathematical view-point, are connections $1$-forms for a bundle over a (four dimensional) spin
manifold $\M$, with structure (gauge) group $U(1)\times SU(2)\times SU(3)$.
 Noncommutative geometry provides a framework to put the Higgs field
 on the same footing --- that is as a
 connection $1$-form --- or more precisely as the component of a connection
 $1$-form in the noncommutative (discrete) part of the geometry. 
For this to make sense, one needs a notion of
 connection extended beyond the usual manifold case, to 
 the noncommutative setting. 

In Connes approach \cite{Connes:1994kx}, this is done starting with a spectral triple $(\A, \HH, D)$
 where $\A$ is an involutive algebra acting by bounded operators on a
 Hilbert space $\HH$, and the \emph{Dirac operator} $D$ is a densely defined self-adjoint
 operator on $\HH$ with compact resolvent, such that the commutator{\footnote{As usual, when there is no
    risk of confusion we identify an
    element $a$ of $\A$ with its representation $\pi(a)$ as a bounded
    operator on $\HH$.}}
\begin{equation}
\delta(a):=[D, a]
\label{eq:51}
\end{equation}
is
 bounded for any $a$ in $\A$ (or in a dense subalgebra). The noncommutative analogue of the
 module of sections of a vector or tensor bundle is a $\A$-module $\E$ with some properties. 
Gauge fields are given by an $\Omega$-valued 
 connection on $\E$, where $\Omega$ is a $\A$-bimodule of $1$-forms.  
  A natural choice of these, associated with the derivation \eqref{eq:51}, is 
 the $\A$-bimodule
 \begin{equation}
\label{eq:71}
  \Omega^1_D(\A):=\Big\{\sum\nolimits_j a_j [D, b_j], \; a_j, b_j\in\A\Big\}.
\end{equation}

The simplest choice for $\E$ is the algebra $\A$ itself. A connection is
then encoded fully in a self-adjoint element $\omega$ in  $\Omega^1_D(\A)$. The later acts
on the Hilbert space $\HH$, so that $D+\omega$ makes sense as an operator on $\HH$.
By taking into account more structure,
in particular the real structure $J$, one refines the above
definition and defines the \emph{gauged Dirac operator}
as{\footnote{Usually one denotes by $A$ a
    self-adjoint element of $\Omega^1_D(\A)$ considered as a gauge connection.
Here we use $\omega$ instead,  in order to avoid a profusion of  symbols ``$A$'' .}} 
\begin{equation}
  \label{eq:77}
 D_\omega:= D+ \omega + \epsilon' J \omega J^{-1}
\end{equation}
where $\epsilon' = \pm 1$ as dictated by the $KO$-dimension of the spectral triple. 
This is an operator on $\HH$,that has all the properties required to make $(\A, \HH, D_\omega)$
 a spectral triple. The substitution of $D$ by $D_\omega$ is a \emph{fluctuation of the metric}, the latter `associated' to the starting $D$.

When applied to the
spectral triple of the standard model, these fluctuations generate the gauge fields of
the electroweak and strong interactions, together with the Higgs
field \cite{Chamseddine:2007oz}. 
There is however a part $D'$ of the corresponding Dirac operator which does not fluctuate, that is \begin{equation}
[D', a]=0 \text{ for any  }a\in \A.
\label{eq:81}
\end{equation}
This point was not relevant until the recent discovery of
the Higgs boson. The prediction for its mass 
coming from noncommutative geometry
turned out \emph{not} to be in agreement with the experimental result. As a way out, one
 turns the component of $D'$ (which was taken to be a constant parameter $\nu\in\C$) into
a field $\sigma\in\cinf$. Doing so,
one  introduces a new scalar field in the standard model, that eliminates some instability in the Higgs
potential, and provides a new parameter allowing one to fit the mass of
the Higgs \cite{Chamseddine:2012fk}.

The substitution $\nu\rightarrow
\sigma$ does not follow from an ordinary fluctuation of the
metric. Nevertheless, it may be obtained in a similar manner if one relaxes one of the defining condition of a spectral triple, --- the first order condition.
This proposal has been developed 
in \cite{Chamseddine:2013fk,Chamseddine:2013uq}, and the
phenomenological consequences have been investigated in \cite{Chamseddine:2015aa}.
An alternative approach, following the ``grand symmetry model'' of \cite{Devastato:2013fk}, 
has allowed in \cite{buckley} to generate the field $\sigma$ within the framework of twisted spectral
triples~\cite{Connes:1938fk}: the field
$\sigma$ is obtained as a twisted version of a fluctuation of
the metric, with a twisted first-order condition. A twisted
fluctuation of the metric comes from substituting in the forms \eqref{eq:71} the commutator $[D,a]$ with a twisted commutator $[D, a]_\rho := D a - \rho(a) D$,
using an automorphism $\rho$ of $\A$, resulting into a bimodule
\begin{equation}
  \label{eq:2}
  \Omega^1_D(\A, \rho):= \Big\{ \sum\nolimits_j a_j [D, b_j]_\rho,\; a_j, b_j\in\A\Big\}
\end{equation}
The \emph{twisted-gauged Dirac operator} is then defined as
\begin{equation}
  \label{eq:76}
  D_{\omega_\rho} := D +\omega_\rho + \epsilon' J \omega_\rho J^{-1}
\end{equation}
where $\omega_\rho \in \Omega^1_D(\A, \rho)$ is a twisted $1$-form such that the resulting operator \eqref{eq:76} is self-adjoint.

Twisted spectral triples and twisted $1$-forms were introduced in \cite{Connes:1938fk} 
to deal with type III factors. In \cite{Landi:2015aa} we extended the construction to encompass the
real structure $J$, and showed that many properties of metric fluctuations
still make sense in the twisted case. In particular: 
\begin{itemize}
\item 
Given a twisted spectral triple $(\A, \HH, D; \rho)$ and a
  twisted-gauged Dirac operator $D_{\omega_\rho}$, the data 
  $(\A, \HH, D_{\omega_\rho}; \rho)$ is a real
  twisted spectral triple with the same real structure and $KO$-dimension;

\item 
Twisted fluctuations form a monoid:{\footnote{There is a misprint in
    the statement of this property in \cite[Prop. 2.7]{Landi:2015aa}:
$D_\rho$ in (2.30) there should be $D$.}} the twisted fluctuation
  $D_{\omega_\rho} + \omega'_\rho + \epsilon' J \omega'_\rho J^{-1}$ of $D_{\omega_\rho}$ is the twisted
  fluctuation $D + \omega''_\rho + \epsilon' J \omega''_\rho J^{-1}$ of $D$ with
  $\omega''_\rho = \omega_\rho + \omega'_\rho$.
 \end{itemize}

\noindent
However, important aspects and consequences of fluctuating the metric are
 yet to be understood for the twisted case. In particular:

\begin{itemize}
\item 
Usual fluctuations appear as a particular case of a general construction of 
exporting a spectral triple $(\A, \HH, D)$ to a Morita equivalent algebra $\B$. The operator
\eqref{eq:77} is obtained as the covariant derivative on
the bundle $\E$ that implements a Morita equivalence of $\A$ with itself. 
The twisted fluctuations in \eqref{eq:76} 
mimic the expression for the non-twisted
case, but their possible interpretation in terms of Morita
equivalence has not been addressed.  

\item 
Is there an interpretation of the bimodule $\Omega^1_D(\A, \rho)$ 
as a module of connection $1$-forms?

\item 
What is a gauge transformation in the twisted context?
 \end{itemize}
 
In this paper, we show that Morita equivalence is directly implemented
for twisted spectral triples. The twisted-gauged Dirac operator  $D_{\omega_\rho}$  is --- up to an endomorphism --- a covariant operator associated
to a connection on the algebra $\A$ thought of as an $\A$-bimodule.
This result is obtained in \S\ref{subsec:twistfluct} by viewing $\A$ first as a right $\A$-module
(Corollary~\ref{corro:right}), then as a left $\A$-module (Corollary~\ref{corro:left}),
and finally as a bimodule, taking into account the real structure  (proposition
\ref{prop:twistMorita}). 
In \S\ref{sec:twisgaugetransform} we deal with gauge
transformations. These are implemented as in the non-twisted case by
the action of some unitary endomorphism $u$, the only
dif and only if erence being that 
the law of transformation of gauge potential has to be twisted (Proposition~\ref{prop:twistedgaugefluct}). We
also show in Proposition~\ref{prop:twistedunitaries} that the twisted gauged
Dirac operator is obtained by the twisted adjoint action of the operator $\Ad(u)$. This raises the question of the self-adjointness of the
gauged-twisted Dirac operator, which is investigated in  \S\ref{sec:selfadjoint}. We work out in 
Proposition~\ref{prop:conditionsselfadjoint} some conditions on the
unitary $u$ guaranteeing that this
self-adjointness is preserved. These conditions are solved for
the case of minimal twist of a manifold
(Proposition~\ref{prop:minimaltwist}). Interestingly, we obtain 
other solutions than the obvious ones (that is the unitaries $u$ invariant under the
twist).  Before that, we begin in  \S\ref{sec:reltsp} with
some recalling of twisted spectral triples.

In \cite{BCDS16} there is a modified definition of a real spectral triple, in which only the reality structure is generalised while remaining in the framework of usual spectral triples (that is no twisted commutators between the Dirac operator and algebra elements). It is shown there that this  allows for fluctuations of Dirac operators, which do not change the bimodule of one forms. 

\section{Twisted Real spectral triples}
\label{sec:reltsp}

This section collects well known material on  and properties of real twisted spectral triples.

A twisted spectral triple is the datum $(\A, \HH, D)$ of an
involutive algebra  $\A$ acting via a representation $\pi$ on a Hilbert space $\HH$, with $D$ an
operator on $\HH$ having compact resolvent (or with a similar condition when $\A$ is not unital), together with an automorphism $\rho$ of $\A$, such that the twisted commutator
\begin{equation}
  \label{eq:1}
  [D, a]_\rho := D a - \rho(a) D
\end{equation}
is bounded for any $a$ in $\A$.  It is graded if there is
a grading $\Gamma$ of $\HH$, that is an operator such that 
$\Gamma=\Gamma^*$, $\Gamma^2=\I$, that commutes with $\A$ and anticommutes
with $D$. 

The real structure is an antilinear operator $J$ such that
\begin{equation}
  \label{eq:4}
  J^2 =\epsilon \I,\quad JD= \epsilon' DJ,\quad J\Gamma = \epsilon''
  \Gamma J
\end{equation}
where the sign $\epsilon, \epsilon', \epsilon'' \in\left\{1,-1\right\}$ define
  the so called $KO$-dimension of the spectral triple.
The operator $J$ allows one to define a bijection between $\A$ and the opposite
  algebra $\A^\circ$,
  \begin{equation}
    \label{eq:24}
    a^\circ := Ja^*J^{-1},
  \end{equation}
which is used to implement a right $\A$-module
structure on $\HH$   
\begin{equation}
  \label{eq:35}
  \psi a := a^\circ \psi \quad \forall \psi\in\HH, a\in\A.
\end{equation}
This right
action of $\A$ is asked to commute with the left action (the
order-zero condition), 
  \begin{equation}
    \label{eq:12}
    [a, Jb^*J^{-1}]=0 \quad \forall a, b\in\A,
  \end{equation}
thus turning $\HH$ into a $\A$-bimodule. In addition, one requires a twisted first-order condition 
\cite{Landi:2015aa}:
 \begin{equation}
   \label{eq:13}
   [[D,a]_\rho, Jb^*J^{-1}]_{\rho_0}= 0 \quad \forall a, b\in \A
 \end{equation}
where
$\rho^\circ$ is the image of $\rho$ under the isomorphism between
$\text{Aut}(\A)$ and $\text{Aut}(\A^\circ)$ given by
\begin{equation}
  \label{eq:196}
  \rho\mapsto \rho^\circ \quad \text{ with } \quad\rho^\circ(a^\circ):= (\rho^{-1}(a))^\circ.
\end{equation}
 This choice of isomorphism is dictated by the requirement made in
 \cite{Connes:1938fk} that the twisting automorphism, rather than being a $*$-automorphism, it satisfies the condition:
\begin{equation}
  \label{eq:170}
  \rho(a^*)= (\rho^{-1}(a))^*.
\end{equation} 
Equation \eqref{eq:196} thus guarantees that ``the automorphism commutes
with the real structure'', since one has:
\begin{equation}
\rho^\circ(Jb^*J^{-1}) = \rho^\circ(b^\circ) = (\rho^{-1}(b))^\circ= J (\rho^{-1}(b))^* J^{-1} =  J \rho(b^*)J^{-1} .
\label{eq:5bis}
\end{equation}

\begin{dfn}
\label{deftwist}
  A twisted spectral triple $(\A, \HH, D; \rho)$ together with a
  grading $\Gamma$,  a real structure $J$ satisfying \eqref{eq:4} as
  well as the order zero condition \eqref{eq:12}, and the twisted first-order condition \eqref{eq:13} is called a \emph{real twisted
    spectral triple}.
\end{dfn}
\noindent For $\rho$ the identity automorphism, one gets back
the usual notion of a real spectral triple.

The set of twisted $1$-forms is the $\A$-bimodule $\Omega^1_D(\A, \rho)$ 
defined in \eqref{eq:2} with product
\begin{equation}
  \label{eq:15}
  a\cdot \omega_\rho\cdot b = \rho(a)\, \omega_\rho\, b \;\quad \forall a, b \in \A,\;\,
  \omega_\rho\in \Omega^1_D(\A \rho).
\end{equation}
 The left action of $\A$ is twisted by
$\rho$ to guarantee the twisted commutator
\begin{equation}
\delta_\rho( \, \cdot \, ) := [D, \, \cdot \, ]_\rho\label{eq:91}
\end{equation}
be a derivation of $\A$ in
$\Omega^1_D(\A, \rho)$, that is (cf. \cite{Connes:1938fk}) 
\begin{equation}
  \label{eq:96}
  \delta_\rho(ab) = \rho(a) \cdot\delta_\rho(b) + \delta_\rho(a)\cdot b.
\end{equation}
Thus $\Omega^1_D(\A, \rho)$ is the $\A$-bimodule
generated by $\delta_\rho$; and it acts as bounded operator on $\HH$,
since so do both $\A$ and $[D, \A]_\rho$. It is worth stressing a dif and only if erence between the right and left action of 
$\A$ on $1$-forms when acting on $\HH$. By the very definition in \eqref{eq:15}, one has 
\beq \label{eq:1190}
  (\omega_\rho\cdot a)\psi  = \omega_\rho a\psi= \omega_\rho (a\psi),\eeq
while
\beq \label{eq:1160}
(a\cdot\omega_\rho)\psi =\rho(a) \omega_\rho\psi \neq a(\omega_\rho\psi) .
\eeq

\section{Twisted fluctuation by Morita equivalence}
\label{subsec:twistfluct}

In the non-twisted case, the fluctuations of the metric
arise as a way to export a spectral triple $(\A, \HH, D)$ to an
algebra $\B$ which is Morita
equivalent to $\A$, in a way compatible with the real structure. An important role is played by a connection on a module
that is moved to the Hilbert space (\S \ref{sec:connec}) thus resulting into a gauged Dirac
operator (\S\ref{ntcm}). This construction is extended to the
twisted situation in \S\ref{subsec:lift}-\ref{subsec:bimodandreal}. The
main result is Proposition~\ref{prop:twistMorita}, which shows that
the twisted-gauged Dirac operator \eqref{eq:76} is obtained by Morita
equivalence,  in a way  similar to the one for the usual gauged Dirac operator \eqref{eq:77}.

\subsection{Moving connections to Hilbert spaces}\label{sec:connec}
We recall
how an $\Omega$-valued connection on a right (or left) $\A$-module $\E$
yields a map $\nabla$ on $\E\otimes_\C \HH$ (or $\HH\otimes_\C\E$), when both the
$\A$-bimodule $\Omega$ and the algebra $\A$ act on $\HH$. This map
does not pass to the tensor product $\E\otimes_\A \HH$ (or
$\HH\otimes_\A\E$).  We get in
Propositions~\ref{Prop:lebnizright} and \ref{prop:leibnizleft} 
compatibility conditions between the  actions of $\A$ and $\Omega$ which guarantees that this
lack of $\A$-linearity of $\nabla$ is captured by the derivation $\delta$ that
generates $\Omega$. 

 A derivation of an algebra $\A$
 with value in a $\A$-bimodule $\Omega$ is a map $\delta: \A \to
 \Omega$ such that
 \begin{equation}
   \label{eq:113}
   \delta(ab) =\delta(a)\cdot b + a\cdot \delta(b)
 \end{equation}
where $\cdot$ denotes the right and left $\A$-module structures of $\Omega$. 
An $\Omega$-valued 
connection on a right $\A$-module $\E$ is a map $\nabla: \E\to \E\otimes_\A \Omega$ satisfying the Leibniz rule
\begin{equation}
  \label{eq:14}
  \nabla (\eta a) -\nabla(\eta)\cdot a = \eta\otimes \delta(a) \quad
  \forall \eta\in\E,\, a\in\A,
\end{equation}
where the right action of $\A$ on  $\E\otimes_\A \Omega$ comes
from the right module structure of $\Omega$:  
\begin{equation}
  \label{eq:18}
  (\eta \otimes \omega)\cdot a := \eta \otimes (\omega\cdot a) \quad
  \forall \eta\in\E,\; \omega\in\Omega.
\end{equation}
%
%

When both $\A$ and $\Omega$ acts (on the left) on a Hilbert space $\HH$, we use the connection $\nabla$ 
to define an operator (still denoted $\nabla$) from $\E\otimes_\C
  \HH$ to itself.
To this end, it is useful to use a Sweedler-like notation: for any $\eta \in \cE$ we write
\beq\label{sn}
\nabla (\eta) = \eta_{(0)} \ot \eta_{(1)} \quad
\quad\eta_{(0)} \in \cE, \;\eta_{(1)} \in \Omega
\eeq
where a summation is understood. 
By the action of $\Omega$ on $\HH$, there is a natural map 
\beq\label{eoact}
\E\otimes_\C \Omega \times \HH \to \E\otimes_\C\HH , \qquad (\eta \ot \omega) \psi = \eta \ot (\omega \psi),
\eeq
that induces a map
\beq
\label{eq:115bis}
\nabla : \cE \otimes_\C\HH \, \to \, \cE \ot_\C \HH
\eeq
defined by
\begin{equation} \label{eq:115}
  \nabla(\eta \otimes \psi) := \left(\eta_{(0)} \ot \eta_{(1)}\right)
  \psi =\eta_{(0)} \ot (\eta_{(1)} \psi) \quad\quad \forall
  \eta\in\E,\, \psi\in\HH.
\end{equation}
Somewhat abusing notation, this is often denoted as
$\nabla(\eta)\psi$. 

This map cannot be extended to the tensor product $\E\otimes_\A \HH$ over
$\A$ because there is no reason that $\nabla(\eta a)\psi - \nabla(\eta)a\psi$
vanishes. However, this incompatibility  is captured by the
derivation $\delta$, providing the actions of $\Omega$ and $\A$ on $\HH$ are compatible.
\begin{prop}
\label{Prop:lebnizright}
  If the (left) actions of $\Omega$ and $\A$ on $\HH$ are such that 
 \begin{equation}
   \label{eq:129}
(\omega\cdot a)\,\psi = \omega(a\psi),
 \end{equation}
then the map $\nabla$ in \eqref{eq:115} satisfies the Leibniz rule 
\begin{equation}
  \label{eq:122}
   \nabla(\eta a)\psi - \nabla(\eta)a\psi = \eta\otimes \delta(a)\psi \quad\quad
  \forall \eta\in\E,\, a\in\A,\, \psi\in\HH.
\end{equation}
\end{prop}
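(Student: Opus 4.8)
The plan is to verify the Leibniz identity \eqref{eq:122} by direct computation, unwinding the definitions of all the maps involved. First I would take $\eta\in\E$, $a\in\A$, $\psi\in\HH$ and write $\nabla(\eta a)$ using the Sweedler-like notation \eqref{sn}; by the connection Leibniz rule \eqref{eq:14} for $\nabla\colon\E\to\E\otimes_\A\Omega$ we have $\nabla(\eta a) = \nabla(\eta)\cdot a + \eta\otimes\delta(a)$, where the right action on $\E\otimes_\A\Omega$ is the one in \eqref{eq:18}. Writing $\nabla(\eta)=\eta_{(0)}\otimes\eta_{(1)}$, this reads $\nabla(\eta a) = \eta_{(0)}\otimes(\eta_{(1)}\cdot a) + \eta\otimes\delta(a)$, an identity in $\E\otimes_\A\Omega$.

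Next I would apply the map of \eqref{eq:115} — i.e. act on $\psi$ via the pairing \eqref{eoact} — to both sides, being careful that the pairing $\E\otimes_\C\Omega\times\HH\to\E\otimes_\C\HH$ is \emph{a priori} defined on the tensor product over $\C$, so one must check it is well defined when one of the factors is rewritten using the $\A$-balancing; this is exactly the subtlety the statement is addressing, so I would not assume it but rather track where the balancing is used. Applying the pairing term by term gives
\begin{equation}
\nabla(\eta a)\psi = \eta_{(0)}\otimes\big((\eta_{(1)}\cdot a)\psi\big) + \eta\otimes\big(\delta(a)\psi\big),
\end{equation}
where I used that $\delta(a)\in\Omega$ acts on $\HH$ and the second term is by definition $\eta\otimes\delta(a)\psi$. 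Now the hypothesis \eqref{eq:129}, $(\omega\cdot a)\psi = \omega(a\psi)$, is invoked precisely on the first summand: it lets me replace $(\eta_{(1)}\cdot a)\psi$ by $\eta_{(1)}(a\psi)$, so that
\begin{equation}
\eta_{(0)}\otimes\big((\eta_{(1)}\cdot a)\psi\big) = \eta_{(0)}\otimes\big(\eta_{(1)}(a\psi)\big) = \nabla(\eta)(a\psi) = \nabla(\eta)\,a\,\psi,
\end{equation}
the last step just being the notational convention following \eqref{eq:115}. Rearranging yields \eqref{eq:122}.

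The genuinely delicate point — and the only place where anything beyond bookkeeping happens — is the well-definedness of applying the evaluation-on-$\psi$ map to the identity $\nabla(\eta a)=\eta_{(0)}\otimes(\eta_{(1)}\cdot a)+\eta\otimes\delta(a)$, which lives in the $\A$-balanced tensor product $\E\otimes_\A\Omega$, whereas the output lands in $\E\otimes_\C\HH$. One has to see that the map $\E\otimes_\A\Omega\to\E\otimes_\C\HH$, $\eta\otimes\omega\mapsto\eta\otimes(\omega\psi)$, descends from $\E\otimes_\C\Omega$ consistently — equivalently, that the two sides of \eqref{eq:122} depend only on the class of $\nabla(\eta)$ in $\E\otimes_\A\Omega$, which is where compatibility condition \eqref{eq:129} earns its keep. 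In the write-up I would make this explicit by computing $\nabla(\eta b\cdot a)\psi$ in two ways (treating $\eta b$ as a new element of $\E$ versus expanding $b\cdot a$ inside $\Omega$) and checking the results agree exactly under \eqref{eq:129}; everything else is a routine substitution of definitions \eqref{sn}, \eqref{eoact}, \eqref{eq:115} and the rule \eqref{eq:18}.
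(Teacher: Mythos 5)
Your proof is correct and is essentially the paper's own argument run in the opposite direction: write the module-level Leibniz rule \eqref{eq:14} in Sweedler notation, evaluate on $\psi$, and use hypothesis \eqref{eq:129} to trade $(\eta_{(1)}\cdot a)\psi$ for $\eta_{(1)}(a\psi)$. Your side remark that \eqref{eq:129} is what makes the evaluation map descend from $\E\otimes_\C\Omega$ to $\E\otimes_\A\Omega$ is not quite the right mechanism (the balancing there involves the \emph{left} $\A$-action on $\Omega$, which \eqref{eq:129} does not address), but the paper glosses over the same point and it does not affect the validity of the main computation.
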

\begin{proof}
In Sweedler notations, the Leibniz rule \eqref{eq:14} reads
\beq\label{eq:14b}
(\eta a)_{(0)} \ot (\eta a)_{(1)} - \eta_{(0)} \ot (\eta_{(1)} \cdot a ) = \eta\ot \delta(a) .
\eeq
Hence, using condition \eqref{eq:129} in the second equality
\begin{align*}
\nabla(\eta a)\psi - \nabla(\eta) a\psi & = (\eta a)_{(0)} \ot (\eta a)_{(1)} \psi - 
\eta_{(0)} \ot \eta_{(1)} (a \psi) \\
& = (\eta a)_{(0)} \ot (\eta a)_{(1)} \psi - \eta_{(0)} \ot (\eta_{(1)} \cdot a) \psi \\
& = \left( (\eta a)_{(0)} \ot (\eta a)_{(1)} - \eta_{(0)} \ot (\eta_{(1)} \cdot a ) \right) \psi \\
& = \left(\eta\ot \delta(a) \right) \psi .
\end{align*}
Equation \eqref{eq:122} follows by \eqref{eoact}.
\end{proof}

Similarly, an $\Omega$-valued connection on a left $\A$-module $\E$ is a map 
$\nabla:\E\to\Omega\otimes_\A \E$ such that 
\begin{equation}
  \label{eq:58}
\nabla (a\eta) -   a\cdot\nabla(\eta)= \delta(a)\otimes \eta  \quad \quad \forall a\in\A ,\, \eta\in\E ,
\end{equation}
with left multiplication by $\A$ on $\Omega\otimes_\A\E$ coming from the left module structure
of $\Omega$, 
\begin{equation}
  \label{eq:67}
 a\cdot  (\omega\otimes \eta):=  (a\cdot \omega)\otimes \eta \quad\quad
  \forall \eta\in \E,\,\omega\in\Omega.
\end{equation}
For any $\eta \in \cE$ we shall now write with Sweedler-like notation
\beq
\nabla (\eta) = \eta_{(-1)} \ot \eta_{(0)} \quad \eta_{(0)} \in \cE, \,\eta_{(-1)} \in \Omega.
\eeq
%
When $\A$ acts (on the right) and $\Omega$ acts (on the
  left) on a Hilbert space $\HH$, the connection on the left module
defines a map similar to the one in \eqref{eq:115bis} with minimal changes. The map
\beq\label{eoact1}
\HH \times \Omega \otimes_\C \E\to  \HH \otimes_\C \E , \qquad (\psi) (\omega \ot \eta) =  (\omega \psi) \ot \eta,
\eeq
induces now a map $\nabla: \HH\otimes_\C\E\to \HH\otimes_\C\E$
\begin{equation} \label{eq:115b}
  \nabla(\psi\otimes \eta):= (\psi) (\eta_{(-1)} \ot \eta_{(0)}) = (\eta_{(-1)}\psi) \ot \eta_{(0)} \quad\quad \forall \eta\in\E,\, \psi\in\HH.
\end{equation}
We denote this map as $\psi\nabla(\eta)$.
Again, the obstruction to extend \eqref{eq:115b} to $\HH\otimes_\A\E$ is captured by the derivation $\delta$, if the
actions of $\Omega$ and $\A$ are compatible.
\begin{prop}
  \label{prop:leibnizleft}
If the left action of $\Omega$ and the right action of $\A$ on $\HH$ are such that
\begin{equation}
  \label{eq:130}
(a\cdot \omega)\psi = \omega(\psi a)\end{equation}
then the map $\nabla$ in \eqref{eq:115b} satisfies the Leibniz
rule
\begin{equation}
  \label{eq:123b}
\psi\nabla(a\eta)  -  \psi a\nabla(\eta) = \delta(a)\psi \ot \eta.
\end{equation}
\end{prop}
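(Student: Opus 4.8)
The plan is to mirror, on the left-module side, the argument already carried out for right modules in Proposition~\ref{Prop:lebnizright}, translating every right-module statement into its left-module analogue via the pattern $\E\ot_\A\Omega \leadsto \Omega\ot_\A\E$ and $\eta a\leadsto a\eta$. Concretely, I would start from the defining Leibniz rule \eqref{eq:58} of an $\Omega$-valued connection on the left $\A$-module $\E$, rewrite it in the Sweedler-like notation $\nabla(\eta)=\eta_{(-1)}\ot\eta_{(0)}$, so that it reads
\beq\label{eq:58b}
(a\eta)_{(-1)}\ot(a\eta)_{(0)} - (a\cdot\eta_{(-1)})\ot\eta_{(0)} = \delta(a)\ot\eta,
\eeq
where the left action on $\Omega\ot_\A\E$ is the one from \eqref{eq:67}. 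This is the exact left-handed counterpart of \eqref{eq:14b}.

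Next I would compute $\psi\nabla(a\eta) - \psi a\,\nabla(\eta)$ directly from the definition \eqref{eq:115b}. The first term is $(a\eta)_{(-1)}\psi \ot (a\eta)_{(0)}$. For the second term, $\psi a\,\nabla(\eta)$ means applying the map $\nabla$ of \eqref{eq:115b} not to $\psi$ but to the vector $\psi a = \psi\cdot a\in\HH$ under the right action, giving $(\eta_{(-1)}(\psi a))\ot\eta_{(0)}$. Here the compatibility hypothesis \eqref{eq:130}, namely $(a\cdot\omega)\psi=\omega(\psi a)$, is exactly what is needed: it lets me rewrite $\eta_{(-1)}(\psi a) = (a\cdot\eta_{(-1)})\psi$, so the second term becomes $(a\cdot\eta_{(-1)})\psi \ot \eta_{(0)}$. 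Subtracting, I get
\[
\psi\nabla(a\eta) - \psi a\,\nabla(\eta) = \Big( (a\eta)_{(-1)}\ot(a\eta)_{(0)} - (a\cdot\eta_{(-1)})\ot\eta_{(0)} \Big)\psi = \big(\delta(a)\ot\eta\big)\psi,
\]
using \eqref{eq:58b} in the last step, and then invoking \eqref{eoact1} to read $(\delta(a)\ot\eta)\psi = (\delta(a)\psi)\ot\eta = \delta(a)\psi\ot\eta$, which is precisely \eqref{eq:123b}.

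I do not expect any genuine obstacle here: the proof is a formal transcription and the only subtle point is bookkeeping — making sure that in the term $\psi a\nabla(\eta)$ the symbol $a$ acts on $\psi$ through the \emph{right} $\A$-module structure of $\HH$ (so that \eqref{eq:130} is applicable), rather than somehow on $\eta$, and that the balanced tensor product $\Omega\ot_\A\E$ is consistent with pushing $a$ through $\eta_{(-1)}$. If one wanted to be scrupulous one could also check that all three expressions in \eqref{eq:123b} are well defined on the balanced tensor product, but since $\nabla$ in \eqref{eq:115b} is stated as a map on $\HH\ot_\C\E$ (not over $\A$), this is not required for the statement as phrased. So the write-up is essentially the short displayed chain of equalities above, with pointers to \eqref{eq:58}, \eqref{eq:67}, \eqref{eq:115b}, \eqref{eq:130} and \eqref{eoact1}.
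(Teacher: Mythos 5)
Your proposal is correct and follows essentially the same route as the paper's proof: rewrite the left Leibniz rule \eqref{eq:58} in Sweedler notation, use the compatibility condition \eqref{eq:130} to convert $\eta_{(-1)}(\psi a)$ into $(a\cdot\eta_{(-1)})\psi$, factor out $\psi$, and conclude via \eqref{eoact1}. No gaps.
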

\begin{proof}
In Sweedler notations, the left Leibniz rule \eqref{eq:58} becomes
\beq\label{eq:14c}
(a \eta)_{(-1)} \ot (a \eta)_{(0)}  - (a \cdot \eta_{(-1)}) \ot \eta_{(0)} = \delta(a) \ot \eta .
\eeq
Using condition \eqref{eq:130} in the second equality: 
\begin{align*}
\psi\nabla(a \eta) - \psi a\nabla(\eta) 
& = (a \eta)_{(-1)} \psi \ot (a \eta)_{(0)} -  \eta_{(-1)} (\psi a) \ot \eta_{(0)} \\
& = (a \eta)_{(-1)} \psi \ot (a \eta)_{(0)} -  (a \cdot \eta_{(-1)}) \psi \ot \eta_{(0)} \\
& = (\psi) \big( (a \eta)_{(-1)} \ot (a \eta)_{(0)} - (a \cdot \eta_{(-1)}) \ot \eta_{(0)} \big)    \\
& = (\psi) \big(\delta(a) \ot \eta\big).  
\end{align*}
Equation \eqref{eq:123b} follows by \eqref{eoact1}.
\end{proof}


\subsection{The non-twisted case}\label{ntcm}
For completeness, the details of the construction are reported 
in \S\ref{subsec:gauge-non-twist}, while here we recall the important steps.
Following \cite{Connes-Marcolli}, a fluctuation from $D$ to the gauged operator $D_\omega$ given in
\eqref{eq:77} with $\omega\in\Omega^1_D(\A)$, is seen as a two steps process:  starting with a real spectral triple $(\A,
\HH, D), J$  one first implements a self-Morita
equivalence of $\A$ using as module the algebra itself, viewed as a
right $\A$-module $\E_R=\A$. This yields a new spectral triple
$(\A, \HH, D+\omega)$ with $\omega\in\Omega^1_D(\A)$. However this is not a
real spectral triple. To correct this lacking, one repeats the operation using still the algebra as a module, but
this time as a left $\A$-module $\E_L=\A$. The iteration yields the real spectral
triple $(\A, \HH, D_\omega=D+\omega+ J\omega J^{-1})$.   

Recall that at a first level, the algebra $\B$ is 
Morita equivalent to the unital algebra $\A$ if it is isomorphic to the
algebra of $\A$-linear (adjointable) endomorphisms of a finite projective (right say)
$\A$-module $\E_R$, that is 
$ \B \simeq \End_\A(\E_R)$.
Assuming $\E_R$ is a hermitian module, that is it carries an $\A$-hermitian structure, one use this structure to make the tensor product 
$$
\HH_R= \E_R\otimes_\A\HH
$$
into a Hilbert space (with Hilbert product recalled in \eqref{eq:32}), on which the algebra 
$\B$ acts on the left in a natural manner. The ``simplest'' action of $D$ on
$\HH_R$, that is
\begin{equation}
  \label{eq:42}
  D_R(\eta\otimes\psi):= \eta\otimes D\psi \quad \forall \eta\in\E_R, \psi\in\HH
\end{equation}
is not compatible with the tensor product of $\A$; it needs be
corrected by a connection $\nabla$ with value in $\Omega^1_D(\A)$.
The resulting covariant derivative, $D_R:=D_R +\nabla$, 
is well defined on $\HH_R$. With the notation \eqref{sn} for the 
connection this operator can be written as
\beq\label{rgdo}
D_R(\eta \otimes \psi) = \eta\otimes D\psi + \eta_{(0)} \ot (\eta_{(1)} \psi) \quad\quad \forall
  \eta\in\E,\, \psi\in\HH.
\eeq
When $\nabla$ is self-adjoint, the datum $({\B}, \HH_R, D_R)$ is a
spectral triple \cite{BMvS16}. It could be said to be `\emph{Morita equivalent'} to the starting $(\A, \HH, D)$.  
However, when $(\A, \HH, D)$ is a real spectral triple,  its
real structure $J$ is not a real structure for $({\B}, \HH_R, D_R)$.
 To cure that, one uses the right action
  \eqref{eq:35} of $\A$
on $\HH$ to fluctuate a second time, using a left module $\E_L$ endowed with an $\A$-hermitian structure. One considers the Hilbert space
$$  
\HH_L := \HH\otimes_\A \E_L
$$
on which the simple operator, 
\begin{equation}
  \label{eq:92}
  D_L(\psi\otimes\eta) := D\psi\otimes\eta , 
\end{equation}
is now made compatible with the tensor product thanks to a (left) connection $\nabla^\circ$. 
The resulting covariant operator $D_L+\nabla^\circ$ is well defined on $\HH_L$, with an expression similar to that in \eqref{rgdo}. 

Combining the two constructions, one obtains an
operator $D' = D + \nabla + \nabla^\circ$ on a Hilbert space $
\HH_{RL} = \E_R\otimes_\A\HH \otimes_\A \E_L$. 
The real structure requires that $\nabla=\nabla^\circ$. 

For a self Morita equivalence of $\A$,  that is $\B \simeq\A$, one
gets that $D'$ is the gauged operator $D_\omega$ defined in
\eqref{eq:77}, for a self-adjoint element $\omega$ in  $\Omega^1_D(\A)$.
Thus, the spectral triple $(\A, \HH, D_\omega)$
obtained by fluctuation of the metric is self-Morita equivalent to the starting one $(\A,\HH, D)$.

\subsection{Lifting automorphisms} 
\label{subsec:lift}

To adapt the construction above to the twisted case,  one needs some
action of $D$ on $\HH_R$ and $\HH_L$ whose non-compatibility with the
tensor product can be
corrected by derivations with value in $\Omega^1_D(\A, \rho)$. Such
operators are obtained in Propositions~\ref{prop.rightfluct} and
\ref{prop.leftfluct} below, by twisting  the operators $D_R$ and
$D_L$ of \eqref{eq:42} and \eqref{eq:92} with a lift of the automorphism $\rho$ to the module.

\begin{asn}\label{def:end-rho}
With a right $\A$-module $\E$ (resp. a left $\A$-module $\E$),
the automorphism $\rho$ can be lifted to $\E$ in the sense that 
there is an invertible linear map 
$\widetilde{\rho}:\E\to\E$ such that, 
\beq
\widetilde{\rho} ( \eta a) = \eta \, \rho(a) \quad\quad \mbox{resp.} \quad\quad 
\widetilde{\rho} ( a \eta) = \rho(a) \, \eta \qquad  \forall \eta\in\E, \, a\in\A . 
\eeq
\end{asn}

\medskip
\begin{exl}\label{def:end-rho-b}
With a right $\A$-module $\E_R=p\A^N$ for a projection $p = (p_{jk}) \in \Mat_N(\A)$, 
which is invariant for $\rho$, that is $\rho(p_{jk}) = p_{jk}$, 
one defines the action of $\rho\in\Aut(\A)$ on $\E$ by
\begin{equation}
\widetilde{\rho}(\eta) :=p\, \left(\begin{array}{c}
                                                       \rho(\eta_1)\\ \vdots \\
                                                       \rho(\eta_N)\end{array}\right)
\qquad \textup{for} \quad \eta = p \left(\begin{array}{c}
                                                       \eta_1\\ \vdots \\
                                                       \eta_N\end{array}\right)\in
                                                   \E_R,
                                                   \quad \eta_j \in\A. 
\label{eq:85}
\end{equation}
Similarly, the action of $\rho$ on a left $\A$-module $\E_L=\A^Np$ with an invariant projection is
given by
\begin{equation}
  \label{eq:124}
\widetilde{\rho}(\eta) := (\rho(\eta_1), \hdots, \rho(\eta_N))\,p 
\qquad \textup{for} \quad \eta=\left( \eta_1, \hdots, \eta_N \right) p \in\E_L, \quad \eta_j\in\A. 
\end{equation}
In particular, for the trivial module $\E_R=\E_L\simeq \A$ (that is $p=\I$) which
is the case relevant for the self Morita equivalence, then $\widetilde{\rho}$ is simply the automorphism $\rho$.
\end{exl}

\subsection{Morita equivalence by right module}
\label{subsec:rightmodule}

 We first investigate the implementation of Morita equivalence for a
 twisted spectral triple $(\A, \HH, D; \rho)$ using a hermitian finite
 projective right $\A$-module $\E_R$ (definitions are in \S\ref{subsec:Moritarightnontwist}).

Consider the Hilbert space $\HH_R= \E_R\otimes_\A\HH$.
As  ``natural action" of  $D$ on $\HH_R$, one considers the composition 
of $D_R$ in \eqref{eq:42}  with the endomorphism
$\rho$ of Assumption \ref{def:end-rho}, that is, 
\begin{equation}
  \label{eq:78bis}
((\widetilde{\rho}\otimes \I)\circ
 D_R)(\eta\otimes \psi) = \widetilde{\rho}(\eta)\otimes D\psi \quad \forall
 \eta\in\E_R,\, \psi\in\HH.
\end{equation}
\noindent This is not compatible with the
tensor product over $\A$ since 
\begin{align} 
((\widetilde{\rho}\otimes \I)\circ D_R)(\eta a\otimes \psi) -((\widetilde{\rho}\otimes \I)\circ D_R)(\eta \otimes a\psi) 
&= \widetilde{\rho}(\eta a)\otimes  D\psi - \widetilde{\rho}(\eta)\otimes Da\psi  \nonumber \\
&= \widetilde{\rho}(\eta)\rho(a)\otimes D\psi - \widetilde{\rho}(\eta)\otimes Da\psi \nonumber \\
&= \widetilde{\rho}(\eta)\otimes \rho(a)D\psi -\widetilde{\rho}(\eta)\otimes Da\psi \nonumber  \\
& =-\widetilde{\rho}(\eta)\otimes [D, a]_\rho \psi 
\label{eq:86last} 
\end{align}
has no reason to vanish. The r.h.s. of \eqref{eq:86last} is 
--- up to a twist --- the action on $\HH_R$ of the derivation \eqref{eq:91}. So to turn \eqref{eq:78bis}
into a well defined operator on $\HH_R$, one should proceed as in the non
twisted case and add the action of a
connection.


\begin{prop}
\label{prop.rightfluct}
Let $\nabla$ be an $\Omega^1_D(\A, \rho)$-valued connection on $\E_R$. Then the operator
  \begin{equation}
\widetilde D_R:= (\widetilde{\rho}\otimes \I)\circ(D_R + \nabla)
\label{eq:90}
\end{equation}
is well defined on $\HH_R$, with $\nabla$  the
map on $\E_R\otimes \HH$ induced by the connection, as in \eqref{eq:115}.
\end{prop}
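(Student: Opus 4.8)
The plan is to show that the combination $\widetilde D_R = (\widetilde{\rho}\otimes\I)\circ(D_R+\nabla)$ descends from $\E_R\otimes_\C\HH$ to the quotient $\E_R\otimes_\A\HH$, i.e.\ that $\widetilde D_R(\eta a\otimes\psi) = \widetilde D_R(\eta\otimes a\psi)$ for all $\eta\in\E_R$, $a\in\A$, $\psi\in\HH$. Since the Dirac part already contributes, by \eqref{eq:86last}, the failure $-\widetilde{\rho}(\eta)\otimes[D,a]_\rho\psi$, it suffices to check that the connection part $(\widetilde{\rho}\otimes\I)\circ\nabla$ produces exactly the opposite failure, namely $+\widetilde{\rho}(\eta)\otimes[D,a]_\rho\psi$, so that the two cancel.

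First I would invoke Proposition~\ref{Prop:lebnizright}: its hypothesis \eqref{eq:129}, $(\omega\cdot a)\psi=\omega(a\psi)$, holds for $\Omega=\Omega^1_D(\A,\rho)$ acting on $\HH$ by the very definition of the right action in \eqref{eq:15} — this is precisely equation \eqref{eq:1190}. Hence the $\nabla$ of \eqref{eq:115} obeys the Leibniz rule \eqref{eq:122}:
\begin{equation*}
\nabla(\eta a)\psi - \nabla(\eta)a\psi = \eta\otimes \delta_\rho(a)\psi = \eta\otimes [D,a]_\rho\psi .
\end{equation*}
Next I would compute the effect of post-composing with $\widetilde{\rho}\otimes\I$. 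Writing $\nabla(\eta)=\eta_{(0)}\otimes\eta_{(1)}$ in the Sweedler notation \eqref{sn}, we have $(\widetilde{\rho}\otimes\I)(\nabla(\eta a)\psi) - (\widetilde{\rho}\otimes\I)(\nabla(\eta) a\psi)$; applying $\widetilde{\rho}\otimes\I$ to the Leibniz identity above, and using that $\widetilde{\rho}$ is linear so it passes through the identity $\eta\otimes[D,a]_\rho\psi\mapsto\widetilde\rho(\eta)\otimes[D,a]_\rho\psi$, gives
\begin{equation*}
(\widetilde{\rho}\otimes\I)\bigl(\nabla(\eta a)\psi\bigr) - (\widetilde{\rho}\otimes\I)\bigl(\nabla(\eta)a\psi\bigr) = \widetilde{\rho}(\eta)\otimes[D,a]_\rho\psi .
\end{equation*}

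It then remains to assemble the pieces. On $\E_R\otimes_\C\HH$ one has $\widetilde D_R(\eta\otimes\psi)=\widetilde{\rho}(\eta)\otimes D\psi + (\widetilde{\rho}\otimes\I)(\nabla(\eta)\psi)$. Evaluating the difference $\widetilde D_R(\eta a\otimes\psi)-\widetilde D_R(\eta\otimes a\psi)$, the Dirac contribution is $-\widetilde{\rho}(\eta)\otimes[D,a]_\rho\psi$ by \eqref{eq:86last}, while — after also noting that $(\widetilde\rho\otimes\I)(\nabla(\eta)a\psi)=(\widetilde\rho\otimes\I)(\nabla(\eta)(a\psi))$ since the right $\A$-action on $\Omega^1_D(\A,\rho)$ acting on $\HH$ is just $\omega\cdot a$ applied to $\psi$, equal to $\omega(a\psi)$ by \eqref{eq:1190} — the connection contribution is $+\widetilde{\rho}(\eta)\otimes[D,a]_\rho\psi$ by the displayed identity. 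The two cancel, so $\widetilde D_R$ is well defined on $\HH_R=\E_R\otimes_\A\HH$. I expect the only subtle point to be bookkeeping of which tensor product ($\otimes_\C$ versus $\otimes_\A$) each expression lives on and confirming that the natural map \eqref{eoact} is used consistently; the algebraic cancellation itself is immediate once Proposition~\ref{Prop:lebnizright} is in hand, so there is no real obstacle beyond this care with the module structures.
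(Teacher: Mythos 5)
Your proof is correct and follows exactly the paper's own argument: verify the compatibility condition \eqref{eq:129} via the module law \eqref{eq:15} (i.e.\ \eqref{eq:1190}), apply Proposition~\ref{Prop:lebnizright} to get the Leibniz rule for $\nabla$ on $\E_R\otimes_\C\HH$, post-compose with $\widetilde{\rho}\otimes\I$, and cancel against the failure \eqref{eq:86last} of the Dirac part. No gaps; the extra care you take with $\otimes_\C$ versus $\otimes_\A$ is exactly the bookkeeping the paper leaves implicit.
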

\begin{proof}
The module law \eqref{eq:15} guarantees
that
  $(\omega_\rho\cdot a)\psi = \omega_\rho(a\psi)$,
so that by Proposition \ref{Prop:lebnizright} the map $\nabla$
satisfies the Leibniz rule
\begin{equation}
\nabla(\eta a)\psi - \nabla(\eta)a\psi =
\nabla(\eta a\otimes\psi) - \nabla(\eta\otimes a\psi) =\eta \otimes \delta_\rho(a)\psi. 
\end{equation}
Therefore
\begin{equation}
  \label{eq:95bis}
 ( (\widetilde{\rho}\otimes\I)\circ \nabla)(\eta a\otimes \psi) - ( (\widetilde{\rho}\otimes\I)\circ \nabla)(\eta\otimes
 a\psi) = \widetilde{\rho}(\eta)\otimes \delta_\rho(a)\psi.
\end{equation}
Putting this together with \eqref{eq:86last}, one obtains 
\begin{equation}
\widetilde D_R(\eta a\otimes\psi) - \widetilde D_R(\eta\otimes a\psi)= 0 , \qquad 
\forall a\in\A,\eta\in\E_R,\psi\in\HH.
\label{eq:22}
\end{equation}
Hence the result.
\end{proof}

\noindent 
The explicit form of $\widetilde D_R$, with the Sweedler-like
  notation of \eqref{eq:115}, is 
\begin{equation}
\widetilde D_R(\eta\otimes\psi):= \widetilde{\rho}(\eta) \otimes D\psi +
\widetilde{\rho}(\eta_{(0)}) \ot (\eta_{(1)} \psi), \qquad \qquad \forall \eta\in\E_R,\psi\in\HH.
\label{eq:bis150}
\end{equation}

For the case of a self-Morita equivalence, that is ${\B}= \E_R = \A$,
this operator reduces to a bounded perturbation of $D$ by elements in $\Omega^1_D(\A, \rho)$.
\begin{cor}
\label{corro:right}
  In case $\E_R$ is the algebra $\A$ itself, then $\widetilde D_R =
  D+ \omega_\rho$,  with $\omega_\rho\in \Omega^1_D(\A, \rho)$.
\end{cor}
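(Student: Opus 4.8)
The plan is to specialize the general formula \eqref{eq:bis150} for $\widetilde D_R$ to the case $\E_R = \A$, and read off that the correction term is a twisted $1$-form. By Example~\ref{def:end-rho-b}, when $\E_R = \A$ the projection is $p = \I$ and the lift $\widetilde{\rho}$ is just the automorphism $\rho$ itself. First I would observe that an $\Omega^1_D(\A,\rho)$-valued connection $\nabla$ on the free rank-one module $\A$ is determined entirely by its value on the unit: writing $\nabla(\I) = \I \otimes \omega_\rho$ for some $\omega_\rho \in \Omega^1_D(\A,\rho)$, the Leibniz rule \eqref{eq:14} forces
\begin{equation}
\nabla(a) = \nabla(\I \cdot a) = \nabla(\I)\cdot a + \I \otimes \delta_\rho(a) = \I \otimes (\omega_\rho \cdot a + \delta_\rho(a)),
\end{equation}
so in the Sweedler notation \eqref{sn} one has $a_{(0)} = \I$ and $a_{(1)} = \omega_\rho \cdot a + \delta_\rho(a) = \omega_\rho a + Da - \rho(a)D$ as an operator on $\HH$ (using \eqref{eq:15} and \eqref{eq:91}).

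Next I would plug this into \eqref{eq:bis150}. Under the canonical identification $\A \otimes_\A \HH \cong \HH$ given by $a \otimes \psi \mapsto a\psi$ (so in particular $\I \otimes \psi \mapsto \psi$), the operator $\widetilde D_R$ acts on $\psi \in \HH$ as
\begin{equation}
\widetilde D_R \psi = \widetilde D_R(\I \otimes \psi) = \rho(\I)\otimes D\psi + \rho(\I_{(0)}) \otimes (\I_{(1)}\psi) = D\psi + \omega_\rho \psi,
\end{equation}
since $\rho(\I) = \I$, $\I_{(0)} = \I$, and $\I_{(1)} = \omega_\rho$ (the value of the connection on the unit, with $\delta_\rho(\I) = 0$). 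Hence $\widetilde D_R = D + \omega_\rho$ with $\omega_\rho \in \Omega^1_D(\A,\rho)$, as claimed. One should double-check that $\delta_\rho(\I) = 0$, which is immediate from \eqref{eq:96} applied with $a = b = \I$, and that the identification $\A \otimes_\A \HH \cong \HH$ is the one implicitly used in the preceding subsections (it is the standard one for the free module).

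I do not anticipate a serious obstacle here: this is essentially an unwinding of definitions, the twisted analogue of the well-known non-twisted statement that a connection on $\A$ as a right module over itself is the same data as a $1$-form. The only point requiring a modicum of care is bookkeeping — keeping straight which occurrences of $\omega_\rho$ are abstract elements of the bimodule $\Omega^1_D(\A,\rho)$ and which are their representations as bounded operators on $\HH$, and checking that the twisted product \eqref{eq:15} is exactly what makes $\I_{(1)}$ reduce to $\omega_\rho$ rather than something twisted. Since $\I$ is central and fixed by $\rho$, the twist \eqref{eq:15} in the left action plays no role when evaluating the connection at the unit, so the reduction is clean.
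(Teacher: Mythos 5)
Your proof is correct and follows essentially the same route as the paper's: both identify the connection on the free module $\A$ with its Grassmann part plus a twisted $1$-form $\omega_\rho$ (your evaluation at the unit is exactly the paper's decomposition \eqref{eq:150}), and then read off $D+\omega_\rho$ under the identification $\A\otimes_\A\HH\simeq\HH$. The only cosmetic difference is that you evaluate $\widetilde D_R$ at $\I\otimes\psi$ and invoke well-definedness over $\otimes_\A$ from Proposition~\ref{prop.rightfluct}, whereas the paper computes at a general $a\otimes\psi$; both are fine.
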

\begin{proof}
Clearly now $\widetilde{\rho}=\rho$. With $\delta_\rho( \, \cdot \, ) := [D, \, \cdot \, ]_\rho$, as for the non-twisted case recalled in \S \ref{subsec:Moritarightnontwist}, any
  connection $ \nabla$ on $\E_R=\A$ decomposes as 
  \begin{equation}
 \nabla=
   \nabla_0 + {\boldsymbol{\omega}_\rho} \quad \text{where} \quad \left\{
    \begin{array}{cl}
      \nabla_0(a) = \I\otimes \delta_\rho(a) &\text{ is the Grassmann
      connection},\\[6pt]
\boldsymbol{\omega}_\rho(a) = \I\otimes \omega_\rho\,
 a &\text{ with }
     \omega_\rho\in\Omega^1_D(\A, \rho).
    \end{array}\right.
\label{eq:150}
  \end{equation} Hence
  \begin{align}
    \widetilde D_R(a\otimes\psi) &:= \left(\rho\otimes \I\right) \big(a\otimes D\psi +
                               \I\otimes \delta_\rho(a)\psi+ \I \otimes \omega_\rho a\psi \big), \label{eq:47} \\
                             &\:= \rho(a)\otimes D\psi +
                               \I\otimes \delta_\rho(a)\psi+ \I
                               \otimes \omega_\rho a\psi \nonumber \\
\nonumber
                               & \:= \I\otimes(D+ \omega_\rho)a\psi.
      \end{align}
  Identifying $a\otimes\psi=\I\otimes a\psi$ with $a\psi$ and
  $\I\otimes(D+ \omega_\rho)a\psi$ with $(D+ \omega_\rho)a\psi $, one gets that
  $\widetilde D_R$ acts on $\HH \simeq \A \ot_\A \HH$ as $D+\omega_\rho$.
\end{proof}
The operator $D+\omega_\rho$ has a compact
resolvent, being a bounded perturbation of an operator with compact resolvent; 
and  $[D + \omega_\rho, a]_\rho = [D, a]_\rho + [\omega_\rho,
a]_\rho$ is bounded for any $a\in \A$, since $\omega_\rho$ is bounded. Furthermore, 
any grading $\Gamma$ of $(\A, \HH, D)$, will  
anticommutes with $\omega_\rho$, hence with $D+\omega_\rho$. Thus, 
as soon as $\omega_\rho$ is self-adjoint one gets a twisted spectral triple
\begin{equation}
  \label{eq:88}
  (\A, \HH, D+\omega_\rho; \rho) .
\end{equation}
However, and as it happens for the non-twisted case, a priori a real structure $J$ of $(\A, \HH,
D; \rho)$ needs not be a real structure for \eqref{eq:88}. Indeed,
$J(D+\omega_\rho)=
\epsilon' (D + \omega_\rho) J$
if and only if 
$\omega_\rho =J \omega_\rho J^{-1}$
which has no reason to be true due to the following lemma.
\begin{Lemma}
  \label{Lemma:twistopposite}
Let $(\A, \HH, D;\rho)$ together with $J$ be a real twisted spectral triple. With 
\begin{equation}
\omega_\rho=\sum\nolimits_j a_j\,[D, b_j]_{\rho}\,\in\Omega^1_D(\A,
\rho), \label{eq:31}
\end{equation}
one has 
\begin{equation}
 J\omega_\rho J^{-1} = \epsilon'\,\sum\nolimits_j (a_j^*)^\circ[D, (b_j^*)^\circ]_{\rho^\circ}.
\end{equation}
\begin{proof}
Without loss of generality, we may take 
$\omega_\rho=a[D,b]_\rho$. Then
  \begin{align}
  \nonumber
  J \omega_\rho  J^{-1}= J a [D,b]_\rho J^{-1} 
                                                 & = J a J^{-1}
                                                 J[D,b]_\rho J^{-1} =
                                                 {(a^*)}^\circ
                                                 J[D,b]_\rho J^{-1} \nonumber\\ 
                                                 & =
                                                 {(a^*)}^\circ(
                                                 JDbJ^{-1} - J\rho(b)D
                                                 J^{-1})  \nonumber\\
& = \epsilon' (a^*)^\circ (DJb J^{-1} - J\rho(b)J^{-1} D) \nonumber\\
& = \epsilon'  (a^*)^\circ (DJb J^{-1} -\rho^\circ( JbJ^{-1}) D \nonumber\\
& =  \epsilon' (a^*)^\circ [D,(b^*)^\circ]_{\rho^\circ},
    \label{eq:49}
   \end{align}
where we used \eqref{eq:5bis} in the fourth line.
\end{proof}
\end{Lemma}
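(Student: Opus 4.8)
The plan is to reduce to a single summand and then push $J$ through the twisted commutator term by term. By $\R$-linearity of conjugation by $J$ and of the sum, it suffices to establish the identity for $\omega_\rho = a[D,b]_\rho$ with $a,b\in\A$; the general case then follows by summing over $j$.

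First I would insert $\I = J^{-1}J$ to split $J\omega_\rho J^{-1} = (JaJ^{-1})\,(J[D,b]_\rho J^{-1})$, and recognize the first factor as $(a^*)^\circ$ directly from the definition $a^\circ := Ja^*J^{-1}$ in \eqref{eq:24} (applied to $a^*$ in place of $a$, so that $JaJ^{-1} = J(a^*)^*J^{-1} = (a^*)^\circ$). Next I would expand the twisted commutator, $J[D,b]_\rho J^{-1} = JDbJ^{-1} - J\rho(b)DJ^{-1}$, and in each of the two terms insert $\I = J^{-1}J$ between $D$ and the algebra element so as to isolate a factor $JDJ^{-1}$. The real-structure relation $JD = \epsilon' DJ$ from \eqref{eq:4} converts each such factor into $\epsilon' D$, producing an overall sign $\epsilon'$ and leaving $\epsilon'\big(D\,JbJ^{-1} - J\rho(b)J^{-1}\,D\big)$.

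The last step — the only one that uses more than bookkeeping — is to identify $J\rho(b)J^{-1}$ with the action of the opposite automorphism. Here I would invoke \eqref{eq:5bis} (equivalently the combination of the definition \eqref{eq:196} of $\rho^\circ$ with the twist condition \eqref{eq:170}), which yields $\rho^\circ(JcJ^{-1}) = J\rho(c)J^{-1}$ for every $c\in\A$; in particular $J\rho(b)J^{-1} = \rho^\circ\big((b^*)^\circ\big)$ since $(b^*)^\circ = JbJ^{-1}$. Substituting back, the bracket becomes $D\,(b^*)^\circ - \rho^\circ\big((b^*)^\circ\big)\,D = [D,(b^*)^\circ]_{\rho^\circ}$, and collecting the factor $(a^*)^\circ$ from the first step gives $J\omega_\rho J^{-1} = \epsilon'\,(a^*)^\circ[D,(b^*)^\circ]_{\rho^\circ}$; summing over $j$ completes the argument.

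The main point to watch is that $\rho$ is not assumed to be a $*$-automorphism but only to satisfy \eqref{eq:170}, so that conjugation by $J$ is governed by $\rho^\circ$ rather than by $\rho$; getting this substitution exactly right — i.e.\ checking that a twisted commutator with $\rho$ conjugates to a twisted commutator with $\rho^\circ$ — is the crux. Everything else is repeated insertion of $\I = J^{-1}J$ together with the elementary relations \eqref{eq:4} and \eqref{eq:24}.
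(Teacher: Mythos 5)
Your proposal is correct and follows essentially the same route as the paper's proof: reduce to a single summand, pull out $JaJ^{-1}=(a^*)^\circ$, expand the twisted commutator, use $JD=\epsilon'DJ$ on each term, and identify $J\rho(b)J^{-1}=\rho^\circ\big((b^*)^\circ\big)$ via \eqref{eq:5bis}. The only cosmetic difference is that you make the insertion of $\I=J^{-1}J$ and the appeal to \eqref{eq:5bis} fully explicit, which the paper compresses into one line.
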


To implement the self-Morita equivalence of $\A$ in a way which is compatible
with the real structure, one proceeds as in the non-twisted
case, and fluctuates the triple \eqref{eq:88} using also a left module structure thus considering altogether 
an $\A$-bimodule $\E$. 

\subsection{Morita equivalence by left module}
\label{subsec:leftmod}

Let $(\A, \HH, D;\rho), J$ be a real twisted spectral triple. Given a
left $\A$-module $\E_L$,
the right $\A$-module
structure \eqref{eq:35} of $\HH$ allows one to define the Hilbert space
$\HH_L = \HH\otimes_\A \E_L$ 
with Hilbert product recalled in \eqref{eq:176}. As an action of $D$ on $\HH$,
we consider the twist of
the action \eqref{eq:92} by the endomorphism $\widetilde{\rho}^{-1}$, following Assumption \ref{def:end-rho}: 
\begin{align}
(\I\otimes {\widetilde{\rho}}^{-1})\circ D_L: \quad&\HH_L\to \HH_L , \qquad  \psi\otimes\eta \to D\psi\otimes \widetilde{\rho}^{-1}(\eta).
\label{eq:123}
\end{align}
As before, this is not compatible with the tensor product since
\begin{align}
\big( (\I\otimes \widetilde{\rho}^{-1})\!\circ\! D_L \big) (\psi \otimes a\eta)  - (\I\otimes \widetilde{\rho}^{-1})\circ D_L(\psi a\otimes \eta)
&  = D\psi\!\otimes\widetilde{\rho}^{-1}(a\eta) - D(\psi a)\!\otimes \widetilde{\rho}^{-1}(\eta)  \nonumber \\
& = (D\psi) \rho^{-1}(a)\otimes \widetilde{\rho}^{-1}(\eta) - Da^\circ \psi\otimes \!\widetilde{\rho}^{-1}(\eta)   \nonumber \\
&= (\rho^{-1}(a))^\circ D\psi\otimes \widetilde{\rho}^{-1}(\eta) -Da^\circ\psi \otimes\widetilde{\rho}^{-1}(\eta)   \nonumber \\
&= - ([D,a^\circ]_ {\rho^\circ} \psi) \otimes \widetilde{\rho}^{-1}(\eta),
\label{eq:82bis}
\end{align}  
where in the last line we used \eqref{eq:196}.
Again, equation \eqref{eq:82bis} has no reason to vanish. In order to correct it via a connection, one needs to check that
$[D,a^\circ]_{\rho^\circ}$ is actually a derivation. 
%
\begin{Lemma}
The twisted commutator
\begin{equation}
  \delta_\rho^\circ(a): =[D, a^\circ]_{\rho^\circ}\label{eq:157biss}
  \end{equation}
is a derivation of $\A$ in the $\A$-bimodule
\begin{equation}
  \label{eq:89}
  \Omega^1_D(\A^\circ, \rho^\circ):=\Big\{ \sum\nolimits_j a_j^\circ [D,
    b_j^\circ]_{\rho^\circ}, \quad a_j^\circ, b_j^\circ\in \A^\circ\Big\},
\end{equation}
with product law
\begin{equation}
  \label{eq:93}
  a\cdot \omega_\rho^\circ\cdot b := \rho^\circ(b^\circ)\,
  \omega_\rho^\circ \,a^\circ \qquad \forall a,b\in\A,\;
  \omega_\rho^\circ\in\Omega^1_D(\A^\circ, \rho).
\end{equation}
\end{Lemma}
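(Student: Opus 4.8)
The plan is to mimic, in the ``opposite'' setting, the computation that showed $\delta_\rho$ is a $\rho$-twisted derivation of $\A$ into $\Omega^1_D(\A,\rho)$ with the product law \eqref{eq:15}. So first I would establish that $\delta_\rho^\circ$ satisfies a Leibniz rule of the form
\beq
\delta_\rho^\circ(ab) = a\cdot \delta_\rho^\circ(b) + \delta_\rho^\circ(a)\cdot b
\eeq
with respect to the bimodule structure \eqref{eq:93}. The key computational ingredient is the behaviour of the opposite map under products: from \eqref{eq:24} one has $(ab)^\circ = J(ab)^*J^{-1} = Jb^*a^*J^{-1} = b^\circ a^\circ$, i.e. $\circ$ is an anti-homomorphism, and likewise $\rho^\circ$ is an anti-homomorphism on $\A^\circ$ in the sense that $\rho^\circ((ab)^\circ) = \rho^\circ(b^\circ a^\circ) = \rho^\circ(b^\circ)\rho^\circ(a^\circ)$. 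Then I would just expand
\beq
\delta_\rho^\circ(ab) = [D,(ab)^\circ]_{\rho^\circ} = D\,b^\circ a^\circ - \rho^\circ(b^\circ a^\circ)\,D = D b^\circ a^\circ - \rho^\circ(b^\circ)\rho^\circ(a^\circ) D,
\eeq
and insert $\mp\rho^\circ(b^\circ) D a^\circ$ to split this as $\rho^\circ(b^\circ)\,[D,a^\circ]_{\rho^\circ} + [D,b^\circ]_{\rho^\circ}\,a^\circ$. Reading off the definition \eqref{eq:93}, the first term is exactly $a\cdot\delta_\rho^\circ(b)$ (since $a\cdot\omega_\rho^\circ\cdot\I = \rho^\circ(a^\circ\!\cdot\!\text{?})$ — more precisely $a\cdot\omega^\circ_\rho = \rho^\circ(a^\circ)\,\omega^\circ_\rho$ when $b=\I$) and the second is $\delta_\rho^\circ(a)\cdot b = \omega^\circ_\rho\, b^\circ$ when $a=\I$. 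Care must be taken to match the two-sided product \eqref{eq:93}: I should verify separately that $a\cdot(\,\cdot\,)$ means left multiplication by $\rho^\circ(a^\circ)$ and $(\,\cdot\,)\cdot b$ means right multiplication by $b^\circ$, consistently with reading \eqref{eq:93} at $b=\I$ and $a=\I$ respectively.

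Second, I would check that $\Omega^1_D(\A^\circ,\rho^\circ)$ as defined in \eqref{eq:89} is genuinely closed under the operations in \eqref{eq:93}, i.e. that it really is an $\A$-bimodule: closure under right multiplication by $b^\circ$ is immediate from the defining expression $\sum_j a_j^\circ[D,b_j^\circ]_{\rho^\circ}$ together with the Leibniz rule just proved (which lets one absorb $[D,b_j^\circ]_{\rho^\circ}b^\circ$ back into the span), and closure under left multiplication by $\rho^\circ(a^\circ)$ is similar. I would also note that the two module actions commute, so \eqref{eq:93} is well defined as a bimodule structure, and that the elements act as bounded operators on $\HH$ since $\A^\circ = J\A J^{-1}$ is bounded and $[D,a^\circ]_{\rho^\circ}$ is bounded — the latter because, by Lemma~\ref{Lemma:twistopposite}, $[D,(b^*)^\circ]_{\rho^\circ} = \epsilon' J[D,b]_\rho J^{-1}$ (up to the left factor), which is a conjugate of the bounded twisted commutator $[D,b]_\rho$.

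I do not expect a serious obstacle here; the statement is essentially the ``opposite-algebra'' mirror of the already-recalled fact about $\delta_\rho$ and \eqref{eq:96}. The one point demanding genuine attention — and the place where sign/ordering errors would creep in — is the interplay between the order-reversal of $\circ$ and the twist $\rho^\circ$ defined via $\rho^\circ(a^\circ) = (\rho^{-1}(a))^\circ$ in \eqref{eq:196}: one must confirm that $\rho^\circ$ is multiplicative \emph{on $\A^\circ$} (equivalently, anti-multiplicative when transported to $\A$), which follows since $\rho^{-1}$ is an automorphism of $\A$ and $\circ$ is an anti-isomorphism, so their composite respects the opposite product. Getting the left/right placement in \eqref{eq:93} to line up with the Leibniz terms — in particular that it is $\rho^\circ(b^\circ)$ (not $\rho^\circ(a^\circ)$) multiplying on the left in the expansion of $\delta^\circ_\rho(ab)$ — is the only thing I would double-check by writing out the two boundary cases $a=\I$ and $b=\I$ explicitly against \eqref{eq:93}.
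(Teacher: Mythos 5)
Your overall route is the same as the paper's: expand $[D,(ab)^\circ]_{\rho^\circ}=[D,b^\circ a^\circ]_{\rho^\circ}$ using that $\circ$ reverses products and that $\rho^\circ$ is multiplicative on $\A^\circ$, split it as $\rho^\circ(b^\circ)[D,a^\circ]_{\rho^\circ}+[D,b^\circ]_{\rho^\circ}\,a^\circ$, and then verify closure of \eqref{eq:89} under the two actions. That expansion is correct. However, at the one point you yourself single out as delicate --- matching the two terms against the product law \eqref{eq:93} --- your reading is backwards. Setting $b=\I$ in \eqref{eq:93} gives $a\cdot\omega^\circ_\rho=\rho^\circ(\I)\,\omega^\circ_\rho\,a^\circ=\omega^\circ_\rho\,a^\circ$, i.e.\ the \emph{left} action of $a\in\A$ is \emph{right} multiplication by $a^\circ$ (not left multiplication by $\rho^\circ(a^\circ)$ as you state); and setting $a=\I$ gives $\omega^\circ_\rho\cdot b=\rho^\circ(b^\circ)\,\omega^\circ_\rho$, i.e.\ the \emph{right} action of $b$ is \emph{left} multiplication by $\rho^\circ(b^\circ)$ (not right multiplication by $b^\circ$). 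Consequently the correct identification is
\begin{equation*}
\rho^\circ(b^\circ)[D,a^\circ]_{\rho^\circ}=\delta^\circ_\rho(a)\cdot b,
\qquad
[D,b^\circ]_{\rho^\circ}\,a^\circ=a\cdot\delta^\circ_\rho(b),
\end{equation*}
which is the opposite of what you wrote (your parenthetical even assigns the first term to $a\cdot\delta^\circ_\rho(b)$ while simultaneously claiming $a\cdot{}$ acts by $\rho^\circ(a^\circ)$ on the left, which would match neither term). This crossing of left and right is not a cosmetic convention: it is exactly what makes $\delta^\circ_\rho$ a derivation of $\A$ (rather than of $\A^\circ$) into this bimodule, and it is what later produces the compatibility \eqref{eq:130}, $(a\cdot\omega^\circ)\psi=\omega^\circ(\psi a)$, used in Proposition~\ref{prop.leftfluct}. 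Since you explicitly proposed to settle the point by writing out the boundary cases $a=\I$ and $b=\I$, carrying out that check would repair the argument; but as written the identification, and hence the claimed Leibniz rule, is stated with the two module actions interchanged. The remaining items (closure of \eqref{eq:89} under both actions, commutation of the two actions, boundedness via conjugation by $J$) are outlined correctly and match the paper, with the boundedness remark being a small addition the paper leaves implicit.
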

\begin{proof}
By explicit computation of the twisted commutator, one has
  \begin{align}
    \label{eq:100}
   \delta_\rho^\circ(ab) = [D, b^\circ a^\circ]_{\rho^\circ} = \rho^\circ(b^\circ)[D,
    a^\circ]_{\rho^\circ} + [D, b^\circ]_{\rho^\circ}a^\circ
=
    \delta^\circ_\rho(a^\circ)\cdot b + a\cdot\delta_\rho^\circ(b).
  \end{align}
To check that \eqref{eq:89} is a $\A$- bimodule,
first notice that by construction it  is stable under the left
  multiplication by $\A^\circ$, hence under the right
  multiplication by $\A$ defined by \eqref{eq:93}. In addition,
\begin{align}
  \label{eq:86}
\omega^\circ\cdot (ab) &= \rho^\circ((ab)^\circ)\,\omega^\circ =
                         \rho^\circ(b^\circ)\rho^\circ(a^\circ)\omega^\circ =
                         (\omega^\circ\cdot a)\cdot b,
\end{align}
 showing that  $\Omega^1_D(\A^\circ, \rho^\circ)$ is a right
 $\A$-module. Stability for the left multiplication by $\A$ follows from \eqref{eq:100}:
\begin{equation}
  \label{eq:3}
  a\cdot [D,b^\circ]_{\rho^\circ} =  [D,b^\circ]_{\rho^\circ} a^\circ
  = [D,(ab)^\circ]_{\rho^\circ} - [D, a]_\rho^{\circ}\cdot b.
\end{equation}
The left $\A$-module structure is obtained checking that 
  \begin{equation}
    \label{eq:56}
      (ab)\cdot \omega^\circ = w^\circ (ab)^\circ = w^\circ b^\circ a^\circ=
  (b\cdot\omega^\circ )a^\circ=a\cdot(b\cdot\omega^\circ)).
  \end{equation}
Finally, the bimodule structure follows from 
\begin{equation}
  \label{eq:68}
  (a\cdot\omega^\circ)\cdot b= (\omega^\circ_\rho a^\circ)\cdot b =
  \rho^\circ(b^\circ) \,\omega^\circ_\rho \,a^\circ =
  (\omega^\circ_\rho\cdot b)a^\circ = a\cdot(\omega_\rho^\circ\cdot b).
\end{equation}
This finishes the proof.
\end{proof}

Therefore,  the r.h.s. of \eqref{eq:82bis} is --- up to a twist ---
the action on $\HH_L$ of the derivation $\delta_\rho^\circ$. And once again, in
order to define a linear operator on $\HH_L$ using $D$, one needs to correct the action
\eqref{eq:123} with a a connection, this time with value in  $\Omega^1_D(\A^\circ, \rho^\circ)$.
\begin{prop}
\label{prop.leftfluct}
Let $\nabla^\circ$ be an $\Omega^1_D(\A^\circ, \rho^\circ)$-valued connection on
the module $\E_L$. Then the following operator is well defined on $\HH_L$, 
  \begin{equation}
\widetilde D_L := (\I\otimes \widetilde{\rho}^{-1})\circ(D_L + \nabla^\circ) ,
\label{eq:90l}
\end{equation}
where $\nabla^\circ$ denotes the map
induced on $\HH \otimes_\C \E_L$ by the connection, as in \eqref{eq:115b}.
\end{prop}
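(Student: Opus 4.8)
The plan is to mirror the proof of Proposition~\ref{prop.rightfluct} with the roles of left and right interchanged, using Proposition~\ref{prop:leibnizleft} in place of Proposition~\ref{Prop:lebnizright}. Concretely, I want to show that the operator $\widetilde D_L$ of \eqref{eq:90l} descends from $\HH\otimes_\C\E_L$ to the balanced tensor product $\HH_L = \HH\otimes_\A\E_L$, i.e.\ that for all $\psi\in\HH$, $\eta\in\E_L$ and $a\in\A$ one has
\begin{equation}
\widetilde D_L(\psi a\otimes\eta) - \widetilde D_L(\psi\otimes a\eta) = 0.
\end{equation}

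The first step is to verify the compatibility hypothesis \eqref{eq:130} of Proposition~\ref{prop:leibnizleft} for the left action of $\Omega^1_D(\A^\circ,\rho^\circ)$ and the right action of $\A$ on $\HH$. Here the right action of $\A$ on $\HH$ is the one from \eqref{eq:35}, $\psi a = a^\circ\psi$, and the left module law on $\Omega^1_D(\A^\circ,\rho^\circ)$ is \eqref{eq:93}, which gives $a\cdot\omega^\circ_\rho = \omega^\circ_\rho\, a^\circ$. Thus $(a\cdot\omega^\circ_\rho)\psi = \omega^\circ_\rho\, a^\circ\psi = \omega^\circ_\rho(\psi a)$, which is exactly \eqref{eq:130}. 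Hence by Proposition~\ref{prop:leibnizleft} the induced map $\nabla^\circ$ on $\HH\otimes_\C\E_L$ satisfies the Leibniz rule
\begin{equation}
\psi\nabla^\circ(a\eta) - \psi a\,\nabla^\circ(\eta) = \delta^\circ_\rho(a)\psi\otimes\eta,
\end{equation}
where $\delta^\circ_\rho(a) = [D,a^\circ]_{\rho^\circ}$ is the derivation identified in the preceding Lemma.

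The second step is to apply $\I\otimes\widetilde\rho^{-1}$ to this Leibniz identity, exactly as \eqref{eq:95bis} is obtained from the Leibniz rule in the right-module case: since $\widetilde\rho^{-1}$ acts only on the module factor and the anomaly term $\delta^\circ_\rho(a)\psi\otimes\eta$ carries a bare $\eta$, one gets
\begin{equation}
\big((\I\otimes\widetilde\rho^{-1})\circ\nabla^\circ\big)(\psi\otimes a\eta) - \big((\I\otimes\widetilde\rho^{-1})\circ\nabla^\circ\big)(\psi a\otimes\eta) = -\,[D,a^\circ]_{\rho^\circ}\psi\otimes\widetilde\rho^{-1}(\eta),
\end{equation}
taking the sign into account. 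Adding this to the already-computed anomaly of $(\I\otimes\widetilde\rho^{-1})\circ D_L$ in \eqref{eq:82bis}, which is $+[D,a^\circ]_{\rho^\circ}\psi\otimes\widetilde\rho^{-1}(\eta)$, the two terms cancel, so $\widetilde D_L$ is well defined on $\HH_L$. The main (and only real) subtlety is bookkeeping of signs and of which factor each operator acts on — making sure the twist $\widetilde\rho^{-1}$, the opposite-algebra right action $a^\circ$, and the isomorphism \eqref{eq:196} relating $\rho$ and $\rho^\circ$ are all threaded consistently — but no new idea beyond the right-module case is needed. I would close by recording the explicit Sweedler form of $\widetilde D_L$, namely $\widetilde D_L(\psi\otimes\eta) = D\psi\otimes\widetilde\rho^{-1}(\eta) + (\eta_{(-1)}\psi)\otimes\widetilde\rho^{-1}(\eta_{(0)})$, in parallel with \eqref{eq:bis150}.
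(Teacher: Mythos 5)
Your proof is correct and follows the paper's argument exactly: verify the compatibility condition \eqref{eq:130} from the module law \eqref{eq:93}, invoke Proposition~\ref{prop:leibnizleft}, apply $\I\otimes\widetilde{\rho}^{-1}$, and cancel against \eqref{eq:82bis}. One harmless bookkeeping remark: with the ordering $(\psi\otimes a\eta)-(\psi a\otimes\eta)$ the twisted Leibniz anomaly is $+[D,a^\circ]_{\rho^\circ}\psi\otimes\widetilde{\rho}^{-1}(\eta)$ while the anomaly in \eqref{eq:82bis} is $-[D,a^\circ]_{\rho^\circ}\psi\otimes\widetilde{\rho}^{-1}(\eta)$ --- you have quoted both with the opposite sign, but since both are flipped the cancellation is unaffected.
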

\begin{proof}
By \eqref{eq:93}, the actions of $\Omega^1_D(\A^\circ, \rho)$ and $\A$ on $\HH_L$ are compatible as in
\eqref{eq:130}, that is, 
\begin{equation}
  \label{eq:101}
  (a\cdot \omega_\rho^\circ)\psi = \omega_\rho^\circ a^\circ \psi =
  \omega^\circ_\rho (\psi a).
\end{equation}
Hence  by
Proposition~\ref{prop:leibnizleft} the connection
$\nabla^\circ$ satisfies the Leibniz rule  
\begin{equation}
\nabla^\circ(\psi \otimes a\eta) - \nabla^\circ(\psi a \otimes \eta)= \delta^\circ_\rho(a)\psi\otimes\eta.
\end{equation}
Therefore
\begin{equation}
  \label{eq:95}
 \left((\I\otimes\widetilde{\rho}^{-1})\circ \nabla^\circ\right)(\psi\ot a\eta)
 -  \left((\I\otimes\widetilde{\rho}^{-1})\circ \widetilde\nabla\right)(\psi a\ot\eta) = \delta^\circ_\rho(a)\psi\ot \rho^{-1}(\eta).
\end{equation}
Together with \eqref{eq:82bis} this yields $\widetilde D_R(\psi\otimes
a\eta) - \widetilde D_R(\psi a\otimes \eta)= 0$, hence the result.
\end{proof}
\noindent
With the Sweedler-like notations of \eqref{eq:115b},  the explicit form of
$\widetilde D_L$ is
\begin{equation}
\widetilde D_L(\psi \otimes \eta):= D\psi \otimes \widetilde{\rho}^{-1}(\eta) +
(\eta_{(-1)}\psi) \ot \widetilde{\rho}^{-1}(\eta_{(0)}).
\label{eq:179}
\end{equation}

To get the more friendly $\widetilde D_L$ for a self-Morita
equivalence, one needs a relation between
$\Omega^1_D(\A, \rho)$ and $\Omega^1_D(\A^\circ, \rho^\circ)$ similar to
the one between  $\Omega^1_D(\A)$ and $\Omega^1_D(\A^\circ)$ given in Lemma \ref{actionoppositeform}.
\begin{Lemma}
\label{lem:actiontwist}
Any $\omega_\rho^\circ= \sum\nolimits_j a_j^\circ [D,
b_j^\circ]_{\rho^\circ}$ in $\Omega^1_D(\A^\circ, \rho^\circ)$ acts on
$\HH$ as 
\begin{equation}
  \label{eq:112}
  \omega^\circ_\rho  =\epsilon'\,  J\omega_\rho J^{-1}
\end{equation}
for
$  \omega_\rho = \sum\nolimits_j a_j^* [D,
b_j^*]_{\rho}\in \Omega^1_D(\A,\rho).$
\end{Lemma}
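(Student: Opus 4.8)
The plan is to mimic the proof of Lemma~\ref{Lemma:twistopposite}, which we may invoke directly since it has been proven above. By linearity of both sides of \eqref{eq:112} it suffices to treat a single-term twisted form $\omega_\rho^\circ = a^\circ [D, b^\circ]_{\rho^\circ}$ with $a, b \in \A$, and check that as an operator on $\HH$ it equals $\epsilon'\, J \omega_\rho J^{-1}$ for $\omega_\rho := a^* [D, b^*]_\rho$.

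The key step is to compute $\epsilon'\, J \omega_\rho J^{-1}$ using Lemma~\ref{Lemma:twistopposite}. Applying that lemma to the single-term form $\omega_\rho = a^* [D, b^*]_\rho$ (so that, in the notation there, $a_j \leadsto a^*$ and $b_j \leadsto b^*$), one gets
\begin{equation}
  J \omega_\rho J^{-1} = \epsilon'\, \big( (a^*)^* \big)^\circ \big[ D, \big( (b^*)^* \big)^\circ \big]_{\rho^\circ} = \epsilon'\, a^\circ [D, b^\circ]_{\rho^\circ} = \epsilon'\, \omega_\rho^\circ ,
\end{equation}
where we used the involutivity $(a^*)^* = a$ and then the definition \eqref{eq:89} of $\Omega^1_D(\A^\circ, \rho^\circ)$. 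Multiplying by $\epsilon'$ and using $(\epsilon')^2 = 1$ gives $\omega_\rho^\circ = \epsilon'\, J \omega_\rho J^{-1}$, which is exactly \eqref{eq:112}. Summing over $j$ recovers the general statement, with $\omega_\rho = \sum_j a_j^* [D, b_j^*]_\rho \in \Omega^1_D(\A, \rho)$ as claimed.

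There is essentially no obstacle here: the content is entirely contained in Lemma~\ref{Lemma:twistopposite} together with the bookkeeping of starred versus unstarred generators and the involutivity of the $*$-operation. The only point that requires a moment's care is the direction of the correspondence --- one must start from $\omega_\rho$ built out of the \emph{adjoints} $a_j^*, b_j^*$ so that conjugating by $J$ produces the unstarred generators $a_j^\circ = J a_j^* J^{-1}$ appearing in $\omega_\rho^\circ$; writing out the single-term case makes this transparent. For completeness one may also remark that, just as in the non-twisted Lemma~\ref{actionoppositeform}, this shows $\Omega^1_D(\A^\circ, \rho^\circ)$ coincides, as a space of bounded operators on $\HH$, with $J\, \Omega^1_D(\A, \rho)\, J^{-1}$ up to the sign $\epsilon'$, which is what will be needed to identify $\widetilde D_L$ in the self-Morita case.
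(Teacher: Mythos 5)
Your proof is correct. The paper proves Lemma~\ref{lem:actiontwist} by a direct computation (writing $a^\circ = Ja^*J^{-1}$, $\rho^\circ(b^\circ) = J\rho(b^*)J^{-1}$ and commuting $J$ past $D$ using $JD=\epsilon'DJ$), which is word-for-word the same computation already carried out in the proof of Lemma~\ref{Lemma:twistopposite}, just run in the opposite direction. You instead observe that the two lemmas are the same statement up to the substitution $a\mapsto a^*$, $b\mapsto b^*$, and deduce \eqref{eq:112} from Lemma~\ref{Lemma:twistopposite} using involutivity of $*$ and $(\epsilon')^2=1$; since that lemma is established earlier, this is legitimate and avoids duplicating the calculation. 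The only substantive care needed is the one you flag --- matching the starred generators of $\omega_\rho$ to the unstarred $a_j^\circ, b_j^\circ$ of $\omega_\rho^\circ$ --- and you handle it correctly.
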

\begin{proof}
 Without loss of generality, we may take $\omega^\circ_\rho =
 a^\circ[D, b^\circ]_{\rho^\circ}$. Using equation \eqref{eq:5bis} one gets
\begin{align}
  \label{eq:111}
  a^\circ[D, b^\circ]_{\rho^\circ} &= a^\circ Db^\circ - a^\circ\rho^\circ(b^\circ) D =
                               Ja^*J^{-1}DJb^*J^{-1}-J a^*J^{-1}J\rho(b^*)J^{-1}D, \nonumber \\
&= \epsilon '\left( Ja^*Db^*J^{-1} - Ja^*\rho(b^*) D J^{-1}\right)= J\omega_\rho J^{-1}
\end{align}
where $\omega_\rho:= a^*[D, b^*]_\rho\in\Omega^1_D(\A)$.
\end{proof}
\noindent 
In case of a self-Morita equivalence $\B =\E_L=\A$, then $\widetilde D_L$ is just a
bounded perturbation of $D$ by $\Omega^1_D(\A^\circ, \rho)$, similarly
to the right module case of Corollary \ref{corro:right}.
\begin{cor}\label{corro:left}
  In case $\E_L$ is the algebra itself, then
  \begin{equation}
\label{eq:80}\widetilde D_L =
  D + \epsilon' J \omega_\rho  J^{-1}\;\text{ with }\;\omega_\rho\in\Omega^1_D(\A, \rho).
\end{equation}
\end{cor}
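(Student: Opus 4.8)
The plan is to mimic the proof of Corollary~\ref{corro:right} word for word, with the roles of left and right interchanged and with $\rho$ replaced by $\rho^\circ$ throughout. First I would note that when $\E_L=\A$ the lifted automorphism $\widetilde\rho$ of Assumption~\ref{def:end-rho} (Example~\ref{def:end-rho-b} with $p=\I$) is just $\rho$ itself, so $\widetilde\rho^{-1}=\rho^{-1}$. Next, exactly as in the non-twisted case, any $\Omega^1_D(\A^\circ,\rho^\circ)$-valued connection on the free left module $\A$ decomposes as $\nabla^\circ=\nabla^\circ_0+\bo_\rho^\circ$, where $\nabla^\circ_0(a)=\delta^\circ_\rho(a)\otimes\I$ is the Grassmann connection attached to the derivation $\delta^\circ_\rho$ of \eqref{eq:157biss}, and $\bo^\circ_\rho(a)=(\omega^\circ_\rho\, a^\circ)\otimes\I$ for some $\omega^\circ_\rho\in\Omega^1_D(\A^\circ,\rho^\circ)$. (One should check this decomposition is consistent with the left-module product \eqref{eq:93}, which is where the twist by $\rho^\circ$ enters; this is the analogue of the check performed implicitly in \eqref{eq:150}.)

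Then I would substitute into the explicit formula \eqref{eq:179} for $\widetilde D_L$. Using the Sweedler notation of \eqref{eq:115b} for $\nabla^\circ$ on $\HH\otimes_\C\A$, one gets, for $\psi\otimes a$,
\begin{align*}
\widetilde D_L(\psi\otimes a) &= (\I\otimes\rho^{-1})\big( D\psi\otimes a + (\delta^\circ_\rho(a)\psi)\otimes\I + (\omega^\circ_\rho\,a^\circ\psi)\otimes\I\big)\\
&= D\psi\otimes\rho^{-1}(a) + (\delta^\circ_\rho(a)\psi)\otimes\I + (\omega^\circ_\rho\,a^\circ\psi)\otimes\I .
\end{align*}
Now, using the right action \eqref{eq:35} of $\A$ on $\HH$, $D\psi\otimes\rho^{-1}(a)=D\psi\,\rho^{-1}(a)\otimes\I=(\rho^{-1}(a))^\circ D\psi\otimes\I$, and by \eqref{eq:196} this is $\rho^\circ(a^\circ)D\psi\otimes\I$; adding the $\delta^\circ_\rho$ term gives $\rho^\circ(a^\circ)D\psi+\delta^\circ_\rho(a)\psi=Da^\circ\psi$, so altogether $\widetilde D_L(\psi\otimes a)=(D+\omega^\circ_\rho)a^\circ\psi\otimes\I$. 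Identifying $\psi\otimes a\simeq a^\circ\psi\in\HH$ under $\HH\simeq\HH\otimes_\A\A$, this says $\widetilde D_L$ acts on $\HH$ as $D+\omega^\circ_\rho$. Finally I would invoke Lemma~\ref{lem:actiontwist}, which rewrites $\omega^\circ_\rho=\epsilon'\,J\omega_\rho J^{-1}$ for the corresponding $\omega_\rho=\sum_j a_j^*[D,b_j^*]_\rho\in\Omega^1_D(\A,\rho)$, yielding $\widetilde D_L=D+\epsilon'\,J\omega_\rho J^{-1}$, which is \eqref{eq:80}.

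The only genuinely delicate point is the bookkeeping of opposite-algebra conventions: one must be careful that the product \eqref{eq:93} on $\Omega^1_D(\A^\circ,\rho^\circ)$ makes the Grassmann-plus-$\bo^\circ$ decomposition a bona fide splitting of left connections on $\A$, and that the identity $(\rho^{-1}(a))^\circ=\rho^\circ(a^\circ)$ from \eqref{eq:196} is applied with the correct variance so that the $D\psi\otimes\rho^{-1}(a)$ term recombines with $\delta^\circ_\rho(a)\psi$ to reproduce $Da^\circ\psi$. Everything else is a routine transcription of Corollary~\ref{corro:right}.
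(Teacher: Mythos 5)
Your proposal is correct and follows essentially the same route as the paper's own proof: the same Grassmann-plus-$\boldsymbol{\omega}^\circ_\rho$ decomposition of the left connection (the paper's \eqref{eq:157}), the same recombination of $D\psi\otimes\rho^{-1}(a)$ with the $\delta^\circ_\rho(a)$ term via \eqref{eq:196} to produce $Da^\circ\psi$, and the same final appeal to Lemma~\ref{lem:actiontwist}. The ``delicate points'' you flag are handled the same way (implicitly) in the paper, so there is nothing to add.
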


\begin{proof} Any $\Omega^1_D(\A^\circ, \rho)$-valued connection
  $\widetilde\nabla^\circ$ on $\E_L=\A$ decomposes
  as  
  \begin{equation}
\nabla^\circ = \nabla^\circ_0 + \boldsymbol{\omega}_\rho^\circ \quad
\text{ where} \quad
\left\{
  \begin{array}{ll}
\nabla^\circ_0(a) = \delta_{\rho}^\circ(a)\otimes \I&\text{ is the Grassmann connection},\\[4pt]
\boldsymbol{\omega}_\rho^\circ(a) = (\omega_\rho^\circ a^\circ) \otimes \I&\text {
where } \omega_\rho^\circ \in \Omega^1_D(\A^\circ, \rho).
  \end{array}\right.
\label{eq:157}
\end{equation} Hence
  \begin{align}
    \label{eq:1310}
    \widetilde D_L(\psi\otimes a)  &= D\psi \otimes \rho^{-1}(a) +\delta_\rho^\circ(a)\psi\otimes \I +
                                 \omega_\rho^\circ a^\circ \psi\otimes \I , \nonumber \\
                               &= (D\psi) \rho^{-1}(a) \otimes\I+ (Da^\circ - \rho^\circ(a^\circ) D)\psi \otimes \I +
                                 \omega_\rho^\circ a^\circ \psi\otimes \I, \nonumber \\                               
                                 &= (\rho^{-1}(a))^\circ D\psi \otimes\I + (Da^\circ - \rho^\circ(a^\circ) D)\psi \otimes \I +
                                 \omega_\rho^\circ a^\circ\psi\otimes \I  \nonumber \\
                                 & = Da^\circ\psi \otimes \I + \omega_\rho^\circ a^\circ \psi\otimes \I,
  \end{align}
where in the last line we used \eqref{eq:196}. By identifying $\HH \otimes_\A \A\simeq \HH$, that is $\psi\otimes a= a^\circ\psi\otimes\I$ with $\psi$ and
similarly for $(Da^\circ\psi) \otimes \I$ and $(\omega_\rho^\circ a^\circ \psi) \otimes \I$, one gets that $\widetilde D_L$ is the
operator $D+\omega_\rho^\circ$. The results follows from Lemma
\ref{lem:actiontwist}, which states that $\omega_\rho^\circ$ acts as
$\epsilon' J\omega_\rho J^{-1}$ for some $\omega_\rho\in\Omega^1_D(\A, \rho)$.
\end{proof}
For reasons similar to those of the right module case, and explained below
Corollary \ref{corro:right}, for a self-adjoint $\omega_\rho$ one has
that 
the triple $(\A, \HH, D+ \epsilon' J \omega_\rho J^{-1})$ is a (graded) twisted spectral triple, failing
to necessarily admit $J$ as a real structure, thus the need of a bimodule.  
\begin{rem}
\textup  
{In \eqref{eq:123}, we have used $\rho^{-1}$ rather than $\rho$, so 
that the failure of linearity is captured by
    $\delta_\rho^\circ$. Twisting by $\rho$, one would arrive at
    $\delta_{\rho^{-1}}^\circ$. Alternatively one may require that
    $\rho$ is a
    $*$-automorphism: equation \eqref{eq:170} then implies $\rho^{-1}=\rho$.
    }
\end{rem}
\subsection{Bimodule and the real structure}
\label{subsec:bimodandreal}
To make the real structure compatible with Morita equivalence of
twisted spectral triples, one combines the  two
constructions above in a way similar to the non-twisted case.  
Firstly fluctuate the real twisted spectral triple $(\A, \HH, D; \rho), J$ using the right module
$\E_R=\A$, then fluctuate the resulting triple \eqref{eq:88} via the left
module $\E_L=\A$. This yields the triple $ (\A, \HH, D' )$
where 
\begin{equation}
  \label{eq:33bis}
 D' =D + \omega_\rho^L + \epsilon' J\omega^R_\rho J^{-1}
\end{equation}
with $\omega_\rho^R$ and $\omega_\rho^L$  two elements of $\Omega^1_D(\A,\rho)$ 
that are a priori distinct. 
\begin{prop}
\label{prop:twistMorita}
It holds that $D'J=\epsilon' D'J$ if and only if there exists an element 
$\omega_\rho$ in $\Omega^1_D(\A, \rho)$ such that 
 \begin{equation}
   D' = D + \omega_\rho + \epsilon'J\omega_\rho J^{-1}.
 \end{equation}
\end{prop}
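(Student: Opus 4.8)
The plan is to compute $D'J$ and compare it with $\epsilon' JD'$, using the action of $J$ on twisted $1$-forms established in Lemma~\ref{Lemma:twistopposite} (or equivalently Lemma~\ref{lem:actiontwist}). Writing $D' = D + \omega_\rho^L + \epsilon' J\omega_\rho^R J^{-1}$, and using $JD = \epsilon' DJ$, the condition $D'J = \epsilon' JD'$ becomes, after multiplying by $\epsilon'$ on the left and using $\epsilon'^2 = 1$,
\begin{equation}
\omega_\rho^L J + \epsilon' J\omega_\rho^R J^{-1} J = \epsilon' J \omega_\rho^L + \epsilon'^2 J J \omega_\rho^R J^{-1} J,
\end{equation}
which after simplification and using $J^{-1}J = \I$ and $J^2 = \epsilon \I$ reduces to a relation of the form $\omega_\rho^L + \epsilon' J\omega_\rho^R J^{-1} = \epsilon' J\omega_\rho^L J^{-1} + \omega_\rho^R$ acting on $\HH$ (here one must be slightly careful to multiply through by $J$ on the right and by $J^{-1}$ appropriately, but the upshot is a linear relation between the operators). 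In other words, $D'J = \epsilon' JD'$ if and only if the bounded operator $\Theta := \omega_\rho^L - \omega_\rho^R$ satisfies $\Theta = \epsilon' J\Theta J^{-1}$.

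Next I would observe that this symmetry condition is exactly what characterizes when one can replace the pair $(\omega_\rho^L, \omega_\rho^R)$ by a single $\omega_\rho$. Concretely, if the self-adjointness-type relation $\Theta = \epsilon' J\Theta J^{-1}$ holds, then I claim one may take $\omega_\rho := \omega_\rho^R$; then
\begin{equation}
D + \omega_\rho + \epsilon' J\omega_\rho J^{-1} = D + \omega_\rho^R + \epsilon' J\omega_\rho^R J^{-1},
\end{equation}
and I need this to equal $D' = D + \omega_\rho^L + \epsilon' J\omega_\rho^R J^{-1}$, i.e.\ $\omega_\rho^R = \omega_\rho^L$ as operators on $\HH$ --- but wait, that is too strong. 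The more careful route: the relation forces $\omega_\rho^L - \omega_\rho^R = \epsilon' J(\omega_\rho^L - \omega_\rho^R)J^{-1}$, so by Lemma~\ref{Lemma:twistopposite} the operator $\epsilon' J(\omega_\rho^L - \omega_\rho^R)J^{-1}$ is again (the operator associated to) an element of $\Omega^1_D(\A,\rho)$; call it $\Xi \in \Omega^1_D(\A,\rho)$. Then set $\omega_\rho := \tfrac{1}{2}(\omega_\rho^L + \omega_\rho^R) $ --- no, the cleanest choice is $\omega_\rho := \omega_\rho^L$; then $\epsilon' J\omega_\rho J^{-1} = \epsilon' J\omega_\rho^L J^{-1}$, and using $\Theta = \omega_\rho^L - \omega_\rho^R = \epsilon' J\Theta J^{-1}$ we get $\epsilon' J\omega_\rho^L J^{-1} = \epsilon' J\omega_\rho^R J^{-1} + \Theta = \epsilon' J\omega_\rho^R J^{-1} + \omega_\rho^L - \omega_\rho^R$, hence $D + \omega_\rho^L + \epsilon' J\omega_\rho^L J^{-1} = D + \omega_\rho^L + \epsilon' J\omega_\rho^R J^{-1} + \omega_\rho^L - \omega_\rho^R = D + \omega_\rho^L + (\omega_\rho^L - \omega_\rho^R) + \epsilon' J\omega_\rho^R J^{-1}$, which does not obviously collapse to $D'$.

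Because of this, the genuinely clean statement is: $D'J = \epsilon' JD'$ forces (via Lemma~\ref{Lemma:twistopposite}, which identifies $J\Omega^1_D(\A,\rho)J^{-1}$ with $\Omega^1_D(\A^\circ,\rho^\circ)$, together with Lemma~\ref{lem:actiontwist}) that the operator $\omega_\rho^L$, equivalently $\omega_\rho^R$, can be chosen so that both roles are played by the same form. The honest argument is: the condition is equivalent to $\omega_\rho^L + \epsilon' J \omega_\rho^R J^{-1} = \omega_\rho^R + \epsilon' J \omega_\rho^L J^{-1}$ as bounded operators, i.e.\ $D'$ is \emph{symmetric} under $\omega_\rho^L \leftrightarrow \omega_\rho^R$; and then one sets $\omega_\rho$ to be whichever of the two one likes (they produce the same $D'$ precisely under this hypothesis), so that $D' = D + \omega_\rho + \epsilon' J\omega_\rho J^{-1}$. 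Conversely, if $D'$ has the displayed form with a single $\omega_\rho$, then $D'J = \epsilon' JD'$ follows immediately from $JD = \epsilon' DJ$, $J^2 = \epsilon\I$, and the fact that $J(\epsilon' J\omega_\rho J^{-1})J^{-1} = \epsilon' J\omega_\rho J^{-1}$ up to the sign bookkeeping with $\epsilon'$. The main obstacle I anticipate is precisely this bookkeeping: keeping straight the interplay of $\epsilon'$, $\epsilon$, the antilinearity of $J$, and the twisted product \eqref{eq:15} when moving $J$ past $\omega_\rho$, and in particular invoking Lemma~\ref{Lemma:twistopposite} correctly to ensure the "fixed-point" combination of $\omega_\rho^L$ and $\omega_\rho^R$ again lies in $\Omega^1_D(\A,\rho)$ rather than merely in the bounded operators.
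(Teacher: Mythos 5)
You correctly reduce $JD'=\epsilon' D'J$ to the condition $\Theta=\epsilon' J\Theta J^{-1}$ for $\Theta:=\omega_\rho^L-\omega_\rho^R$, which is exactly the paper's \eqref{eq:46bis}, and the converse direction is fine. The gap is in your final step. The swap symmetry $\omega_\rho^L+\epsilon' J\omega_\rho^R J^{-1}=\omega_\rho^R+\epsilon' J\omega_\rho^L J^{-1}$ does \emph{not} allow you to ``set $\omega_\rho$ to be whichever of the two one likes'': taking $\omega_\rho=\omega_\rho^L$ would require $D+\omega_\rho^L+\epsilon' J\omega_\rho^L J^{-1}=D+\omega_\rho^L+\epsilon' J\omega_\rho^R J^{-1}$, i.e.\ $\omega_\rho^L=\omega_\rho^R$, which is strictly stronger than the hypothesis (your own intermediate computation already showed the expression ``does not obviously collapse to $D'$'' --- in fact it does not collapse at all). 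Concretely, take $\omega_\rho^R=0$ and $\omega_\rho^L\neq 0$ with $\omega_\rho^L=\epsilon' J\omega_\rho^L J^{-1}$: the swap symmetry holds, $D'=D+\omega_\rho^L$, yet the two candidates give $D+2\omega_\rho^L$ and $D$ respectively, neither equal to $D'$.

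The choice you wrote down and then crossed out is the correct one. Since the hypothesis says $(\omega_\rho^R-\omega_\rho^L)-\epsilon' J(\omega_\rho^R-\omega_\rho^L)J^{-1}=0$, add one half of this vanishing operator to $D'=D+\omega_\rho^L+\epsilon' J\omega_\rho^R J^{-1}$ to obtain
\begin{equation*}
D'=D+\tfrac12(\omega_\rho^R+\omega_\rho^L)+\epsilon'\, J\,\tfrac12(\omega_\rho^R+\omega_\rho^L)\, J^{-1},
\end{equation*}
so that $\omega_\rho:=\tfrac12(\omega_\rho^R+\omega_\rho^L)$ does the job; it lies in $\Omega^1_D(\A,\rho)$ simply because that set is closed under sums and scalar multiples. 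This symmetrization is precisely the paper's argument; no appeal to Lemma~\ref{Lemma:twistopposite} or Lemma~\ref{lem:actiontwist} is needed for this step, only the linear-algebra identity above.
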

\begin{proof}
From \eqref{eq:33bis}, one finds that $JD'= \epsilon'D'J$ if and only if
\begin{equation}
  \label{eq:46bis}
 (\omega_\rho^R-\omega_\rho^L) - \epsilon' J (\omega_\rho^R  -\omega_\rho^L)J^{-1} =0.
\end{equation}
Adding half of this expression to the r.h.s. of \eqref{eq:33bis}, one gets
\begin{equation}
  \label{eq:64bis}
  D' = D + \tfrac 12(\omega_\rho^R +\omega_\rho^L) + \epsilon' J \tfrac 12(\omega_\rho^R + \omega_\rho^L) J^{-1}.
\end{equation}
Hence the result with $\omega_\rho:= \tfrac 12(\omega_\rho^R + \omega_\rho^L)$.
\end{proof}

Proposition \ref{prop:twistMorita}  shows that Morita equivalence together with the real
structure yields the twisted-fluctuation \eqref{eq:76}. This answers
the first question raised in the introduction, and puts the
twisted-gauged Dirac operator $D' = D_{\omega_\rho}$ on the same
footing as the covariant operator $D_\omega$, namely as a covariant derivative associated
to a connection.  The only dif and only if erence is that, 
in the twisted case, the action $D_{R,L}$
of the Dirac operator on $\HH_{R,L}$ and the
action of the $\Omega^1_D(\A, \rho)$-valued connection have to be
twisted by $(\I\otimes\widetilde{\rho})$ and $(\I\otimes\widetilde{\rho}^{-1})$. 

\begin{rem}
\textup{
  \label{remark:twistwork}
It is worth stressing that fluctuations by Morita equivalence translate to the twisted case because the
conditions \eqref{eq:129} and  \eqref{eq:130}, that allow one to pass the
Leibniz rule from the connection, as a map on $\E$, to the connection as a map on 
$\E_R\otimes \HH$ or $\HH\otimes\E_L$, are still valid in the
twisted case, that is it holds that
\begin{align}
  \label{eq:119}
  (\omega_\rho\cdot a)\psi = \omega_\rho a\psi= \omega_\rho (a\psi), \qquad
\psi (a\cdot \omega^\circ_\rho) = \omega^\circ_\rho\, a^\circ \psi =
  (\psi a) \omega^\circ_\rho. 
\end{align}
}
\end{rem}
\begin{rem} 
\textup{
By choosing the Grassmann connection, that is $\omega=0$ in Corollaries
  \ref{corro:right} and \ref{corro:left}, one gets $\widetilde D_L
  =\widetilde D_R =D$, so that $D'=D$ in \eqref{eq:33bis}. In other terms --- and as in the non-twisted case
  (see Remark \ref{fluctgrassmann}) --- implementing the self-Morita
  equivalence of $\A$ in a twisted spectral triple with the Grassmann connection yields no 
  fluctuation $D_{\omega_\rho}=D$.   
}
\end{rem}

\section{Twisted gauge transformation}
\label{sec:twisgaugetransform}

A \emph{gauge transformation} on a module $\E$ is the action of
a unitary endomorphism $u$ of $\E$ on a
$\Omega$-valued connection $\nabla$ on the module (see  \S\ref{sec:annexgauge} for details),
\begin{equation}
 \nabla \to \nabla^u:=u\nabla u^*\quad u\in{\cal U}(\E).
\label{eq:137}
 \end{equation} 
Given a spectral triple $(\A, \HH, D)$, with $\E=\A$ and
$\Omega=\Omega^1_D(\A)$, a gauge transformation 
by $u=u^*$ a unitary element of $\A$,
amounts to substituting the
gauged Dirac operator $D+ \omega + J\,\omega\, J^{-1}$ with
$D+ \omega^u + J\,\omega^u\,J^{-1}$
where
\begin{equation}
  \label{eq:147}
  \omega^u = u[D, u^*] + u\,\omega\, u^*,
\end{equation}
(see \eqref{eq:143}).
This transformation maps a self-adjoint
$\omega\in\Omega^1_D(\A)$ to a self-adjoint $\omega^u\in\Omega^1_D(\A)$,
and gives the usual
transformation rule of the gauge potential when applied to 
almost commutative geometry (that is the product of a manifold
by a finite dimensional spectral triple). 

It is clear that \eqref{eq:147}
cannot be valid in the twisted case, when one considers a
connection with value in the bimodule of twisted $1$-forms. Indeed, given $\omega_\rho\in\Omega^1_D(\A,
\rho)$,  there is no reason for $u[D, u^*] + u\,\omega_\rho\, u^*$ to be in $\Omega^1_D(\A, \rho)$,
because $[D, u^*]$ has no reason to be in  $\Omega^1_D(\A,
\rho)$ (let alone to be a bounded operator). We show in \S \ref{subsec:potential}
that a gauge
transformation \eqref{eq:137} in fact substitutes $\omega_\rho$  in the twisted-gauged Dirac operator
$D_{\omega_\rho} = D +\omega_\rho + \epsilon' J \omega_\rho J^{-1}$ with 
\begin{equation}
  \label{eq:148}
   \omega_\rho^u := \rho(u)[D, u^*]_\rho + \rho(u)\omega_\rho u^*. 
\end{equation} 
Furthermore,  we show in \S \ref{subsec:twistedunitaries} that a gauge
transformation is equivalent to 
the twisted conjugate action on the Dirac operator of the adjoint
representation \eqref{eq:106} of the unitaries of $\A$, that is,
\begin{equation}
D_{\omega_\rho^u}= \rho(U) D_{\omega_\rho} U^* \quad \text{ for }\; U=\Ad(u), \quad u\in\U(\A).\label{eq:149}
\end{equation}

\subsection{Transformation of the gauge potential}
\label{subsec:potential}

In all this section, $(\A, \HH, D;\rho), J$ is a real twisted spectral
triple, $\E$ a hermitian $\A$-module and  ${\cal U}(\E)$ its group of unitary
endomorphisms.
\begin{Lemma} 
\label{Lemmatwistaction}
Let $\nabla$ be a $\Omega^1_D(\A, \rho)$-valued connection on
  $\E$. Then, for any $u\in{\cal U}(\E)$ one has
\begin{align}
  \label{eq:156}
(\widetilde{\rho}\otimes \I) \, \nabla^u =  (\widetilde{\rho}\circ u)\nabla u^* & \qquad \textup { for a
                                                      right module},\\
\label{eq:156bis}
(\I\otimes \widetilde{\rho}) \, \nabla^u =  (\widetilde{\rho}\circ u)\nabla u^* & \qquad \textup { for a
                                                      left module},
\end{align}
with $\nabla^u$ the gauge transformation \eqref{eq:137} and
$\widetilde{\rho}$ the endomorphism of $\E$ in the Assumption~\ref{def:end-rho}. In
particular, taking for $\E$ the algebra itself, one gets 
\begin{align*}
(\rho\otimes \I)\nabla^u(a) &= \rho(u)\otimes \delta_\rho(u^*a) + \rho(u)\otimes  \omega_\rho u^*a 
&\quad \textup{for $\E=\E_R=\A$ as right module},\\
(\I\otimes \rho)\nabla^u(a) &=\delta_\rho^\circ(au) \otimes \rho(u^*) +  \omega^\circ_\rho (au)^\circ \otimes \rho(u^*)
 &\textup{for $\E=\E_L=\A$ as left module},
\end{align*}
where now $u$ is a unitary element of $\A$, while
$\omega_\rho\in\Omega^1_D(\A, \rho)$ and
$\omega_\rho^\circ\in\Omega^1_D(\A^\circ, \rho^\circ)$ are the $1$-forms associated to $\nabla$ as
defined in Corollaries \ref{corro:right} and \ref{corro:left}.
\end{Lemma}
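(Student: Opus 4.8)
The plan is to unwind the definitions on both sides of \eqref{eq:156} and \eqref{eq:156bis} and check the identities pointwise on $\E\otimes_\C\HH$ (resp. $\HH\otimes_\C\E$), using the Leibniz rule for $\nabla$ together with Assumption~\ref{def:end-rho} that gives the intertwining $\widetilde{\rho}(\eta a)=\eta\,\rho(a)$ (resp. $\widetilde{\rho}(a\eta)=\rho(a)\,\eta$). First, for the right-module case, I would write $\nabla^u=u\nabla u^*$ and, in Sweedler notation, compute $\nabla^u(\eta)=u\bigl((u^*\eta)_{(0)}\bigr)\otimes (u^*\eta)_{(1)}$. Applying $(\widetilde{\rho}\otimes\I)$ and comparing with $(\widetilde{\rho}\circ u)\nabla u^*$ is then essentially a rearrangement, since $\widetilde{\rho}$ acts only on the $\E$-leg and $u$ is an $\A$-linear endomorphism commuting appropriately with the module structure; the key point is that $\widetilde{\rho}\otimes\I$ applied to $u(\cdot)\otimes(\cdot)$ equals $(\widetilde{\rho}\circ u)(\cdot)\otimes(\cdot)$, which is immediate once $u$ is regarded as a map $\E\to\E$. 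The left-module identity \eqref{eq:156bis} follows symmetrically, with the Sweedler indices shifted to $\eta_{(-1)}\otimes\eta_{(0)}$ and $\widetilde{\rho}$ acting on the right-hand $\E$-leg.

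Next, for the specialization to $\E=\A$, I would invoke Corollary~\ref{corro:right}: a connection on $\E_R=\A$ decomposes as $\nabla=\nabla_0+\boldsymbol{\omega}_\rho$ with $\nabla_0(a)=\I\otimes\delta_\rho(a)$ and $\boldsymbol{\omega}_\rho(a)=\I\otimes\omega_\rho a$. Then a unitary endomorphism of $\A$ as a right module is right-multiplication by... no, it is left-multiplication by a unitary $u\in\A$, so $\nabla^u(a)=u\nabla(u^*a)=u\otimes\delta_\rho(u^*a)+u\otimes\omega_\rho u^*a$ after using $\A$-linearity to pull the right factor out; applying $(\rho\otimes\I)$ turns the $\E$-leg $u$ into $\rho(u)$, giving exactly the claimed formula. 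For the left-module case I would instead use Corollary~\ref{corro:left} and the opposite-form bimodule structure \eqref{eq:93}, noting that a unitary endomorphism of $\A$ as a \emph{left} module is right-multiplication by a unitary, so that $\nabla^u(a)=u\cdot\nabla(au)\cdot\,(\,\cdot\,)$ unwinds to $\delta_\rho^\circ(au)\otimes u^*+\omega_\rho^\circ(au)^\circ\otimes u^*$, and $(\I\otimes\rho)$ sends $u^*$ to $\rho(u^*)$.

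The main obstacle I anticipate is bookkeeping the twists correctly: one must be careful that the ``$u$'' appearing as a module endomorphism acts on the correct tensor leg and from the correct side, and that conjugation by $u$ interacts with the twisted product \eqref{eq:15} (resp.\ \eqref{eq:93}) so that the result genuinely lands in $\Omega^1_D(\A,\rho)$ (resp.\ $\Omega^1_D(\A^\circ,\rho^\circ)$) rather than in some untwisted object. In particular, the appearance of $\rho(u)$ rather than $u$ on the left in $\omega_\rho^u$ is exactly what makes the gauge-transformed form twisted-admissible, and verifying this requires using that $\widetilde{\rho}$ (which is just $\rho$ when $\E=\A$) commutes past the connection leg in the manner dictated by \eqref{eq:15}; the derivation identity $\delta_\rho(u^*a)=\rho(u^*)\delta_\rho(a)+\delta_\rho(u^*)a$ from \eqref{eq:96} is then what lets one reorganize $\rho(u)\otimes\delta_\rho(u^*a)$ into the form $\rho(u)[D,u^*]_\rho+\rho(u)\omega_\rho u^*$ acting on $\HH$, matching \eqref{eq:148}. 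Everything else is a routine, if slightly tedious, manipulation of Sweedler indices.
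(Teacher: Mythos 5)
Your proposal is correct and follows essentially the same route as the paper: the general identities \eqref{eq:156}--\eqref{eq:156bis} are verified by unwinding $\nabla^u=u\nabla u^*$ in Sweedler notation and observing that $\widetilde{\rho}\otimes\I$ composed with $u\otimes\id_\Omega$ is $(\widetilde{\rho}\circ u)\otimes\id_\Omega$ (and symmetrically for left modules), while the specialization to $\E=\A$ uses exactly the decompositions $\nabla=\nabla_0+\boldsymbol{\omega}_\rho$ and $\nabla^\circ=\nabla^\circ_0+\boldsymbol{\omega}^\circ_\rho$ of Corollaries~\ref{corro:right} and \ref{corro:left} together with the actions $u(a)=ua$ on $\E_R$ and $u^*(a)=au$ on $\E_L$ from \eqref{eq:128}. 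The closing remarks about reorganizing $\rho(u)\otimes\delta_\rho(u^*a)$ into $\omega_\rho^u$ belong to the subsequent proposition rather than this lemma, but they do not affect the correctness of the argument.
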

\begin{proof}
Assume $\E$ is a right $\A$-module. For any $\eta\in\E$ and $u\in{\cal
  U}(\E)$, write
$\nabla(u^*(\eta))=\eta^u_{(0)}\otimes\eta^u_{(1)}$ with $\eta^u_{(0)}\in\E$ and 
 $\eta^u_{(1)}\in\Omega^1_D(\A, \rho)$ (with an implicit sum). By \eqref{eq:103} one gets on
 the one hand
 \begin{equation}
    \label{eq:151}
 \left((\widetilde{\rho}\otimes \I) (u\nabla u^*)\right)(\eta)= (\widetilde{\rho}\otimes \I)
 \left(u (\eta^u_{(0)})\otimes \eta^u_{(1)}\right) =
 \widetilde{\rho}(u(\eta^u_{(0)}))\otimes \eta^u_{(1)},\end{equation}
while  on the other hand
\begin{equation}
 \left((\widetilde{\rho}\circ u)\nabla u^*\right)(\eta) = (\widetilde{\rho}\circ u) (\eta^u_{(0)}\otimes \eta^u_{(1)})= 
 \left(\widetilde{\rho}\circ u(\eta^u_{(0)})\right)\otimes\eta^u_{(1)} =
 \widetilde{\rho}(u(\eta^u_{(0)}))\otimes \eta^u_{(1)}.
\end{equation}
Hence \eqref{eq:156}. The proof is similar for a left
$\A$-module.  

For the second part of the lemma, for any $a\in\E_R\simeq\A$ with $\nabla=\nabla_0 + {\bo_\rho}$, by \eqref{eq:128} and \eqref{eq:150} one writes the r.h.s. of \eqref{eq:156} as 
\begin{align}
  \label{eq:1540}
((\rho\circ u)\nabla u^*)(a) &=((\rho\circ u)\nabla)(u^*a) =
                               (\rho\circ u)(\nabla_0 (u^*a)+
                               {\bo_\rho}(u^*a)), \nonumber \\
&=(\rho\circ u)( \I\otimes\delta_\rho(u^*a)  + \I \otimes \omega_\rho u^*a) 
\nonumber \\
& = \rho(u)\otimes \delta_\rho(u^*a) + \rho(u)\otimes  \omega_\rho u^*a.
\end{align}
Similarly, for $a\in\E_L \simeq \A$ with
$\nabla=\nabla_0^\circ +{\bo_0^\circ}$, by \eqref{eq:128}  and
\eqref{eq:157},  the r.h.s. of \eqref{eq:156bis} reads
\begin{align}
  \label{eq:15400}
((\rho\circ u)\nabla u^*)(a) &=((\rho\circ u)\nabla)(au) =
                               (\rho\circ u)(\nabla^\circ_0 (au)+
                               {\bo^\circ_\rho}(au)), \nonumber \\
\nonumber &=(\rho\circ u)( \delta_\rho^\circ(au) \otimes \I+\omega^\circ_\rho
  (au)^\circ\otimes \I) \nonumber \\ &= \delta_\rho^\circ(au) \otimes \rho(u^*) +
  \omega^\circ_\rho (au)^\circ \otimes \rho(u^*).
\end{align}
Hence the result.
\end{proof}
\noindent  
A gauge transformation \eqref{eq:137} amounts to substituting
$(\widetilde{\rho}\ot\I)\circ\nabla$ with $(\widetilde{\rho}\ot\I)\circ \nabla^u$ in the
definition \eqref{eq:90} of $\widetilde D_R$, and $(\I\ot\widetilde{\rho}^{-1})\circ\nabla^\circ$ with 
$(\I\ot\widetilde{\rho}^{-1})\circ\nabla^{\circ u}$
in the definition \eqref{eq:90l}  of $\widetilde D_L$.
For the cases $\E=\A$, one obtains the following explicit formulas.

\begin{prop} For a gauge transformation with a unitary $u\in\A$, 
the operators
  $\widetilde D_R= D + \omega_\rho$ and $\widetilde D_L = D + \omega_\rho^\circ$ of
  Corollaries \ref{corro:right} and \ref{corro:left} are mapped to $\widetilde D_R^u =D + \omega_\rho^u$
  and  $\widetilde D_L^u =D + {\omega_\rho^\circ}^u$
where the transformed twisted $1$-forms are given by
\begin{align}
    \label{eq:156ter}
\omega_\rho^u &:= \rho(u)[D, u^*]_\rho + \rho(u)\,\omega_\rho\, u^*\\
\label{eq:156quat}
{\omega_\rho^\circ}^u&:=   \rho^\circ(u^{*\circ}) [D,{u}^\circ]_{\rho^\circ} + \rho^\circ(u^{*\circ})\, \omega_\rho^\circ\,{u}^\circ.
  \end{align}
  \end{prop}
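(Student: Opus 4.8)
The plan is to combine Lemma~\ref{Lemmatwistaction} with the definitions \eqref{eq:90} and \eqref{eq:90l} of $\widetilde D_R$ and $\widetilde D_L$, and then read off the transformed $1$-forms by the same identifications $\HH\simeq\A\otimes_\A\HH$ and $\HH\simeq\HH\otimes_\A\A$ used in Corollaries~\ref{corro:right} and \ref{corro:left}. First I would treat the right-module case: a gauge transformation replaces $\nabla$ by $\nabla^u=u\nabla u^*$ inside \eqref{eq:90}, so $\widetilde D_R^u=(\widetilde\rho\otimes\I)\circ(D_R+\nabla^u)$. Applying the second part of Lemma~\ref{Lemmatwistaction} with $\E=\E_R=\A$ (so $\widetilde\rho=\rho$), the term $(\rho\otimes\I)\nabla^u(a)$ equals $\rho(u)\otimes\delta_\rho(u^*a)+\rho(u)\otimes\omega_\rho u^*a$. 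Expanding $\delta_\rho(u^*a)=\rho(u^*)\delta_\rho(a)+\delta_\rho(u^*)\cdot a$ is not even needed; instead I would mimic the computation in Corollary~\ref{corro:right}: combining with $(\rho\otimes\I)D_R(a\otimes\psi)=\rho(a)\otimes D\psi$ and using $\delta_\rho(\cdot)=[D,\cdot]_\rho$, the operator $\widetilde D_R^u$ acting on $a\psi\simeq a\otimes\psi$ becomes $D+\omega_\rho^u$ with $\omega_\rho^u$ given by \eqref{eq:156ter}; the key algebraic identity is $\rho(u)[D,u^*a]_\rho=\rho(u)[D,u^*]_\rho\, a+[D,a]_\rho$ together with $\rho(u)\omega_\rho u^* a$, so that the bare $[D,a]_\rho$ recombines with $D$ as in the proof of Corollary~\ref{corro:right} to give $(D+\omega_\rho^u)a\psi$.

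Next I would do the left-module case in exactly parallel fashion: the gauge transformation replaces $\nabla^\circ$ by $\nabla^{\circ u}=u\nabla^\circ u^*$ in \eqref{eq:90l}, so $\widetilde D_L^u=(\I\otimes\widetilde\rho^{-1})\circ(D_L+\nabla^{\circ u})$. Using the second displayed formula in Lemma~\ref{Lemmatwistaction} for $\E=\E_L=\A$, one has $(\I\otimes\rho)\nabla^{\circ u}(a)=\delta_\rho^\circ(au)\otimes\rho(u^*)+\omega_\rho^\circ(au)^\circ\otimes\rho(u^*)$; but \eqref{eq:90l} twists by $\widetilde\rho^{-1}=\rho^{-1}$, not $\rho$, so I must take care that Lemma~\ref{Lemmatwistaction} as stated records $(\I\otimes\widetilde\rho)\nabla^u=(\widetilde\rho\circ u)\nabla u^*$ --- here one wants the analogous statement with $\widetilde\rho^{-1}$, obtained by replacing $\rho$ with $\rho^{-1}$ throughout, which is legitimate since $\rho^{-1}$ is also an automorphism with its own lift. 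Then, repeating the computation of Corollary~\ref{corro:left} with $au$ in place of $a$ and the extra factor $\omega_\rho^\circ$, the bare $[D,(au)^\circ]_{\rho^\circ}$ term recombines with $D\psi\otimes\rho^{-1}(\cdot)$ via \eqref{eq:196} exactly as in \eqref{eq:1310}, leaving $\widetilde D_L^u=D+{\omega_\rho^\circ}^u$ with ${\omega_\rho^\circ}^u$ as in \eqref{eq:156quat}. The identity ${(au)^\circ}=u^\circ a^\circ$ and $\rho^\circ(u^{*\circ})$ appearing as the twisted left-coefficient come straight from \eqref{eq:93}, i.e. the left multiplication on $\Omega^1_D(\A^\circ,\rho^\circ)$ is by $\rho^\circ((\cdot)^\circ)$ on the right of $\omega_\rho^\circ$.

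The main obstacle I anticipate is purely bookkeeping rather than conceptual: keeping the three compositions straight --- the connection $\nabla$, its induced action on $\E\otimes_\C\HH$ via \eqref{eq:115}, and the pre-composition $u\nabla u^*$ with $u$ acting \emph{as an endomorphism of $\E$} (hence as $u\otimes\I$ on the tensor product, not as the operator $u$ on $\HH$) --- and making sure the twist $\widetilde\rho^{\pm1}$ lands on the right tensor leg. In particular one should double-check that in the left-module case the correct statement is $(\I\otimes\widetilde\rho^{-1})\nabla^{\circ u}=(\widetilde\rho^{-1}\circ u)\nabla^\circ u^*$, with $u$ the endomorphism; this follows by the same one-line Sweedler computation as \eqref{eq:151}, since $\widetilde\rho^{-1}$ and the endomorphism $u$ both act on the $\E$-leg and hence commute with each other's placement. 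Once that is settled, the two formulas \eqref{eq:156ter} and \eqref{eq:156quat} drop out by the identifications already used in Corollaries~\ref{corro:right} and \ref{corro:left}, so the proof is short; I would write it as two short paragraphs, one per module, each ending with the identification step $a\otimes\psi\mapsto a\psi$ (resp. $\psi\otimes a\mapsto a^\circ\psi$).
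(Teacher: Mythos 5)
Your proposal is correct and follows essentially the same route as the paper: substitute $\nabla^u$ (resp.\ $\nabla^{\circ u}$) into the explicit formulas of Corollaries~\ref{corro:right} and \ref{corro:left} via Lemma~\ref{Lemmatwistaction}, recombine the bare twisted commutator with $D$ through the identity $\rho(a)D+\rho(u)[D,u^*a]_\rho=(D+\rho(u)[D,u^*]_\rho)a$ and its opposite-algebra analogue, and then identify $a\otimes\psi\mapsto a\psi$ (resp.\ $\psi\otimes a\mapsto a^\circ\psi$). Your remark that the left-module case requires the lemma with $\widetilde{\rho}^{-1}$ in place of $\widetilde{\rho}$ is a correct and careful observation --- the paper applies exactly that version (writing $\rho^{-1}(u^*)$ on the second tensor leg) without comment.
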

\begin{proof}
 By Lemma \ref{Lemmatwistaction}, substituting $\nabla$ with
 $\nabla^u$ in \eqref{eq:47} yields the operator 
\begin{align}
  \widetilde D_R^u(a\otimes \psi) &=\rho(a)\otimes D\psi + \rho(u)\otimes\delta_\rho(u^*a)\psi + \rho(u)\otimes  \omega_\rho u^*a\psi, \nonumber \\
&=\I\otimes \left(\rho(a)D + \rho(u) [D, u^*a]_\rho\right)\psi +\I\otimes \rho(u) \omega_\rho u^*a\psi, \nonumber \\
&=\I\otimes \left(D + \rho(u)[D,u^*]_\rho\right)a\psi + \I\otimes \rho(u) \omega_\rho u^*a\psi,
\label{eq:1550}
\end{align}
where in the last line we used
\begin{equation}
\rho(a)D + \rho(u) [D, u^*a]_\rho = \rho(u)Du^*a = (D +\rho(u)[D, u^*]_\rho)a.
\end{equation}
Identifying $a\otimes \psi=\I\otimes
a\psi$ with $a\psi$ in $\A \otimes_\A \HH \simeq \HH$, equation \eqref{eq:1550} shows that $\widetilde D_R^u$
acts on $\HH$ as $D+\omega_\rho^u$ with $\omega_\rho^u$ as in \eqref{eq:156ter}. 

\noindent
Similarly, substituting $\nabla^\circ$ with
$\nabla^{\circ u}$ 
in  \eqref{eq:1310} yields the operator
\begin{align}
  \widetilde D_L^u(\psi\otimes a) &= D\psi \otimes \rho^{-1}(a)+ \delta^\circ(au)\psi \otimes\rho^{-1}(u^*)+ \omega_\rho^\circ (au)^\circ\psi\otimes \rho^{-1}(u^*), \nonumber \\
&= \left((\rho^{-1}(a))^\circ D  +
(\rho^{-1}(u^*))^\circ[D,(au)^\circ]_{\rho^\circ}\right)\psi\otimes \I 
+ (\rho^{-1}(u^*))^\circ \omega_\rho^\circ (au)^\circ\psi\otimes \I,
  \nonumber \\
&= \left(\rho^\circ(a^\circ) D  +\rho^\circ(u^{*\circ})[D,(au)^\circ]_{\rho^\circ}\right)\psi\otimes \I 
+ \rho^\circ(u^{*\circ}) \omega_\rho^\circ (au)^\circ\psi\otimes \I, \nonumber\\
&= \left(D+\rho^\circ(u^{*^\circ}) [D,{u}^\circ]_{\rho^\circ}\right)a^\circ \psi\otimes\I + \rho^\circ(u^{*\circ}) \omega_\rho^\circ {u}^\circ a^\circ\psi\otimes \I, 
\label{eq:155ter}
\end{align}
where we used \eqref{eq:196} and, in the last line,
\begin{equation}
  \rho^\circ(a^\circ) D  +\rho^\circ(u^{*\circ})[D,(au)^\circ]_{\rho^\circ} = \rho^\circ(u^{*\circ}) D (au)^\circ = (D + \rho^\circ(u^{*\circ}) [D, {u}^\circ]_{\rho^\circ}) a^\circ.
\end{equation}
Identifying $\psi\otimes a = a^\circ\psi\otimes\I$ with $a^\circ\psi$ in $\HH \otimes_\A \A \simeq \HH$
equation \eqref{eq:155ter} shows that $\widetilde D_L^u$ acts on $\HH$ as
$D+{\omega_\rho^\circ}^u$, with ${\omega_\rho^\circ}^u$ as defined in \eqref{eq:156quat}.
 \end{proof}
 
\begin{prop}
\label{prop:twistedgaugefluct}
 In a twisted spectral triple $(\A, \HH, D;\rho)$, the law of transformation of the gauge
 potential is $\omega_\rho \to \omega_\rho^u$, with $\omega_\rho^u$ given in \eqref{eq:148}. 
\end{prop}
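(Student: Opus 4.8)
The plan is to combine the two previous results---Proposition~\ref{prop:twistMorita}, which identifies the twisted-gauged Dirac operator $D_{\omega_\rho} = D + \omega_\rho + \epsilon' J\omega_\rho J^{-1}$ as the operator obtained by implementing the self-Morita equivalence through the bimodule $\E=\A$, and the proposition immediately preceding this statement, which computes the effect of a gauge transformation on the right and left pieces separately. First I would recall that a gauge transformation is, by definition \eqref{eq:137}, the action $\nabla \to \nabla^u = u\nabla u^*$ of a unitary endomorphism on the connection; for the bimodule $\E=\A$ the right connection gives (via Corollary~\ref{corro:right}) the $1$-form $\omega_\rho$ and the left connection gives (via Corollary~\ref{corro:left}) $\omega_\rho^\circ$ acting as $\epsilon' J\omega_\rho J^{-1}$.

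Next I would invoke the preceding proposition directly: under $u\in\U(\A)$, the operator $\widetilde D_R = D+\omega_\rho$ is mapped to $\widetilde D_R^u = D + \omega_\rho^u$ with $\omega_\rho^u = \rho(u)[D,u^*]_\rho + \rho(u)\,\omega_\rho\, u^*$ as in \eqref{eq:156ter}, which is precisely \eqref{eq:148}. Simultaneously $\widetilde D_L = D+\omega_\rho^\circ$ is mapped to $\widetilde D_L^u = D+{\omega_\rho^\circ}^u$ with ${\omega_\rho^\circ}^u$ given in \eqref{eq:156quat}. Assembling the bimodule as in \S\ref{subsec:bimodandreal}, the gauge-transformed operator is $D + \omega_\rho^u + {\omega_\rho^\circ}^u$, and I would check that ${\omega_\rho^\circ}^u$ acts as $\epsilon' J\omega_\rho^u J^{-1}$. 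This is the one computational point that needs care: one applies Lemma~\ref{lem:actiontwist} to ${\omega_\rho^\circ}^u$, noting that $\rho^\circ(u^{*\circ}) = (\rho(u))^{*\circ}$ (so it is the ``opposite'' of $\rho(u)$) and $u^\circ = (u^*)^{*\circ}$, so that ${\omega_\rho^\circ}^u$ is the image under $\omega \mapsto \epsilon' J\omega J^{-1}$ of $\rho(u)[D,u^*]_\rho + \rho(u)\,\omega_\rho\,u^* = \omega_\rho^u$; one should verify the twisted commutator term transforms correctly using \eqref{eq:5bis} and the isomorphism \eqref{eq:196}, exactly as in the proof of Lemma~\ref{Lemma:twistopposite}.

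Putting these together, the gauge transformation sends $D_{\omega_\rho} = D + \omega_\rho + \epsilon' J\omega_\rho J^{-1}$ to $D + \omega_\rho^u + \epsilon' J\omega_\rho^u J^{-1} = D_{\omega_\rho^u}$, so the gauge potential transforms as $\omega_\rho \to \omega_\rho^u$ with $\omega_\rho^u$ as in \eqref{eq:148}, which is the assertion. The main obstacle is bookkeeping rather than conceptual: one must be careful that the unitary $u$ used on the left module enters through its opposite $u^\circ$ and that the twist $\rho^\circ$ intertwines correctly with $J$, so that the ``twisted'' form of the transformation law on the left side matches, after conjugation by $J$, the ``twisted'' form $\rho(u)[D,u^*]_\rho + \rho(u)\omega_\rho u^*$ on the right side; this consistency is what makes the two halves reassemble into a genuine real twisted spectral triple. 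A small additional remark worth including is that, as already noted in the paragraph preceding the statement, the naive formula \eqref{eq:147} fails here precisely because $[D,u^*]$ need not lie in $\Omega^1_D(\A,\rho)$, whereas $[D,u^*]_\rho$ does---so the twist is forced on us and is not an arbitrary choice.
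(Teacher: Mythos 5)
Your proposal is correct and follows essentially the same route as the paper: in both, the entire content reduces to the identity ${\omega_\rho^\circ}^u = \epsilon' J\,\omega_\rho^u\, J^{-1}$, proved from $\rho^\circ(u^{*\circ}) = J\rho(u)J^{-1}$ and $[D,u^\circ]_{\rho^\circ} = \epsilon' J[D,u^*]_\rho J^{-1}$ --- precisely the computation you single out as the one point needing care, handled with \eqref{eq:5bis}, \eqref{eq:196} and Lemma~\ref{lem:actiontwist} as you indicate. The only (harmless) divergence is that the paper keeps the right and left potentials $\omega_\rho^R$, $\omega_\rho^L$ a priori distinct and verifies that their average $\omega_\rho = \tfrac12(\omega_\rho^R+\omega_\rho^L)$ transforms by \eqref{eq:148} because the inhomogeneous term $\rho(u)[D,u^*]_\rho$ is common to both halves, whereas you take them equal from the outset, which Proposition~\ref{prop:twistMorita} permits.
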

\begin{proof}
  By Lemma \ref{lem:actiontwist}, we substitute  $\omega_\rho^\circ$ in
  \eqref{eq:156quat} with $\epsilon' J\omega_\rho^LJ^{-1}$,  with
  $\omega_\rho^L\in\Omega^1_D(\A, \rho)$. Explicitly, using
  \eqref{eq:196} and \eqref{eq:170} to write
  \begin{equation}
\rho^\circ(u^{*\circ})=(\rho^{-1}(u^*))^\circ=(\rho(u)^*)^\circ= J\rho(u)J^{-1},
  \label{eq:146}
  \end{equation}
one obtains
  \begin{align}
  \label{eq:159}
  {\omega_\rho^\circ}^u &= \rho^\circ(u^{*\circ}) [D,{u}^\circ]_{\rho^\circ} + \rho^\circ(u^{*\circ})
    \omega_\rho^\circ{u}^\circ \nn \\
&  = J\rho(u)J^{-1} [D,{u}^\circ]_{\rho^\circ} +J\rho(u)J^{-1}\omega_\rho^\circ{u}^\circ \nn \\
 &= \epsilon' J\left(\rho(u) [D, u^*]_{\rho} + \rho(u)\,\omega_\rho^L\,
                              u^*\right)J^{-1} =\epsilon' J\,(\omega_\rho^L)^{u}\, J^{-1},
\end{align}
where in the third line we used again \eqref{eq:146} to write
\begin{equation}
  \label{eq:180}
 [D, u^\circ]_{\rho^\circ}= Du^\circ - \rho^\circ(u^\circ) D=
 DJu^*J^{-1} - J\rho(u^*)J^{-1}D=\epsilon'J[D,u^*]_\rho J^{-1}.
\end{equation}
Therefore, with the notation of Proposition \ref{prop:twistMorita},
 one has that $\omega_\rho= \tfrac 12(\omega_\rho ^R +\omega_\rho^L)$ is
 mapped under a gauge transformation to 
 \begin{align}
   \label{eq:163}
   \tfrac 12( (\omega_\rho ^R)^u +(\omega_\rho^L)^u ) & = \rho(u)[D,
   u^*]_\rho + \tfrac 12 \rho(u)(\omega_\rho^R +\omega_\rho^L) u^*, \nonumber \\
&= \rho(u)[D,
   u^*]_\rho + \rho(u) \omega_\rho u^*,
 \end{align}
that is $\omega_\rho^u$ as defined in \eqref{eq:148}.
\end{proof}
\noindent The transformation of the gauge
potential of a  twisted
spectral triple is thus the
usual gauge transformation \eqref{eq:143}, in which the left action of
$u$ and
the commutator have been twisted by the automorphism. This suggests that a twisted
fluctuation may also be obtained by twisting the left action of
$\Ad(u)$ in \eqref{eq:11}.
 
\subsection{Twisted action of unitaries}
\label{subsec:twistedunitaries}

Let $\Ad(u)=uJuJ^{-1}$ denote the adjoint action on $\HH$ of a unitary $u\in\A$ as recalled in 
\eqref{eq:106}:
\begin{equation}
  \Ad(u)\psi := u \,\psi \,u^*  = u J  u J^{-1}\psi  \qquad \forall \psi\in\HH.
\end{equation}
We show in Proposition \ref{prop:twistedunitaries} that a twisted conjugation
by $\Ad(u)$ of a twisted-gauged Dirac operator
$D_{\omega_\rho}$ yields the gauge transformation of Proposition \ref{prop:twistedgaugefluct}. 
Preliminarily, we begin by proving a twisted version of \eqref{eq:11}. 
\begin{Lemma}
\label{lem:twistedunit}
Let $(\A, \HH, D; \rho)$ be a real twisted spectral triple; for any $u\in\U(\A)$ define
\begin{equation}
\label{eq:136a}
  \rho(\Ad(u)):=  \Ad(\rho(u))= \rho(u) J\rho(u) J^{-1}.
\end{equation}
Then, it holds that 
\begin{equation}
  \label{eq:19}
      \rho(\Ad(u))\, D \Ad(u)^{-1} = D  +\rho(u) [D, u^*]_\rho +
\epsilon'  J \,\rho(u)[D, u^*]_\rho \, J^{-1}.
\end{equation}
\end{Lemma}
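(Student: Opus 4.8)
The plan is to compute $\rho(\Ad(u))\,D\,\Ad(u)^{-1}$ directly, expanding the adjoint actions and repeatedly commuting $D$ past the algebra elements using the twisted commutator. Recall $\Ad(u) = uJuJ^{-1}$ and, since $u$ is unitary, $\Ad(u)^{-1} = \Ad(u^*) = u^* J u^* J^{-1}$; moreover $\rho(\Ad(u)) = \rho(u)J\rho(u)J^{-1}$ by definition \eqref{eq:136a}. Because the left action of $\A$ commutes with the right action (the order-zero condition \eqref{eq:12}), the operators $u$ and $Ju^*J^{-1} = u^{*\circ}$ commute, so I can freely reorder the ``left'' factors and the ``$J$-conjugated'' factors.

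First I would write $\rho(\Ad(u))\,D\,\Ad(u)^{-1} = \rho(u)\,\big(J\rho(u)J^{-1}\big)\,D\,\big(Ju^*J^{-1}\big)\,u^*$ and move $D$ leftwards through the two $J$-conjugated factors. Moving $D$ past $Ju^*J^{-1} = (u^*)^\circ$ produces, by the definition of the twisted commutator on the opposite algebra, $D\,(u^*)^\circ = \rho^\circ((u^*)^\circ)\,D + [D,(u^*)^\circ]_{\rho^\circ}$. Then $J\rho(u)J^{-1}$ has to be combined with $\rho^\circ((u^*)^\circ)$: using \eqref{eq:196} and \eqref{eq:170} exactly as in \eqref{eq:146}, one has $\rho^\circ(u^{*\circ}) = J\rho(u)J^{-1}$, and these are inverse to each other up to the sign bookkeeping, so the leading term collapses to $\rho(u)\,D\,u^*$. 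Continuing, $\rho(u)\,D\,u^* = \rho(u)\big(\rho(u^*)D + [D,u^*]_\rho\big) = D + \rho(u)[D,u^*]_\rho$, which gives the first two terms on the right-hand side of \eqref{eq:19}. The remaining contribution comes from the cross term $\rho(u)\,\big(J\rho(u)J^{-1}\big)\,[D,(u^*)^\circ]_{\rho^\circ}\,u^*$; here I would use \eqref{eq:180}, namely $[D,u^\circ]_{\rho^\circ} = \epsilon' J[D,u^*]_\rho J^{-1}$ (equivalently $[D,(u^*)^\circ]_{\rho^\circ} = \epsilon' J[D,\rho\text{-conjugate}]J^{-1}$ after adjusting which element appears), so that $J\rho(u)J^{-1}$ times this twisted commutator assembles into $\epsilon' J\big(\rho(u)[D,u^*]_\rho\big)J^{-1}$, using again that $\rho(u)$ on the left commutes with everything $J$-conjugated. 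The surviving $u^*$ on the far right must be absorbed; this is where care is needed, since I must check that $u^* = Ju^{*?}J^{-1}\cdot(\text{something})$ does not spoil the identity — in fact the $u^*$ pairs with the $u$ hidden inside the rewriting of $\rho^\circ(u^{*\circ})$, which is the same mechanism that made the leading term collapse.

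The key technical devices, all already available, are: the order-zero condition \eqref{eq:12} to reorder left and $J$-conjugated factors; the identities \eqref{eq:196}, \eqref{eq:170}, \eqref{eq:5bis} governing how $\rho^\circ$ interacts with $J$, used precisely in the packaged form \eqref{eq:146}; and \eqref{eq:180} which converts an opposite-algebra twisted commutator into a $J$-conjugate of a twisted commutator. The main obstacle I anticipate is purely bookkeeping: keeping straight which element ($u$, $u^*$, $\rho(u)$, $\rho^{-1}(u)$, and their $\circ$-images) appears where, and making sure every use of $[D,\cdot]_\rho$ versus $[D,\cdot]_{\rho^\circ}$ has the automorphism that the twisted-commutator structure actually dictates — because $\rho(a^*) = (\rho^{-1}(a))^*$ rather than $\rho(a^*) = \rho(a)^*$, signs and inverses proliferate and it is easy to land on $[D,u^*]_{\rho^{-1}}$ instead of $[D,u^*]_\rho$. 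I would organize the computation as a single \texttt{align} display, peeling off the leading ``telescoping'' term first and then the cross term, citing \eqref{eq:146} and \eqref{eq:180} at the two places they are needed, so that the final line reads exactly $D + \rho(u)[D,u^*]_\rho + \epsilon' J\rho(u)[D,u^*]_\rho J^{-1}$.
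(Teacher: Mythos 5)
Your overall strategy is the paper's: expand $\Ad(u)=uv$ with $v=JuJ^{-1}$, peel $D$ past the algebra factor and past the $J$-conjugated factor via the twisted commutators, and convert the opposite-algebra contribution using \eqref{eq:146} and \eqref{eq:180}. The leading ``telescoping'' term is handled correctly: $\rho^\circ(u^{*\circ})\,\rho^\circ(u^\circ)=\rho^\circ\big((uu^*)^\circ\big)=\I$ collapses it to $\rho(u)Du^*=D+\rho(u)[D,u^*]_\rho$. (A minor slip: with the paper's convention $a^\circ=Ja^*J^{-1}$ one has $Ju^*J^{-1}=u^\circ$ and $JuJ^{-1}=(u^*)^\circ$, so your identification $Ju^*J^{-1}=(u^*)^\circ$ is backwards, though you do flag that this bookkeeping needs adjusting.)

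The genuine gap is in the cross term. After it is rewritten as $\rho(u)\,\epsilon' J\big(\rho(u)[D,u^*]_\rho\big)J^{-1}\,u^*$, you still must cancel the outer $\rho(u)\cdots u^*$, and you justify this by saying that ``$\rho(u)$ on the left commutes with everything $J$-conjugated'' and that the trailing $u^*$ ``pairs with the $u$ hidden inside the rewriting of $\rho^\circ(u^{*\circ})$''. Neither is the actual mechanism: the order-zero condition \eqref{eq:12} only lets you commute elements of $\A$ with elements of $\A^\circ$; it says nothing about moving $u^*$ past the opposite twisted commutator $[D,u^\circ]_{\rho^\circ}$ (equivalently past $J\omega_\rho J^{-1}$), which contains $D$. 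What is needed there is the twisted first-order condition \eqref{eq:13}, in the equivalent form $[D,u^\circ]_{\rho^\circ}\,u^*=\rho(u^*)\,[D,u^\circ]_{\rho^\circ}$ --- note that the reordering is \emph{twisted}, not free --- after which $\rho(u^*)$ cancels the leading $\rho(u)$ by order-zero. This is precisely where the paper's proof invokes \eqref{eq:13}, and it is the one axiom absent from your list of ``key technical devices''; without it the trailing $u^*$ cannot be absorbed and the computation stalls at that step.
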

\begin{proof}
  Let $v:= JuJ^{-1}\in\A^\circ$. By \eqref{eq:146} one has $\rho^\circ(v) =\rho^\circ(u^{*\circ})=
  J \rho(u)J^{-1}$, so that
  \begin{equation}
  \Ad(u)=uv=vu,\qquad \rho(\Ad(u))=\rho(u)\rho^\circ(v) = \rho^\circ(v) \rho(u)
\label{eq:153}
  \end{equation}
by the order $0$ condition.
Using the twisted first-order condition \eqref{eq:13} one computes:
  \begin{align}
    \nonumber   \rho(\Ad(u))\, D \Ad(u)^{-1}&= \rho^\circ(v)\,(\rho(u)Du^{-1})\, v^{-1}
  = \rho^\circ(v)(D +\rho(u) [D, u^{-1}]_\rho) v^{-1}\\
    \nonumber &= \rho^\circ(v)D  v^{-1} + \rho^\circ(v)\rho(u)[D, u^{-1}]_\rho v^{-1}\\
  \label{eq:1900}  &= D +\rho^\circ(v) [D, v^{-1}]_{\rho^\circ}  + \rho(u) [D, u^{-1}]_\rho.
  \end{align}
By \eqref{eq:4}, one has
  \begin{align}
    \nonumber \rho^\circ(v)[D, v^{-1}]_{\rho^\circ} &=  \rho^\circ(v) D
    v^{-1}  - D =   \epsilon' J \rho(u) D u^{-1} J^{-1}-D,\\
    \nonumber&=   \epsilon' J \left(D +\rho(u)[D, u^{-1}]_\rho\right) J^{-1}-D,\\
&=  (\epsilon')^2 D + J\epsilon' \rho(u)[D, u^{-1}]_\rho J^{-1}-
    D = \epsilon' J \rho(u)[D, u^{-1}]_\rho J^{-1}.
  \end{align}
Plugged into \eqref{eq:1900}, one gets \eqref{eq:19}.
\end{proof}

\begin{prop}
\label{prop:twistedunitaries}
Let $(\A, \HH, D; \rho), J$ be a real twisted spectral triple and consider a twisted
gauged Dirac operator $D_{\omega_\rho}= D + \omega_\rho + \epsilon' J \omega_\rho J^{-1}$ as in \eqref{eq:3}. Then for any $u\in\U(\A)$ one has
    \begin{equation}
    \rho(\Ad(u)) \,D_{\omega_\rho} \,\Ad(u)^{-1} = D +
    \omega_\rho^u + \epsilon'\,J\, \omega_\rho^u \,J^{-1}\label{eq:9}
  \end{equation}
  with transformed $\omega_\rho^u$ given in \eqref{eq:148}.
\end{prop}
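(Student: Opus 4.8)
The plan is to reduce the statement of Proposition~\ref{prop:twistedunitaries} to the already-established Lemma~\ref{lem:twistedunit} by splitting the twisted conjugation $\rho(\Ad(u))\,D_{\omega_\rho}\,\Ad(u)^{-1}$ into three pieces according to the decomposition $D_{\omega_\rho}=D+\omega_\rho+\epsilon' J\omega_\rho J^{-1}$, and handling each piece in turn. The first piece, $\rho(\Ad(u))\,D\,\Ad(u)^{-1}$, is exactly the content of Lemma~\ref{lem:twistedunit}: it equals $D+\rho(u)[D,u^*]_\rho+\epsilon' J\,\rho(u)[D,u^*]_\rho\,J^{-1}$. So the work is to show that the remaining two pieces, $\rho(\Ad(u))\,\omega_\rho\,\Ad(u)^{-1}$ and $\epsilon'\,\rho(\Ad(u))\,J\omega_\rho J^{-1}\,\Ad(u)^{-1}$, contribute precisely $\rho(u)\omega_\rho u^*$ and $\epsilon' J\rho(u)\omega_\rho u^* J^{-1}$ respectively; adding everything and recognizing $\omega_\rho^u=\rho(u)[D,u^*]_\rho+\rho(u)\omega_\rho u^*$ from \eqref{eq:148} then yields \eqref{eq:9}.

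First I would write $\Ad(u)=uv=vu$ with $v:=JuJ^{-1}\in\A^\circ$ and $\rho(\Ad(u))=\rho(u)\rho^\circ(v)=\rho^\circ(v)\rho(u)$, exactly as in \eqref{eq:153}, using the order-zero condition \eqref{eq:12} to commute left and right actions freely. For the $\omega_\rho$ term: since $\omega_\rho$ is a left action (an element of $\Omega^1_D(\A,\rho)$ acting on $\HH$) it commutes with everything in $\A^\circ$, in particular with $v$ and $v^{-1}$ and $\rho^\circ(v)$; hence $\rho(\Ad(u))\,\omega_\rho\,\Ad(u)^{-1}=\rho^\circ(v)\rho(u)\,\omega_\rho\,u^{-1}v^{-1}=\rho^\circ(v)\big(\rho(u)\omega_\rho u^*\big)v^{-1}$. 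Now $\rho(u)\omega_\rho u^*$ is again a left-acting operator (an element of $\Omega^1_D(\A,\rho)$ by the bimodule law \eqref{eq:15}, $u\cdot\omega_\rho\cdot u^* = \rho(u)\omega_\rho u^*$), so it commutes with $\rho^\circ(v)$ and $v^{-1}$, which therefore cancel, leaving $\rho(u)\omega_\rho u^*$. For the $\epsilon' J\omega_\rho J^{-1}$ term: by Lemma~\ref{lem:actiontwist} this equals, up to the sign $\epsilon'$, an element $\omega_\rho^\circ\in\Omega^1_D(\A^\circ,\rho^\circ)$ acting on $\HH$, i.e.\ a purely right action; so it commutes with the left-acting factors $\rho(u),u^{-1}$, and one is left with $\rho^\circ(v)\,(\epsilon' J\omega_\rho J^{-1})\,v^{-1}$. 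Using \eqref{eq:146}, $\rho^\circ(v)=\rho^\circ(u^{*\circ})=J\rho(u)J^{-1}$ and $v^{-1}=Ju^*J^{-1}$, so this is $\epsilon' J\rho(u)\,J^{-1}J\,\omega_\rho\, J^{-1}\,J u^* J^{-1}=\epsilon' J\,\rho(u)\omega_\rho u^*\,J^{-1}$, as desired.

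Adding the three contributions gives
\[
\rho(\Ad(u))\,D_{\omega_\rho}\,\Ad(u)^{-1}
= D + \big(\rho(u)[D,u^*]_\rho + \rho(u)\omega_\rho u^*\big)
 + \epsilon' J\big(\rho(u)[D,u^*]_\rho + \rho(u)\omega_\rho u^*\big)J^{-1},
\]
and the bracketed expression is exactly $\omega_\rho^u$ of \eqref{eq:148}, which proves \eqref{eq:9}.

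The main obstacle I anticipate is bookkeeping around which operators act on the left versus the right of $\HH$, and making the ``commuting past $v^{\pm1}$'' steps rigorous. In particular one must be careful that $\rho(u)\omega_\rho u^*$ is genuinely realized as left multiplication (so that it commutes with $\A^\circ$), which is where the bimodule product \eqref{eq:15} and the distinction emphasized in \eqref{eq:1190}--\eqref{eq:1160} between $\omega_\rho\cdot a$ and $a\cdot\omega_\rho$ acting on $\HH$ matter. A secondary subtlety is the correct use of \eqref{eq:196}/\eqref{eq:170} to identify $\rho^\circ(v)$ with $J\rho(u)J^{-1}$ rather than $J\rho^{-1}(u)J^{-1}$ or similar; getting this right is precisely what \eqref{eq:146} supplies, and it is the same identity already used in the proof of Lemma~\ref{lem:twistedunit}, so no new ingredient is needed. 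Everything else is a direct substitution, so I expect the proof to be short once the left/right accounting is set up cleanly.
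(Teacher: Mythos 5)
Your decomposition of $D_{\omega_\rho}$ into three pieces, the use of Lemma~\ref{lem:twistedunit} for the $D$ piece, and the final formulas for the other two pieces all match the paper's proof. However, the justification of the two key cancellations contains a genuine error. You assert that $\omega_\rho\in\Omega^1_D(\A,\rho)$, being ``left-acting'', \emph{commutes} with everything in $\A^\circ$ (and dually that $\omega_\rho^\circ$ commutes with the left action of $\A$). This is false in the twisted setting: the twisted first-order condition \eqref{eq:13} gives only the \emph{twisted} commutation
\begin{equation*}
\omega_\rho\, c^\circ \;=\; \rho^\circ(c^\circ)\,\omega_\rho, \qquad \omega_\rho^\circ\, c \;=\; \rho(c)\,\omega_\rho^\circ ,
\end{equation*}
not plain commutation. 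Worse, even if plain commutation held, your cancellation would not follow: writing $X=\rho(u)\omega_\rho u^*$, plain commutation would give $\rho^\circ(v)\,X\,v^{-1}=X\,\rho^\circ(v)v^{-1}$, and $\rho^\circ(v)v^{-1}=(u\,\rho^{-1}(u^*))^\circ$ is \emph{not} the identity unless $\rho(u)=u$ (similarly $\rho(u)u^{-1}\neq\I$ for the other piece). What actually makes the conjugation collapse is precisely the twist in the commutation relation: $X\,v^{-1}=\rho^\circ(v^{-1})X=\rho^\circ(v)^{-1}X$, hence $\rho^\circ(v)\,X\,v^{-1}=X$; and $\omega_\rho^\circ\,u^{-1}=\rho(u)^{-1}\omega_\rho^\circ$, hence $\rho(u)\,\omega_\rho^\circ\,u^{-1}=\omega_\rho^\circ$. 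This is exactly how the paper argues (``by the twisted first-order condition one gets\ldots''), and it is the one ingredient your write-up never invokes. With that substitution --- twisted commutation from \eqref{eq:13} in place of plain commutation --- your proof becomes correct and is essentially identical to the paper's.
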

\begin{proof} 
For
  $\omega_\rho = a  [D, b]_\rho$ (without loss of generality), one needs to compute $ \rho(\Ad(u))\,\omega_\rho\Ad(u)^{-1}$ and
  $\rho( \Ad(u))\, J\omega_\rho J^{-1} \Ad(u)^{-1}$. By the twisted first-order condition one
  gets
  \begin{align}
     \rho(\Ad(u)) \, \omega_\rho \Ad(u)^{-1}&= \rho(u)\,\left(\rho^\circ(v) a  [D, b]_\rho v^{-1}\right)\, u^{-1},
     \nonumber \\
&= \rho(u)\left(a  [D, b]_\rho \right)u^{-1}= \rho(u) \omega_\rho u^{-1}.
\label{transjauge}
  \end{align}
 In order  to compute $\rho(\Ad(u))\,J\omega_\rho J^{-1} \Ad(u)^{-1}$, one uses on the one hand,
\begin{align}
\rho(u) \,J \omega_\rho J^{-1} \,u^{-1}&= \rho(u)\, J a  [D, b]_\rho J^{-1}\, u^{-1}
= J\,( J^{-1}\rho(u) J)\, a  [D, b]_\rho J^{-1} \,u^{-1}, \nonumber \\
&= J a  [D, b]_\rho\,( J^{-1}u J)\, J^{-1} u^{-1} =
 J a  [D, b]_\rho J^{-1} =  J \omega_\rho  J^{-1},
\end{align}
and on the other hand
\begin{align}
\rho^\circ(v) \,J \omega_\rho J^{-1} \,v^{-1} = J\rho(u) J^{-1}\, J \omega_\rho J^{-1}\, J u^{-1} J^{-1}
= J\rho(u) \omega_\rho u^{-1} J^{-1},
\end{align}
so that
\begin{equation}
\rho(\Ad(u)) J \omega_\rho J^{-1} \Ad(u)^{-1}= \rho^\circ(v) \,\left(\rho(u)\, J \omega_\rho J^{-1}\, u^{-1}\right)\, v^{-1}
  \label{eq:21}= J \rho(u) \omega_\rho u^* J^{-1}.
 \end{equation}
Collecting \eqref{eq:21} and \eqref{transjauge} one gets
\begin{equation}
  \label{eq:23}
  \rho(\Ad(u)) \left( \omega_\rho + \epsilon' J \omega_\rho J^{-1}\right)
   \Ad(u)^{-1} = \rho(u) \omega_\rho u^{-1} + \epsilon' J \rho(u) \omega_\rho u^{-1} J^{-1}.
\end{equation}
Together with \eqref{eq:19}, this  yields the result.
\end{proof}

\section{Self-adjointness}
\label{sec:selfadjoint}
 
In the non twisted case, a gauge transformation preserves the
self-adjointness of the Dirac operator. The transformed operator 
\begin{equation}
  D_{\omega^u}= \Ad(u)) \, D_\omega \Ad(u)^{-1}  
\end{equation}
 is self-adjoint if and
only if so is $D_\omega$, since $\Ad(u)$ is
unitary (see Lemma \ref{lem:unitaryad} below).  Thus, starting with
a spectral triple $(\A, \HH,
D_{\omega})$, a gauge transformation yields a spectral triple
$(\A, \HH, D_{\omega^u})$, which is unitary equivalent to
the former \cite{Connes:1996fu}. This result is no longer true in the twisted case:
by Proposition~\ref{prop:twistedunitaries} the gauge transformed of the
twisted-gauged Dirac operator $D_{\omega_\rho}$ is 
\begin{equation}
  \label{eq:7}
  D_{\omega_\rho^u}=\rho(\Ad(u)) \, D_{\omega_\rho} \Ad(u)^*,
\end{equation}
which has no reason to be self-adjoint, even if $D_{\omega_\rho}$ is
self-adjoint.

We next work out conditions on the  unitary element $u$ to
guarantee that the operator $D_{\omega_\rho^u}$ be self-adjoint.
A simple condition would be that $u$ is invariant for the the twist: $\rho(u)=u$.
We show, for the example of the minimal twist of a spin manifold constructed in \cite{Landi:2015aa},   
that there exists other solutions than this trivial one. 

\subsection{Conditions for self-adjointness}

Let us begin with recalling some properties
of antilinear operators. The adjoint of an antilinear operator $C$ on
a Hilbert space $\HH$ is the
antilinear operator $C^*$ such that 
\begin{equation}
  \label{eq:10}
  \langle C\xi, \zeta \rangle = \overline{\langle \xi,
    C^*\zeta\rangle}, \quad \forall \xi,\zeta\in\HH.
\end{equation}
Such an operator is antiunitary if
\begin{equation}
  \label{eq:152}
  \langle C\xi, C\zeta\rangle = \overline{\langle \xi, \zeta\rangle},
 \quad \text{ that is } \quad C^*C= CC^*=\I.
\end{equation}
Hence $C^* = C^{-1}$, as for linear unitary
 operators. However, one has to be careful that
  the
 usual rule for the adjoint holds for the product of two antilinear operators $C, C'$, 
 \begin{equation}
   \label{eq:166}
 \langle CC'\xi, \zeta\rangle = \overline{\langle C'\xi,
   C^*\zeta\rangle} =  \langle \xi, C'^*C\zeta\rangle \quad \text{so that}\quad    (C C')^* = C'^* C^*
 \end{equation}
but not for the product of an antilinear $C$ with a linear $T$, for
\begin{equation}
  \label{eq:165}
   \langle CT\xi, \zeta\rangle = \overline{\langle T\xi,
   C^*\zeta\rangle} = \overline{\langle \xi, T^*C\zeta\rangle}.
\end{equation}
On the other hand, the usual rule for the adjoint holds for any product involving an even number 
of antilinear operators, such as $JTJ^{-1}$  with $T$ linear, that 
often appear in this work. This is shown explicitly in the following lemma for $T=u$ an unitary element. 

\begin{Lemma}
\label{lem:unitaryad}
Let $(\A, \HH, D)$ be a real spectral triple with real structure (the
antilinear operator)~$J$. Then, for any unitary $u\in\A$ one has that  $\Ad(u) = uJuJ^{-1}$ is unitary.
 \end{Lemma}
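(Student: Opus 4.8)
The statement to prove is that $\Ad(u) = uJuJ^{-1}$ is unitary whenever $u \in \A$ is unitary and $J$ is the real structure of a real spectral triple. The plan is to verify directly that $\Ad(u)$ is invertible with inverse $\Ad(u)^*$, exploiting the caveat emphasized just above the lemma: $\Ad(u)$ is a product of an \emph{even} number of antilinear operators (namely two copies of $J$, with linear operators interspersed), so the usual reversal rule for adjoints applies, $(CC')^* = C'^*C^*$ being valid whenever the total parity is controlled. Concretely I would first record that $JuJ^{-1}$ is a linear operator (composition of antilinear, linear, antilinear), in fact $JuJ^{-1} = (u^*)^\circ$ in the notation of \eqref{eq:24}, and that by the order-zero condition \eqref{eq:12} it commutes with $u$; hence $\Ad(u) = u\,(JuJ^{-1}) = (JuJ^{-1})\,u$ is linear.

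The core computation is then the adjoint. Writing $\Ad(u)^* = \big(u\,JuJ^{-1}\big)^*$ and applying the rules in \eqref{eq:166}: $J^{-1} = J^*$ up to the sign $\epsilon$ (since $J^2 = \epsilon\I$ one has $J^{-1} = \epsilon J$, and $J^* = \epsilon J$ as well because an antiunitary $J$ satisfies $J^* = J^{-1}$ by \eqref{eq:152}), so the two factors of $J$ contribute trivially to the final parity. Carefully tracking the complex conjugations, one gets $\Ad(u)^* = (J^{-1})^* u^* J^* \, u^* = J u^* J^{-1} u^* = (Ju^*J^{-1})\,u^* = \Ad(u^*)$, the last equality again using the order-zero commutativity to reorder the two commuting linear factors. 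Then $\Ad(u)\,\Ad(u)^* = u\,(JuJ^{-1})\,(Ju^*J^{-1})\,u^* = u\,J\,u u^*\,J^{-1}\,u^* = u u^* = \I$, and symmetrically $\Ad(u)^*\,\Ad(u) = \I$, using unitarity of $u$ and the fact that $J$ is invertible so $J^{-1}J = \I$. This establishes $\Ad(u)^* = \Ad(u)^{-1}$, i.e. $\Ad(u)$ is unitary.

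\textbf{Main obstacle.} The only delicate point is bookkeeping of antilinearity: one must be scrupulous that $\Ad(u)$ really is linear (so that ``unitary'' means the ordinary thing and \eqref{eq:152} applies), and that when computing $\big(uJuJ^{-1}\big)^*$ the bar from the antilinear $J$'s does not leak onto the wrong factor — this is exactly the trap flagged in \eqref{eq:165}. The clean way to sidestep sign errors is to group the operator as $u \cdot (JuJ^{-1})$, treat $JuJ^{-1}$ as a single linear operator whose adjoint is $J u^* J^{-1}$ (a short sublemma, itself an instance of ``even number of antilinear operators''), and only then combine adjoints of the two linear factors by the ordinary rule $(ST)^* = T^*S^*$. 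The order-zero condition \eqref{eq:12} is what makes the two linear factors commute and is used twice; no other structural hypothesis (first-order condition, grading, compact resolvent) is needed, so the proof is purely algebraic.
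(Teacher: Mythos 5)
Your proposal is correct and follows essentially the same route as the paper: isolate $JuJ^{-1}$ as a linear operator, compute $(JuJ^{-1})^* = Ju^*J^{-1}$ from the antilinear adjoint rules with $J^*=J^{-1}$, and multiply out to get $\Ad(u)^*\Ad(u)=\Ad(u)\Ad(u)^*=\I$. The only (harmless) difference is your appeal to the order-zero condition to rewrite $\Ad(u)^*$ as $\Ad(u^*)$; the paper's verification never needs to commute the two factors, so that hypothesis is not actually required.
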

\begin{proof}
The operator $JuJ^{-1}$ is linear, hence $\Ad(u)^* =
  (JuJ^{-1})^*u^*$. A direct computation yields
    \begin{equation}
\label{eq:antil}
  \langle JuJ^{-1}\xi, \zeta \rangle =   \overline{\langle uJ^{-1}\xi,
    J^*\zeta \rangle}= \overline{\langle J^{-1}\xi,
   u^* J^*\zeta \rangle}=\langle\xi,
   (J^{-1})^* u^* J^*\zeta \rangle
\end{equation}
that is, using $J^*=J^{-1}$, 
\begin{equation}
  \label{eq:168}
  (JuJ^{-1})^* = J u^* J^{-1}.
\end{equation}
Hence $\Ad(u)^*= Ju^*J^{-1}u^*$, so that $\Ad(u)^*\Ad(u)=\Ad(u)\Ad(u)^*=\I$.
 \end{proof}

We now work out a condition on a unitary element $u$ which is  equivalent to $D_{\omega_\rho^u}$ 
being self-adjoint. Taking advantage of the two formulas for $D_{\omega_\rho^u}$ (the two sides
of \eqref{eq:19}), we actually exhibit two conditions which are equivalent.
\begin{prop}
\label{prop:conditionsselfadjoint}
  Let $(\A, \HH, D; \rho), J$ be a real twisted spectral triple, $D_{\omega_\rho}$ a twisted-gauged Dirac operator 
  and $u$ a unitary element of $\A$. Then the gauge transformed operator $D_{\omega_\rho^u}$ in \eqref{eq:7} is
    self-adjoint if and only if
\begin{equation}
  \label{eq:178bis}
 J\omega(u)J^{-1}= -\epsilon'\omega(u),
\end{equation}
for 
\begin{equation}
  \label{eq:177}
   \omega(u)= u^\circ \,[D, \rho(u)^*u]_\rho \,{u^*}^\circ \quad \text{ or
   }\quad  \omega(u)= u\, [D, \rho(u)^*u]_\rho u^* ,
\end{equation}
the two choices being equivalent.
\end{prop}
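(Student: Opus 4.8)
The plan is to start from the two equivalent expressions for $D_{\omega_\rho^u}$ furnished by Lemma~\ref{lem:twistedunit} and Proposition~\ref{prop:twistedunitaries}. Writing $V:=\Ad(u)=uJuJ^{-1}$, which is linear and unitary by Lemma~\ref{lem:unitaryad}, and noting that $\rho(\Ad(u))=\Ad(\rho(u))$ is also linear and unitary by the same lemma, equation \eqref{eq:7} reads $D_{\omega_\rho^u}=\rho(V)\,D_{\omega_\rho}\,V^{-1}$. Assuming $D_{\omega_\rho}$ is self-adjoint (this is part of the standing hypothesis, since $D_{\omega_\rho}$ is a twisted-gauged Dirac operator), one has $D_{\omega_\rho^u}^*=(V^{-1})^*\,D_{\omega_\rho}\,\rho(V)^*=V\,D_{\omega_\rho}\,\rho(V)^{-1}$, using unitarity of both $V$ and $\rho(V)$ and the fact that for products of an even number of antilinear operators the usual adjoint rule applies (equation \eqref{eq:166} applied twice, since $V$ and $\rho(V)$ each contain two copies of $J$). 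Hence $D_{\omega_\rho^u}$ is self-adjoint if and only if $\rho(V)D_{\omega_\rho}V^{-1}=V D_{\omega_\rho}\rho(V)^{-1}$, i.e. if and only if $(V^{-1}\rho(V))\,D_{\omega_\rho}=D_{\omega_\rho}\,(V^{-1}\rho(V))$, equivalently $W D_{\omega_\rho}W^{-1}=D_{\omega_\rho}$ for $W:=\rho(V)^{-1}V=V^{-1}\rho(V)$ (one should check these two orderings agree, which follows since $V$ and $\rho(V)$ commute by the order-zero condition as both factor through commuting left/right actions).

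Next I would identify $W$ and rewrite the commutation condition $WD_{\omega_\rho}=D_{\omega_\rho}W$ as a statement about a twisted one-form. Since $V=uJuJ^{-1}$ and $\rho(V)=\rho(u)J\rho(u)J^{-1}$, we get $W=V^{-1}\rho(V)=u^*\,\rho(u)\,\big(J u^* J^{-1}\big)\big(J\rho(u)J^{-1}\big)=u^*\rho(u)\,J(u^*\rho(u))J^{-1}=\Ad\big(u^*\rho(u)\big)$, using $(Ju^*J^{-1})(J\rho(u)J^{-1})=J(u^*\rho(u))J^{-1}$ and the commutation of left and right actions. Write $w:=\rho(u)^*u$ (so $w^*=u^*\rho(u)$ and $W=\Ad(w^*)$, a unitary); note $w$ need not be unitary but $u^*\rho(u)$ is by \eqref{eq:170}. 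The condition $WD_{\omega_\rho}=D_{\omega_\rho}W$ then becomes, after multiplying through and moving $W^{-1}=\Ad(w^*)^{-1}$ to the other side, a vanishing condition that I expect to read as $[D_{\omega_\rho},\Ad(w^*)]=0$ being equivalent to $J\omega(u)J^{-1}=-\epsilon'\omega(u)$ with $\omega(u)$ as in \eqref{eq:177}. To extract this I would apply the twisted version of the computation in Lemma~\ref{lem:twistedunit}/Proposition~\ref{prop:twistedunitaries} but with $u$ replaced by $u^*\rho(u)$, and crucially with the roles arranged so that the twisting automorphism no longer produces a net $\rho$ in front of $\Ad(\cdot)$ --- this is precisely why $w=\rho(u)^*u$ (rather than $u$) appears, as the combination $\rho(u)^*u$ is $\rho$-"balanced" in the sense that $\rho(w)^*w$-type corrections collapse. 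Concretely, expanding $\Ad(w^*)D_{\omega_\rho}\Ad(w^*)^{-1}-D_{\omega_\rho}$ via the twisted first-order condition \eqref{eq:13} and the order-zero condition, the $\omega_\rho$-dependent terms cancel (as in \eqref{transjauge} and \eqref{eq:21}) and one is left with $\omega(u)+\epsilon'J\omega(u)J^{-1}$ where $\omega(u)=w^{*\circ}[D,w]_\rho w^\circ = u^\circ[D,\rho(u)^*u]_\rho {u^*}^\circ$, the left-hand alternative in \eqref{eq:177}; setting this to zero gives \eqref{eq:178bis}.

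Finally, for the equivalence of the two choices of $\omega(u)$ in \eqref{eq:177}, I would use Proposition~\ref{prop:twistedunitaries}'s other expression: the two sides of \eqref{eq:19} give two superficially different formulas for $D_{\omega_\rho^u}$, one with the correction term $\rho(u)[D,u^*]_\rho$ appearing "on the left" and one with its $J$-conjugate "on the right". Running the self-adjointness argument off the second form produces $\omega(u)=u[D,\rho(u)^*u]_\rho u^*$ in place of $u^\circ[D,\rho(u)^*u]_\rho {u^*}^\circ$; since both characterize the same operator property (self-adjointness of $D_{\omega_\rho^u}$), the conditions \eqref{eq:178bis} for the two $\omega(u)$ must be equivalent, and one can also see this directly because $J u J^{-1}=u^\circ{}^*$... more carefully, because the map $\omega\mapsto J\omega J^{-1}$ intertwines the two expressions up to the sign $\epsilon'$, so $J\omega(u)J^{-1}=-\epsilon'\omega(u)$ for one is equivalent to the same for the other. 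The main obstacle I anticipate is bookkeeping: carefully tracking the twists $\rho$, $\rho^\circ$, the $\epsilon'$ signs from $JD=\epsilon'DJ$, and the antilinear-adjoint subtleties (making sure every adjoint taken is of an even-in-$J$ operator so that \eqref{eq:166} rather than \eqref{eq:165} applies), and verifying that the $\omega_\rho$-terms genuinely cancel so that the condition depends only on $u$ and $D$ through $\omega(u)$ and not on the background fluctuation $\omega_\rho$.
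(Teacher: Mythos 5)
Your overall strategy is the paper's: reduce self-adjointness of $D_{\omega_\rho^u}=\rho(\Ad(u))D_{\omega_\rho}\Ad(u)^*$ to a commutation condition for $D_{\omega_\rho}$ with an operator built from $\mathfrak u=\rho(u)^*u$, then expand using the order-zero and twisted first-order conditions; and your second half (running the argument off the right-hand side of \eqref{eq:19} to get the form $u[D,\rho(u)^*u]_\rho u^*$, the two conditions being equivalent because they characterize the same property) is exactly what the paper does. But there is a genuine gap at the very first step: you assert that $\rho(V)=\Ad(\rho(u))$ is unitary "by the same lemma". It is not. Since $\rho$ is not a $*$-automorphism but instead satisfies $\rho(a)^*=\rho^{-1}(a^*)$ (equation \eqref{eq:170}), the element $\rho(u)$ is not unitary, and one has $\rho(\Ad(u))^*=\rho^{-1}(u^*)J\rho^{-1}(u^*)J^{-1}=\rho^{-1}(\Ad(u)^*)\neq \rho(\Ad(u))^{-1}$ in general. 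Consequently $D_{\omega_\rho^u}^*=V\,D_{\omega_\rho}\,\rho^{-1}(V)^{-1}$, not $V\,D_{\omega_\rho}\,\rho(V)^{-1}$, and the self-adjointness condition is \emph{not} the untwisted commutation $[D_{\omega_\rho},W]=0$ with $W=V^{-1}\rho(V)$. The correct reformulation — the paper's \eqref{eq:167}--\eqref{eq:172} — uses precisely the identity $\rho\!\left(\rho(\Ad(u))^*\Ad(u)\right)=\Ad(u)^*\rho(\Ad(u))$ to turn the condition into the vanishing of the \emph{twisted} commutator $[D_{\omega_\rho},\rho(\Ad(u))^*\Ad(u)]_\rho$. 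This is not bookkeeping: expanding an untwisted commutator $[D,\Ad(u^*\rho(u))]$ would produce untwisted commutators $[D,\cdot]$ (which need not even be bounded in a twisted triple) and would not land on \eqref{eq:177}; only the twisted Leibniz rule applied to $[D,\mathfrak u J\mathfrak u J^{-1}]_\rho$ yields $\omega(\mathfrak u)+\epsilon'J\omega(\mathfrak u)J^{-1}$ with $\omega(\mathfrak u)=J\rho(\mathfrak u)J^{-1}[D,\mathfrak u]_\rho=u^\circ[D,\rho(u)^*u]_\rho\,u^{*\circ}$ (the last equality using $\rho(\mathfrak u)=u^*\rho(u)$ and the twisted first-order condition).

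The same misunderstanding surfaces in your subsidiary claims: $V$ and $\rho(V)$ need not commute (order-zero gives $[u,J\rho(u)J^{-1}]=0$, not $[u,\rho(u)]=0$); $u^*\rho(u)$ is not unitary (its adjoint involves $\rho^{-1}(u^*)$, not $\rho(u)^*u$ inverted); and your intermediate expression $w^{*\circ}[D,w]_\rho w^\circ$ with $w^{*\circ}=JwJ^{-1}$ does not reduce to $u^\circ[D,\rho(u)^*u]_\rho u^{*\circ}$. Once you replace the unitarity of $\rho(\Ad(u))$ by the identity $\rho(\Ad(u))^*=\rho^{-1}(\Ad(u)^*)$ and carry the twist through, your outline becomes the paper's proof.
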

\begin{proof}
We write $D=D_{\omega_\rho}$, taken to be self-adjoint. Then 
$D_{\omega_\rho^u}=\rho(\Ad(u))D\Ad(u)^*$ is self-adjoint by Lemma
  \ref{lem:unitaryad} if and only if 
  $\rho(\Ad(u))D\Ad(u)^* = \Ad(u) D\rho(\Ad(u))^*$ or, equivalently
  \begin{equation}
\Ad(u)^*\rho(\Ad(u))D = D\rho(\Ad(u))^*\Ad(u).
\label{eq:99}
\end{equation}
By \eqref{eq:170} and \eqref{eq:12},
\begin{equation}
  \label{eq:167}
  \rho(\Ad(u))^*= \rho(u)^*J\rho(u)^*J^{-1}=\rho^{-1}(u^*) \,J\,
  \rho^{-1}(u^*) J^{-1}=\rho^{-1}(\Ad(u)^*),
\end{equation}
so that $\rho\left(\rho(\Ad(u))^*\Ad(u)\right)=
\Ad(u)^*\rho(\Ad(u))$. Hence condition \eqref{eq:99}
becomes
\begin{equation}
  \label{eq:172}
  [D, \rho(\Ad(u))^*\Ad(u)]_\rho=0.
\end{equation}
By the order zero condition, one has
\begin{align}
\rho(\Ad(u))^*\Ad(u) = \rho(u)^*J\rho(u)^*J^{-1}\,uJuJ^{-1} = \rho(u)^*u
\,J\rho(u)^*uJ^{-1}= \mathfrak{u} J\mathfrak{u} J^{-1}
\label{eq:173}
\end{align}
where $\mathfrak{u} := \rho(u)^*u$. Therefore
\begin{align}
    [D, \rho(\Ad(u))^*\Ad(u)]_\rho&=  [D, \mathfrak{u} J\mathfrak{u}
  J^{-1}]_\rho \nn \\
&=\rho(\mathfrak{u}) [D, J\mathfrak{u} J^{-1}]_\rho + [D, \mathfrak{u}]_\rho J\mathfrak{u} J^{-1}
  \label{eq:174bis}
=\epsilon' J\omega(\mathfrak{u}) J^{-1}  +\omega(\mathfrak{u}),
\end{align}
with
\begin{equation}
  \label{eq:181}
  \omega(\mathfrak{u}) := J\rho(\mathfrak{u}) J^{-1}[D, \mathfrak{u}]_\rho,
\end{equation}
where we used the twisted first-order condition  as well as
\begin{align*}
\nonumber
  \rho(\mathfrak{u}) [D, J\mathfrak{u} J^{-1}]_\rho &= J^{-1}\left(J\rho(\mathfrak{u})J^{-1}\, J [D, J\mathfrak{u}  J^{-1}]_\rho\right) \\ &= \epsilon' J^{-1}\left(J\rho(\mathfrak{u})J^{-1}\,\left( D J^2\mathfrak{u}-  J^2\mathfrak \rho(\mathfrak{u}) D\right)\right) J^{-1},\\
&= \epsilon' \epsilon'' J^{-1}\left(J\rho(\mathfrak{u})J^{-1}[D,\mathfrak
  u]_\rho\right) J^{-1} 
= \epsilon' J\left(J\rho(\mathfrak
  u)J^{-1}[D,\mathfrak{u}]_\rho\right) J^{-1}=\epsilon' J \omega(\mathfrak{u}) J^{-1}.
\end{align*}
The first part of the proposition follows from \eqref{eq:174bis}, noticing that
\begin{equation}
\rho(\mathfrak{u})= \rho(\rho(u)^*u) =
\rho(\rho(u)^*)\rho(u) = u^*\rho(u),
\label{eq:127}
\end{equation}
so that
\begin{align*}
\omega(\mathfrak{u}) = J u^*\rho(u)J^{-1}[D,
\rho(u)^*u]_\rho = Ju^*J^{-1} [D, \rho(u)^*u]_\rho JuJ^{-1}  = u^\circ\, [D, \rho(u)^*u]_\rho \,{u^*}^\circ.
\end{align*}

The second part of the proposition is obtained turning back to the
definition of $D_{\omega_\rho^u}$, that is the right hand
side of \eqref{eq:19}. One has that $D_{\omega_\rho^u}$ is self-adjoint if and only if 
\begin{equation}
  \label{eq:169}
  \left(\omega_\rho^u - (\omega_\rho^u)^*\right) + \epsilon' J\left(   \omega_\rho^u - (\omega_\rho^u)^*\right) J^{-1}=0.
\end{equation}
By hypothesis $D_{\omega_\rho}=D+\omega_\rho + J\omega_\rho J^{-1}$ is
self-adjoint, so that $(\omega_\rho-\omega_\rho^*)+\epsilon' J\left(\omega_\rho- \omega_\rho^*\right)J^{-1}=0$. Therefore,
  from the definition \eqref{eq:148} of $\omega_\rho^u$, equation \eqref{eq:169} becomes
$ \omega(u) + \epsilon' J \omega(u) J^{-1}=0$
with 
\begin{equation}
\omega(u):=  \rho(u)[D,u^*]_\rho -(\rho(u)[D,u^*]_\rho)^*.
\label{eq:174}
\end{equation}
The result follows remembering that $\rho(u)[D,u^*]_\rho = \rho(u)Du^*
- D$, so that 
\begin{align}
  \label{eq:183}
\omega(u) = \rho(u)Du^*- uD \rho(u)^* &=u \left(u^*\rho(u)D-
                                        D\rho(u)^*u \right)u^*=-u [D, \rho(u)^*u]_\rho\,u^*,
\end{align}
where we used \eqref{eq:127}.
\end{proof}
\begin{rem}
\textup{
  One may check directly the equivalence of the two choices for
  $\omega(u)$ in
  \eqref{eq:177}. Writing
$ \omega:= [D,\rho(u)^*u]_\rho$,
 one gets that for $\omega(u)=  u^\circ \,[D,
  \rho(u)^*u]_\rho \,{u^*}^\circ=  u^\circ\omega\,{u^*}^\circ$, equation \eqref{eq:178bis} is equivalent to 
  \begin{equation}
    \label{eq:171}
    \omega= -\epsilon' (u^\circ)^* \left(J u^\circ \omega\,
      {u^*}^\circ\, J^{-1} \right)(u^{*\circ})^*=-\epsilon' \,u^*Ju \; \omega \;
    u^* J^{-1}u, 
  \end{equation}
where we use that  $u^\circ$ is unitary, with $(u^\circ)^*=
(u^*)^\circ$ so that $(u^{*\circ})^* = u^\circ$, as well as
\begin{equation}
(u^\circ)^* J u^\circ= JuJ^{-1}J Ju^*J^{-1}= \epsilon''
JuJ^{-1}u^*J^{-1}= \epsilon''
u^*J u(J^{-1})^2= u^*Ju,
\label{eq:175}
\end{equation}
and similarly ${u^*}^\circ\, J^{-1} \,u^\circ= u^*
J^{-1}u$. On the other hand, for $\omega(u)=  u\,[D,
  \rho(u)^*u]_\rho \,u^*=u\omega u^*$, equation \eqref{eq:178bis} is equivalent to 
  \begin{equation}
    \label{eq:185}
    \omega = -\epsilon'\, u^* J u \; \omega \; u^* J^{-1} u,
  \end{equation}
which is precisely the r.h.s. of \eqref{eq:171}.
}
\end{rem}
An obvious solution to \eqref{eq:178bis} is that $\rho(u)^*u$
twist-commutes with $D$. This happens in particular when $u$ is
invariant under the twist, $\rho(u)=u$, so that $\rho(u)^*u=\I$.
An extensive study of
\eqref{eq:178bis} and its solutions will be undertaken elsewhere. Here, we just solve it
in the simple example of the minimal twist of manifold.

\subsection{Minimal twist of a manifold}

The minimal twist of a closed spin manifold $\M$ of even dimension $2m$,
$m\in\N$, has been defined in \cite{Landi:2015aa} as the real, graded,
twisted spectral triple 
 \begin{equation}
   \label{eq:184}
\A= \cinf\otimes\C^2, \quad \HH= L^2(\M,S), \quad  D=\ds , \quad J,  \quad \rho
 \end{equation}
where $\cinf$ is the algebra of smooth functions on $\M$, 
the Hilbert space $L^2(\M,S)$ is that of square integrable spinors,
with usual Dirac operator 
\begin{equation}
\ds=-\ii \sum\nolimits_{\mu=1}^{2m}
\gamma^\mu\nabla_\mu\quad \text{ and }\quad\nabla_\mu = \partial_\mu + \omega^\mu
\label{eq:193}
\end{equation}
($\gamma^\mu$ are the Dirac matrices of size $2^m$, $\omega^\mu$
is the spin connection), 
the real structure $J$ is the charge conjugation operator composed with complex conjugation, and the automorphism $\rho$ 
\begin{equation}
\rho(f,g) = (g,f) \quad\quad \forall 
(f,g)\in \A\simeq \cinf\oplus\cinf .
\label{eq:187}
\end{equation}
is the flip. The grading $\Gamma$ (the product of all the Dirac matrices) splits $\HH$  in two
orthogonal subspaces $\HH_{\pm}$, on which each copy
of $\cinf$ acts independently (by point-wise multiplication). The representation $\pi$ of $\A$
on  $\HH = \HH_+ \oplus \HH_-$ is 
\begin{equation}
  \label{eq:186}
  \pi(a)=\left(\begin{array}{cc} f \I_{2^{m-1}}& 0 \\ 0 & g
                                                     \I_{2^{m-1}}\end{array}\right)\quad
                                                 \forall
                                                 a=(f,g)\in\A \text{
                                                   with } f,g\in\cinf.
\end{equation}
Finally, the $KO$-dimension
of the twisted spectral triple \eqref{eq:184} is
$2 m\;\text{mod} \, 8$. 

\begin{prop}
\label{prop:minimaltwist}
  In KO-dim $=0,4$, any unitary of $\A$ is a solution of
  \eqref{eq:178bis}.  On the other hand, in KO-dim $=2,6$, the only
  solutions are the trivial one $\omega(u)=0$.
\end{prop}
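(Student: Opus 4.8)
The plan is to reduce condition \eqref{eq:178bis} to a purely algebraic statement about the finite part $\C^2$ of the algebra, by working out explicitly $\omega(u)$ for a unitary $u=(u_1,u_2)\in\cinf\otimes\C^2$. First I would compute $\mathfrak u := \rho(u)^*u$; by the flip \eqref{eq:187} one has $\rho(u)=(u_2,u_1)$, so $\mathfrak u = (\bar u_2 u_1,\bar u_1 u_2)$, which is a smooth $U(1)$-valued function in each of the two copies of $\cinf$, and in particular $\mathfrak u^*\mathfrak u=\I$ and $\rho(\mathfrak u)=\mathfrak u^*$ (consistent with \eqref{eq:127}). Then I would compute the twisted commutator $[D,\mathfrak u]_\rho = D\mathfrak u - \rho(\mathfrak u)D = D\mathfrak u - \mathfrak u^* D$. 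Since $D=\ds$ acts diagonally block-wise while $\rho$ swaps the two blocks, the off-diagonal structure is what matters: writing things in the $\HH_+\oplus\HH_-$ decomposition, $[D,\mathfrak u]_\rho$ is a first-order differential operator whose coefficients are the $\gamma^\mu\partial_\mu(\bar u_j u_k)$, and crucially it mixes the two chirality subspaces because $\ds$ anticommutes with $\Gamma$.

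Next I would use the first choice $\omega(u)=u^\circ[D,\rho(u)^*u]_\rho\,{u^*}^\circ$ from \eqref{eq:177} and plug it into \eqref{eq:178bis}. The left side involves $J\omega(u)J^{-1}$; I would use that $J$ is charge conjugation times complex conjugation and that, for the minimal twist, $J$ commutes or anticommutes with the relevant building blocks in a way controlled by the $KO$-dimension signs $\epsilon,\epsilon',\epsilon''$. The point is that $J$ intertwines the left and right actions and acts on the Clifford/spinor part; conjugating $[D,\mathfrak u]_\rho$ by $J$ returns (up to the sign $\epsilon'$, which is exactly what appears on the right of \eqref{eq:178bis}) an operator of the same type but with $\bar u_j u_k$ replaced by its complex conjugate. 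So \eqref{eq:178bis} collapses to an identity between the differential-operator coefficients and their conjugates — i.e. a reality condition on the one-forms $du_j$, $d\bar u_j$. Because $u_j$ are arbitrary smooth $U(1)$-valued functions, $du_j$ and $d\bar u_j = -\overline{u_j}^{\,2}\,du_j$ are genuinely independent-looking quantities pointwise, so the identity can hold for all $u$ only if a sign works out; and it will work out (be automatically satisfied) precisely when $\epsilon'$ takes one value and fail generically when it takes the other. Matching $\epsilon'=+1$ to $KO$-dim $0$ and $\epsilon'=-1$ to $KO$-dim $2,6$ (with $4$ going with $0$ since there $\epsilon'=+1$ as well) gives the dichotomy in the statement. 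In the $2,6$ case the only surviving possibility is $\omega(u)=0$, i.e. $[D,\mathfrak u]_\rho=0$, which for the flip on a connected manifold forces $\bar u_j u_k$ locally constant.

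The main obstacle I expect is bookkeeping the signs and the action of $J$ carefully: one must track how $J=$ (charge conjugation)$\circ$(complex conjugation) interacts with the Dirac matrices $\gamma^\mu$, with the grading $\Gamma$, and with the flip $\rho$ simultaneously, and get the $\epsilon$, $\epsilon'$, $\epsilon''$ signs of each $KO$-dimension right — this is exactly the kind of computation where the non-trivial solutions (beyond the obvious twist-invariant unitaries) either appear or do not. A secondary subtlety is checking that in $KO$-dim $0,4$ the identity \eqref{eq:178bis} really holds for \emph{every} unitary and not merely for a subclass: here I would use the second form $\omega(u)=u[D,\rho(u)^*u]_\rho u^*$ together with the remark after Proposition~\ref{prop:conditionsselfadjoint} (equation \eqref{eq:185}) to cross-check, since the two forms are equivalent and one of them may make the cancellation manifest. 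Once the sign analysis is done the conclusion is immediate: self-adjointness is automatic in $KO$-dim $0,4$, and in $KO$-dim $2,6$ it forces $\omega(u)=0$.
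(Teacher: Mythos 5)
Your computational strategy has the same shape as the paper's: compute $\mathfrak u:=\rho(u)^*u$, which for $u=(e^{\ii\theta_1},e^{\ii\theta_2})$ is $\mathrm{diag}(e^{\ii\varphi},e^{-\ii\varphi})$ with $\varphi=\theta_1-\theta_2$ real, evaluate the twisted commutator $[\ds,\mathfrak u]_\rho=-\ii\gamma^\mu\partial_\mu\mathfrak u$, conjugate by $J$, and compare signs. But the mechanism you propose for the dichotomy is wrong. You want to match $\epsilon'=+1$ to $KO$-dimension $0,4$ and $\epsilon'=-1$ to $KO$-dimension $2,6$; in fact $\epsilon'=+1$ in \emph{all} even $KO$-dimensions (the paper notes this explicitly at the end of its proof), so condition \eqref{eq:178bis} reads $J\omega(u)J^{-1}=-\omega(u)$ in every case and nothing can be decided from $\epsilon'$. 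The actual source of the dichotomy is how conjugation by $J$ acts on the algebra $\A\simeq\cinf\oplus\cinf$ represented block-diagonally on $\HH_+\oplus\HH_-$: when $J$ commutes with the grading ($\epsilon''=+1$, $KO$-dim $0,4$) it preserves $\HH_\pm$ and conjugation is entrywise complex conjugation with no swap of the two components ($JuJ^{-1}=u^*$, \cite[Lem.~5.1--5.2]{Landi:2015aa}, quoted as \eqref{eq:194}); when it anticommutes ($\epsilon''=-1$, $KO$-dim $2,6$) it exchanges $\HH_\pm$ and conjugation carries an extra flip $\rho$. Since $u[\ds,\mathfrak u]_\rho u^*=-\ii\gamma^\mu\,\mathrm{diag}(\ii\partial_\mu\varphi,-\ii\partial_\mu\varphi)$ has purely imaginary, mutually opposite diagonal entries ($\varphi$ real), entrywise conjugation gives exactly $-\omega(u)$ — so \eqref{eq:178bis} holds for \emph{every} unitary in $KO$-dim $0,4$ — whereas conjugation together with the flip returns $+\omega(u)$, forcing $\omega(u)=0$ in $KO$-dim $2,6$.

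Two further points. First, your genericity argument (that $du_j$ and $d\bar u_j$ are independent pointwise) is not what is needed: in $KO$-dim $0,4$ the identity is automatic for all $u$ precisely because of the special structure $\rho(\mathfrak u)=\mathfrak u^*$ you already observed, which makes the relevant coefficients real multiples of $\pm\ii\partial_\mu\varphi$; no genericity enters, and in $KO$-dim $2,6$ the condition directly forces $\partial_\mu\varphi=0$, i.e.\ $\theta_1-\theta_2$ constant. Second, you will need the intertwining relation $\gamma^\mu u=\rho(u)\gamma^\mu$ (equation \eqref{eq:195}) to move the outer $u$, $u^*$ past the Dirac matrices before conjugating by $J$; without it the expression $u[\ds,\mathfrak u]_\rho u^*$ does not reduce to the diagonal form above, and the sign comparison cannot be carried out. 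As written, carrying out your plan would either stall when you find $\epsilon'=1$ in all four $KO$-dimensions, or assign the two cases incorrectly.
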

\begin{proof}
  A unitary $u$ of $\A$ is (omitting the representation symbol and
  the identity operator)
  \begin{equation}
    u=\left(\begin{array}{cc} e^{\ii\theta_1}& 0 \\ 0 &
                                                             e^{\ii\theta_2} \end{array}\right),
  \end{equation}
where $\theta_1, \theta_2$ are smooth real functions on
$\M$. Hence
\begin{equation}
  \label{eq:189}
  \rho(u)^*u= \left(\begin{array}{cc} e^{-\ii \theta_2}& 0 \\ 0 &
                                                             e^{-\ii \theta_1} \end{array}\right)\,\left(\begin{array}{cc} e^{\ii\theta_1}& 0 \\ 0 &
                                                             e^{\ii\theta_2} \end{array}\right)=\left(\begin{array}{cc} e^{\ii \varphi}& 0 \\ 0 &
                                                             e^{-\ii \varphi} \end{array}\right)
                                                         \quad
                                                         \text{with} \quad\varphi:=\theta_1-\theta_2.
\end{equation}
For any $a\in\A$, one has \cite [eq. (5.9)]{Landi:2015aa}
\begin{equation}
  \label{eq:182}
   [\ds, a]_\rho= -\ii \gamma^\mu (\partial_\mu a),
\end{equation}
so that
\begin{equation}
  \label{eq:197}
   [\ds, \rho(u)^*u]_\rho =-\ii \gamma^\mu\left(
  \begin{array}{cc}
    \partial_\mu e^{\ii\varphi}& 0 \\
    0 &\partial_\mu e^{-\ii \varphi}
  \end{array}\right)=-\ii \gamma^\mu\left(
        \begin{array}{cc} 
          \ii (\partial_\mu\varphi) e^{\ii\varphi}& 0 \\
          0 & -\ii (\partial_\mu\varphi) e^{-\ii \varphi}
        \end{array}\right).
\end{equation}
In addition \cite[eq.(5.10)]{Landi:2015aa},
\begin{equation}
  \label{eq:195}
\gamma^\mu u =\rho(u)\gamma^\mu,
\end{equation}
so by an easy calculation
\begin{equation}
  \label{eq:192}
  u  [\ds, \rho(u)^*u]_\rho u^* = -\ii  \gamma^\mu\left(\begin{array}{cc}  \ii \partial_\mu\varphi& 0 \\ 0 &
                                                             -\ii \partial_\mu\varphi\end{array}\right).
\end{equation}
Furthermore, by \cite[Lem. 5.2]{Landi:2015aa}, for $\omega_\rho= -\ii \gamma^\mu
\rho(a)\partial_\mu b$, one has
\begin{equation}
  \label{eq:194}
  J \omega_\rho J^{-1} = \left\{\begin{array}{ll}
-\ii \gamma^\mu \rho(a^*)\partial_\mu b^* & \text{ if }
                                                KO\mbox{-dim} =0,4,\\ \\
-\ii \gamma^\mu a^*\partial_\mu \rho(b^*)& \text{ if }
                                                KO\mbox{-dim} =2,6.
\end{array}\right.
\end{equation}

\noindent
Therefore, for $KO$-dim $=0,4$ one obtains (remembering that $\varphi$
is a real function)
\begin{align}
\nonumber
   J  u  [\ds, \rho(u)^*u]_\rho u^*  J^{-1} & =
   -\ii \gamma^\mu\left(\begin{array}{cc}  \partial_\mu(\ii\varphi)^*& 0 \\ 0 &
                                                             \partial_\mu(-\ii\varphi)^*\end{array}\right)
                                                                         =  -\ii \gamma^\mu\left(\begin{array}{cc} -\ii \partial_\mu\varphi& 0 \\ 0 &
                                                             \ii \partial_\mu\varphi\end{array}\right)
  \\
\label{eq:resmintwist04}
                                                                        & =-  u  [\ds, \rho(u)^*u]_\rho u^*,
\end{align}

\noindent
whereas for $KO$-dim $=2,6$ one has
\begin{align}
\nonumber
   J  u  [\ds, \rho(u)^*u]_\rho u^*  J^{-1} & = -\ii \gamma^\mu\left(\begin{array}{cc} \partial_\mu(-\ii\varphi)^*& 0 \\ 0 &
                                                              \partial_\mu(\ii\varphi)^*\end{array}\right)=-\ii \gamma^\mu\left(\begin{array}{cc} \ii \partial_\mu\varphi& 0 \\ 0 &
                                                             -\ii \partial_\mu\varphi\end{array}\right)
  \\
\label{eq:resmintwist26}
                                                                        & = u [\ds, \rho(u)^*u]_\rho u^*.
\end{align}
The result follows noticing that in even dimension one has the sign $\epsilon'=1$, so that 
 \eqref{eq:resmintwist04} is solution to \eqref{eq:178bis} for any
 $u$, while \eqref{eq:resmintwist26} is solution only when $\omega(u)=0$. \end{proof}

This simple example exhibits two
interesting cases: the unitaries that preserve the
self-adjointness of the Dirac operator are either the whole group
$C^\infty(\M,U(1)\times U(1))$ of unitaries
of $\A$, or the trivial solution to \eqref{eq:178bis}. Intriguingly, the group $C^\infty(\M,U(1))$ of unitaries which are
invariant under the twist is of no particular
importance. 

To understand why this is the case, recall from \cite[Lemma 5.1]{Landi:2015aa} that in $KO$-dimension
$0,4$, one has $JuJ^{-1}=u^*$, so that $\Ad(u)=\I$.
Therefore, the Dirac
operator is invariant under any gauge transformation, no matter whether
$u$ is invariant under the twist or not.  Moreover, the fact that the
action of $\Ad(u)$ is trivial indicates that the twisted
fluctuations, are \emph{not} generated by the action of a unitary. This fact can be checked
explicitly, computing
$\omega_\rho= \rho(a)[D,a']_\rho$
for $a=(f,g)$, $a'=(f',g')$: one gets from \eqref{eq:182} and \eqref{eq:195}
\begin{equation}
\omega_\rho=\rho(a)[D,a']_\rho= -\ii \gamma^\mu a\,\partial_\mu a' = -
\ii \gamma^\mu\left(\begin{array}{cc} f\, \partial_\mu f'&
                                                              0\\ 0 &
                                                             g\, \partial_\mu
                                                                      g'\end{array}\right),
\label{eq:198}
\end{equation} and by \eqref{eq:194}
\begin{equation}
J\omega_\rho J^{-1} = -\ii \gamma^\mu a^*\partial_\mu a'^*= -
\ii \gamma^\mu\left(\begin{array}{cc}  \bar f\,\partial_\mu \bar f'&0\\ 0 &
                                                             \bar
                                                                            g\,\partial_\mu\bar g'\end{array}\right),
\label{eq:198bis}
\end{equation}
so that
\begin{equation}
\omega_\rho + J\omega_\rho J^{-1} =-\ii \gamma^\mu\left(\begin{array}{cc}  f_\mu&0\\ 0 &
                                                           g_\mu\end{array}\right)
\label{eq:199}
\end{equation}
with $f_\mu =  f\, \partial_\mu f' +\bar f\,\partial_\mu \bar f'$ and 
$g_\mu =  g\, \partial_\mu  g' + \bar g\,\partial_\mu \bar g'$ real
function on $\M$. The r.h.s. of \eqref{eq:199} is self-adjoint if and only if  
\begin{align}
  \label{eq:200}
0&= -\ii \gamma^\mu\left(\begin{array}{cc}  f_\mu&0\\ 0 &
                                                           g_\mu\end{array}\right)
                                                       - \left(-\ii \gamma^\mu\left(\begin{array}{cc}  f_\mu&0\\ 0 &
                                                           g_\mu\end{array}\right)\right)^* \nn \\
                                                       &=-\ii \gamma^\mu\left(\begin{array}{cc}  f_\mu&0\\ 0 &
                                                           g_\mu\end{array}\right)
                                                       - \ii \gamma^\mu\left(\begin{array}{cc}  g_\mu&0\\ 0 &
                                                           f_\mu\end{array}\right)
\nn \\
&=-\ii (f_\mu+ g_\mu)\gamma^\mu,
\end{align}
that is if and only if  $f_\mu=-g_\mu$. In that case, \eqref{eq:199} yields 
$\ds_{\omega_{\rho}}= \ds - \ii f_\mu\gamma^\mu \Gamma$, as already shown
in \cite{Landi:2015aa}.
The point is that such a fluctuation cannot be obtained with $a=u$ a
unitary and $a'=u^*$, that is for $f=e^{\ii \theta_1}$,
$g=e^{\ii \theta_2}$, $f'=e^{-i\theta_1}$, $g'=e^{-i\theta_2}$, since
this would give $f_\mu = g_\mu =0$. 

In $KO$-dimension $2,6$, one has that $\omega(u)=0$ if and only if 
\begin{equation}
[\ds,\rho(u)^*u]_\rho=0.\label{eq:206}
\end{equation}
By \eqref{eq:197} this mean that $u=(e^{\ii \theta_1}, e^{\ii \theta_2})$
with $\theta_1-\theta_2$ a constant function. Notice that this is a
bigger set than the unitaries invariant under the twist (for which the
constant is zero). However, in any case such unitaries do not generate
a fluctuation. Indeed, $\omega_\rho$ is still given by
\eqref{eq:198}, but 
\begin{equation}
  \label{eq:205}
  J\omega_\rho J^{-1} = -
\ii \gamma^\mu\left(\begin{array}{cc}  \bar g\,\partial_\mu \bar g'&0\\ 0 &
                                                             \bar
                                                                            f\,\partial_\mu\bar f'\end{array}\right).
\end{equation}
Thus $\omega_\rho + J\omega_\rho J^{-1}$ is given by \eqref{eq:199}
with
\begin{equation}
f_\mu =  f\, \partial_\mu f' +\bar g\,\partial_\mu \bar g',\qquad 
g_\mu=\bar f_\mu
.\label{eq:208}
\end{equation}
With $f=e^{\ii \theta_1}$, $g=e^{\ii \theta_2}$,
$f'=e^{-i\theta_1}$, $g=e^{-i\theta_2}$, one gets $f_\mu =
\ii \, \partial_\mu(\theta_1-\theta_2)$, which vanishes when
$\theta_1-\theta_2$ is constant. More generally, one finds back the
result of \cite{Landi:2015aa} noticing that for arbitrary $f,f'$ and $g,g'$,
a computation similar to \eqref{eq:200} yields that 
$\omega_\rho + J\omega_\rho J^{-1}$ is self-adjoint if and only if  $f_\mu=g_\mu
=0$.

To summarize, one has the following result.
\begin{prop}
 In  $KO$-dimension
$0,4$, the operator $\ds$ has non-zero twisted self-adjoint fluctuations given by
\begin{equation}
\ds_{\omega_\rho}= \ds -\ii f_\mu\gamma^\mu\Gamma , \qquad f_\mu \in C^{\infty}(\M, \R).
\label{eq:204}
\end{equation}
They are invariant under a gauge transformation, but are not
generated by the action of unitaries. 
In  $KO$-dimension $2,6$, there is no non-zero self-adjoint fluctuations. 
\end{prop}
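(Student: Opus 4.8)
The plan is to assemble this final Proposition directly from the computations already carried out in the two cases $KO$-dimension $0,4$ and $KO$-dimension $2,6$, so that the ``proof'' is essentially a matter of collecting and restating results. First I would treat the $KO$-dimension $0,4$ case: by Proposition~\ref{prop:minimaltwist}, every unitary $u\in\A$ solves \eqref{eq:178bis}, but — as shown in the discussion of \eqref{eq:198}--\eqref{eq:200} — the general self-adjoint twisted fluctuation $\omega_\rho+J\omega_\rho J^{-1}$ has the form $-\ii\gamma^\mu\,\mathrm{diag}(f_\mu,g_\mu)$ with $f_\mu,g_\mu$ real, and self-adjointness forces $f_\mu=-g_\mu$. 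Substituting this back into \eqref{eq:199} and using that the diagonal matrix $\mathrm{diag}(f_\mu,-f_\mu)$ is (up to scalar) the grading $\Gamma$ acting within each chirality block, one reads off $\ds_{\omega_\rho}=\ds-\ii f_\mu\gamma^\mu\Gamma$ with $f_\mu\in C^\infty(\M,\R)$, which is exactly \eqref{eq:204}. That these fluctuations are non-zero is clear (pick any non-constant $f_\mu$), and their gauge-invariance follows from the remark after the proof of Proposition~\ref{prop:minimaltwist} that $\Ad(u)=\I$ in $KO$-dimension $0,4$ (\cite[Lemma 5.1]{Landi:2015aa}), so that $D_{\omega_\rho^u}=\rho(\Ad(u))D_{\omega_\rho}\Ad(u)^*=D_{\omega_\rho}$ by Proposition~\ref{prop:twistedunitaries}; the fact that they are \emph{not} generated by unitaries is the observation already made that the choice $a=u$, $a'=u^*$ forces $f_\mu=g_\mu=0$.

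Next I would dispatch the $KO$-dimension $2,6$ case. Here Lemma~5.2 of \cite{Landi:2015aa}, as recalled in \eqref{eq:194}, gives $J\omega_\rho J^{-1}=-\ii\gamma^\mu a^*\partial_\mu\rho(b^*)$, so that for $\omega_\rho=\rho(a)[D,a']_\rho$ with $a=(f,g)$, $a'=(f',g')$ one obtains \eqref{eq:205}, hence $\omega_\rho+J\omega_\rho J^{-1}$ is of the form \eqref{eq:199} with $f_\mu=f\,\partial_\mu f'+\bar g\,\partial_\mu\bar g'$ and $g_\mu=\bar f_\mu$. Running the same self-adjointness computation as in \eqref{eq:200} — comparing $-\ii\gamma^\mu\,\mathrm{diag}(f_\mu,g_\mu)$ with its adjoint $-\ii\gamma^\mu\,\mathrm{diag}(\bar g_\mu,\bar f_\mu)=-\ii\gamma^\mu\,\mathrm{diag}(f_\mu,g_\mu)$ after using $g_\mu=\bar f_\mu$ — one finds that the operator is automatically \emph{skew}-adjoint unless $f_\mu=0$, i.e. self-adjointness forces $f_\mu=g_\mu=0$ and hence $\omega_\rho+J\omega_\rho J^{-1}=0$. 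This is the assertion that there are no non-zero self-adjoint fluctuations in $KO$-dimension $2,6$.

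Putting the two cases together yields the Proposition. The only subtlety worth flagging is the identification in the $0,4$ case of $-\ii\gamma^\mu\,\mathrm{diag}(f_\mu,-f_\mu)$ with $-\ii f_\mu\gamma^\mu\Gamma$: since $\Gamma$ is the product of all Dirac matrices and acts as $+\I$ on $\HH_+$ and $-\I$ on $\HH_-$ (the two blocks in \eqref{eq:186}), one has $\mathrm{diag}(f_\mu,-f_\mu)=f_\mu\Gamma$ on each fibre, and $\gamma^\mu$ anticommutes with $\Gamma$ only in the full spinor space — but within the minimal-twist representation the relevant bookkeeping is the $2\times 2$ chirality grading, so this identification is immediate from \eqref{eq:199} once one is careful about which $\Gamma$ is meant. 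I would state this identification explicitly and then simply cite \cite{Landi:2015aa} for the observation that \eqref{eq:199} with $f_\mu=-g_\mu$ equals $\ds-\ii f_\mu\gamma^\mu\Gamma$, rather than reproving it. I do not anticipate a genuine obstacle here; the main point is to organize the already-established facts cleanly, and the one place demanding care is keeping straight, in the $2,6$ case, that the constraint $g_\mu=\bar f_\mu$ combined with self-adjointness collapses everything to zero rather than leaving a one-parameter family as in the $0,4$ case.

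\begin{proof}
This is a direct consequence of Proposition~\ref{prop:minimaltwist} together with the explicit computations \eqref{eq:198}--\eqref{eq:200} and \eqref{eq:205}--\eqref{eq:208} above.

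In $KO$-dimension $0,4$, Proposition~\ref{prop:minimaltwist} ensures that every unitary of $\A$ solves \eqref{eq:178bis}, while the self-adjoint twisted fluctuations of $\ds$ are, by \eqref{eq:199} and the self-adjointness condition $f_\mu=-g_\mu$ derived in \eqref{eq:200}, precisely the operators $\ds_{\omega_\rho}=\ds-\ii f_\mu\gamma^\mu\Gamma$ with $f_\mu\in C^\infty(\M,\R)$, since on $\HH=\HH_+\oplus\HH_-$ the diagonal matrix $\mathrm{diag}(f_\mu,-f_\mu)$ equals $f_\mu\Gamma$. For $f_\mu$ any non-constant real function these fluctuations are non-zero. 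They are gauge invariant: in $KO$-dimension $0,4$ one has $JuJ^{-1}=u^*$ by \cite[Lemma~5.1]{Landi:2015aa}, hence $\Ad(u)=\I$ and, by Proposition~\ref{prop:twistedunitaries}, $D_{\omega_\rho^u}=\rho(\Ad(u))\,D_{\omega_\rho}\,\Ad(u)^*=D_{\omega_\rho}$. Finally they are not generated by the action of unitaries, since taking $a=u=(e^{\ii\theta_1},e^{\ii\theta_2})$ and $a'=u^*=(e^{-\ii\theta_1},e^{-\ii\theta_2})$ in \eqref{eq:198}--\eqref{eq:199} gives $f_\mu=g_\mu=0$.

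In $KO$-dimension $2,6$, formula \eqref{eq:194} gives $J\omega_\rho J^{-1}=-\ii\gamma^\mu a^*\partial_\mu\rho(b^*)$, so that $\omega_\rho+J\omega_\rho J^{-1}$ has the form \eqref{eq:199} with $f_\mu=f\,\partial_\mu f'+\bar g\,\partial_\mu\bar g'$ and $g_\mu=\bar f_\mu$, as in \eqref{eq:208}. A computation identical to \eqref{eq:200}, now using $g_\mu=\bar f_\mu$, shows that this operator is self-adjoint if and only if $f_\mu=g_\mu=0$, that is if and only if $\omega_\rho+J\omega_\rho J^{-1}=0$. Hence there are no non-zero self-adjoint fluctuations.
\end{proof}
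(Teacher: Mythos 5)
Your proof is correct and follows essentially the same route as the paper, which presents this proposition precisely as a summary of the computations \eqref{eq:198}--\eqref{eq:200} and \eqref{eq:205}--\eqref{eq:208} together with the observation that $\Ad(u)=\I$ in $KO$-dimension $0,4$. The only blemish is a sign slip in your informal discussion of the adjoint in the $2,6$ case (the adjoint of $-\ii\gamma^\mu\,\mathrm{diag}(f_\mu,g_\mu)$ is $+\ii\gamma^\mu\,\mathrm{diag}(\bar g_\mu,\bar f_\mu)$, so with $g_\mu=\bar f_\mu$ the operator is anti-self-adjoint and self-adjointness forces it to vanish), but the conclusion you draw, and the proof block itself, are correct.
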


\appendix

\section{The non-twisted case} 


The material in this Appendix is well known and  taken mainly from
\cite{Connes:1996fu} and \cite{Connes-Marcolli}. 

\subsection{Fluctuations and Morita equivalence} \label{subsec:gauge-non-twist}

Recall that a finitely generated, projective (right, say) $\A$-module $\E$ is hermitian if it comes equipped 
with an $\A$-valued inner product, that is a sesquilinear map $\rhs{\cdot}{\cdot}: \E \times \E \to \A$
such that $\rhs{\xi}{\xi} \geq 0$ for any $\xi\in\E$, $(\rhs{\xi}{\eta})^*=\rhs{\eta}{\xi}$ and  
$\rhs{\xi a}{\eta b} = a^* \rhs{\xi}{\eta} \,b$, for all $\xi, \eta\in\E$ and $a,b\in\A$.
A similar notion goes for left-modules with a sesquilinear map $\lhs{\cdot}{\cdot}: \E \times \E \to \A$ which is now 
linear in the first entry (and anti-linear on the second). 
The module $\E$ is taken to be self-dual for the $\A$-valued hermitian structure \cite[Prop.~7.3]{ri10}, in the sense that for any 
$\varphi \in \Hom_{\A}(\E, \A)$ there exists a unique $\zeta_\varphi\in \E$ such that 
$\varphi(\xi)=\rhs{\zeta_\varphi}{\xi}$, for all $\xi\in\E$.

In the crudest version \cite{ri10}, the algebra $\B$ is
 Morita equivalent to the (unital) algebra $\A$ if there
exists a hermitian finite projective $\A$-module $\E$
such that $\B$ is isomorphic to the algebra $\End_\A(\E)$ 
of $\A$-linear endomorphisms of $\E$ which are adjointable (with respect to the hermitian structure of 
$\E$).
In particular an algebra is Morita equivalent to itself. 
In that case the module $\E$ can be taken to be the algebra itself, with 
hermitian map $\rhs{a}{b} = a^* b$ or $\lhs{a}{b} = a b^*$.

%

\subsubsection{Morita equivalence by right module}
\label{subsec:Moritarightnontwist}

Let us  assume that the module implementing the Morita equivalence
between $\A$ and $\B$ is a right $\A$-module $\E_R$ with $\A$-valued inner product $\rhs{\cdot}{\cdot}$. 
  The action  of $\B \simeq \End_\A(\E)$
on $\E_R$ is
not suitable to build a spectral triple, for $\E_R$ is not an
Hilbert space. However, the tensor product
\begin{equation}
\HH_R := \E_R\otimes_\A \HH
\label{eq:45}
\end{equation}
 is an
Hilbert space for the inner product \cite[p. 562]{Connes:1994kx}
\begin{equation}
  \label{eq:32}
\hs{\eta_1\ot \psi_1}{\eta_2\ot \psi_2}_{\HH_R} = 
\hs{\psi_1}{ \rhs{\eta_1}{ \eta_2} \psi_2}_{\HH}
  \qquad \forall \,\eta_1,\eta_2\in\E_R,\;
  \psi_1, \psi_2\in\HH,
\end{equation}
with $\langle \cdot, \cdot \rangle_\HH$ the inner
product of $\HH$. 
The action  of $\B \simeq \End_\A(\E)$ is then extended to $\HH_R$ as
\begin{equation}
\pi_R(b)(\eta\otimes \psi) := b\eta\otimes \psi \qquad \forall b\in{\cal
  B}.
\label{eq:44}
\end{equation}
To make $D$ act on $\HH_R$, the simplest guess, 
\begin{equation}
D_R(\eta\otimes
\psi) := \eta\otimes D\psi,
\label{eq:83}
\end{equation}
is not compatible with the tensor product
over $\A$ \cite[p. 204]{Connes-Marcolli}
since
\begin{equation}
D_R(\eta a\otimes \psi)   - D_R(\eta\otimes a\psi) = \eta a\otimes
D\psi -\eta\otimes
Da\psi= -\eta\otimes [D,a]\psi
\label{eq:34}
\end{equation}
has no reason to vanish.  To correct this, one uses the derivation
$\delta=[D,\cdot \, ]$ of $\A$ in the $\A$-bimodule $\Omega^1_D(\A)$ of $1$-forms as defined in \eqref{eq:71}. Since both $\Omega^1_D(\A)$ and $\A$ act on $\HH$
as bounded operators  in a compatible way \eqref{eq:129}, 
the r.h.s. of \eqref{eq:34}, viewed as $-(\eta\otimes \delta(a))\psi$, is made  zero by adding to $D_R$ an
$\Omega^1_D(\A)$-valued connection $\nabla$ on $\E$. One thus defines
the gauged operator  
\begin{equation}
  \label{eq:29}
  D_R(\eta\otimes \psi):= \eta\otimes D\psi +
  (\nabla\eta)\psi\quad\quad\forall \eta\in\E_R, \psi\in\HH,
\end{equation}
and checks by Proposition  \ref{Prop:lebnizright} that this is linear, since
\begin{align}
D_R(\eta a\otimes \psi)
- D_R(\eta\otimes a\psi)&=D_R(\eta a\otimes \psi - \eta\otimes a\psi)
+ \nabla(\eta a) \psi - (\nabla\eta)a\psi, \nonumber \\
&=-\eta \otimes [D,a]\psi + \eta\otimes \delta(a)\psi=0.
\label{eq:121}
\end{align}

If the right $\A$-module $\E_R$ is finite projective thus of the type $\E_R=p\A^N$ for some $N\in\N$, with $p$ a
self-adjoint matrix in $M_N(\A)$ such that $p^2=p$. Moreover,  given a derivation $\delta$ of
$\A$ in a $\A$-bimodule $\Omega$, 
any $\Omega$-valued connection 
is of the form
\begin{equation}
  \nabla = \nabla_0 + \bo
\label{eq:48}
  \end{equation}
where 
\begin{equation}
 \nabla_0\, \eta  =p\left(\begin{array}{c} \delta(\eta_1)\\ \vdots\\
                            \delta(\eta_N)\end{array}\right)\quad\quad
                        \forall \, \eta = p \left(\begin{array}{c} \eta_1\\ \vdots\\
                            \eta_N\end{array}\right)\in\E_R, \quad
                        \eta_j\in \A,
\end{equation}
is the Grassmann connection, while $\bo$ is an $\A$-linear map $\E_R\to\E_R\otimes\Omega$, that is 
\begin{equation}
  \label{eq:132}
  \bo(\eta a) = \bo(\eta)\cdot a\quad \forall \eta\in\E_R, a\in\A.
\end{equation}
In particular, for a self-Morita equivalence the operator $D_R$ has a friendlier form.
\begin{prop}
\label{recall:D_R}
 For $\B=\A$ and $\E_R=\A$, 
  one obtains
  \begin{equation}
    \label{eq:160}
    D_R = D+ \omega \qquad \mbox{for some} \quad \omega\in\Omega^1_D(\A).
  \end{equation}
\end{prop}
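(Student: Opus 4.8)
The plan is to specialize the general construction of $D_R$ in \eqref{eq:29} to the case $\B = \A$ and $\E_R = \A$, regarded as a right $\A$-module over itself. First I would note that in this case $\E_R = \A = \I \cdot \A^1$, i.e. $N=1$ and $p = \I$, so the decomposition \eqref{eq:48} of an arbitrary $\Omega^1_D(\A)$-valued connection reads $\nabla = \nabla_0 + \bo$, where the Grassmann connection is $\nabla_0(a) = \I \otimes \delta(a) = \I \otimes [D,a]$, and $\bo$ is an $\A$-linear map $\A \to \A \otimes_\A \Omega^1_D(\A) \simeq \Omega^1_D(\A)$. By $\A$-linearity \eqref{eq:132}, $\bo$ is determined by its value on the unit: setting $\omega := \bo(\I) \in \Omega^1_D(\A)$, we get $\bo(a) = \bo(\I)\cdot a = \I \otimes \omega a$ for all $a \in \A$.

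Next I would evaluate $D_R$ on a simple tensor $a \otimes \psi \in \A \otimes_\A \HH$ using \eqref{eq:29}:
\begin{equation}
D_R(a \otimes \psi) = a \otimes D\psi + (\nabla a)\psi = a \otimes D\psi + \I \otimes [D,a]\psi + \I \otimes \omega a \psi.
\end{equation}
Then I would push everything through the tensor product over $\A$, using $a \otimes D\psi = \I \otimes a D \psi$ together with $a D\psi = Da\psi - [D,a]\psi$, so that the first two terms combine to $\I \otimes Da\psi$, giving
\begin{equation}
D_R(a \otimes \psi) = \I \otimes (D + \omega) a\psi.
\end{equation}
Finally I would invoke the canonical identification $\A \otimes_\A \HH \simeq \HH$ sending $a \otimes \psi = \I \otimes a\psi \mapsto a\psi$ (and $\I \otimes (D+\omega)a\psi \mapsto (D+\omega)a\psi$), under which $D_R$ becomes the operator $D + \omega$ acting on $\HH$, with $\omega \in \Omega^1_D(\A)$ as claimed. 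This is exactly the non-twisted analogue of Corollary~\ref{corro:right}, and the argument is essentially the $\rho = \id$ case of the computation in \eqref{eq:47}.

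There is no real obstacle here; the only point requiring a little care is the bookkeeping of the tensor-product-over-$\A$ identifications — making sure that moving the factor $a$ across $\otimes_\A$ is legitimate (it is, by the very definition of $\otimes_\A$) and that the resulting operator is well defined on all of $\HH_R$, which is guaranteed by Proposition~\ref{Prop:lebnizright} via the compatibility \eqref{eq:129} of the actions of $\Omega^1_D(\A)$ and $\A$ on $\HH$, already checked in \eqref{eq:121}. One should also remark that self-adjointness of $\omega$ (hence of $D_R$) is an extra hypothesis, not automatic, exactly as in the twisted case; but the statement as given only claims $D_R = D + \omega$ for \emph{some} $\omega \in \Omega^1_D(\A)$, which the above establishes directly.
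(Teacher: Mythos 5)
Your proof is correct and follows essentially the same route as the paper: decompose the connection on $\E_R=\A$ as the Grassmann connection plus a $1$-form $\omega$, compute $D_R(a\otimes\psi)$, and use the identification $\A\otimes_\A\HH\simeq\HH$ together with $aD\psi+[D,a]\psi=Da\psi$ to recognize the operator $D+\omega$. The extra bookkeeping you supply (pinning down $\boldsymbol{\omega}$ by its value on $\I$ via $\A$-linearity, and noting that well-definedness on $\HH_R$ rests on Proposition~\ref{Prop:lebnizright}) is a welcome, if implicit, elaboration of the paper's argument.
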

\begin{proof}
Any connection $\nabla$ on $\E_R=\A$ is written as $\nabla = \delta + \omega$ for a  
$1$-form $\omega \in\Omega^1_D(\A)$. 
Then,  
\begin{equation}
    \label{eq:53}
    D_R(a\otimes \psi) = a \otimes D\psi + \I \otimes (\delta(a) + \omega a) \psi. 
  \end{equation}
  Identifying $a\otimes \psi\in \HH_R$ with $a\psi\in \HH$, one
  rewrites \eqref{eq:53} as
  \begin{equation}
    \label{eq:65}
    D_R(a\psi) = aD\psi + \delta(a)\psi + \omega a\psi = aD\psi + (D a - a D)\psi + \omega a\psi = (D+\omega)(a\psi),
  \end{equation}
  meaning that the action of $D_R$ on $\HH$ coincides with the
  operator $D+\omega$.
\end{proof}

 Since $\omega$ is bounded, the operator $D_R$  has a compact resolvent  
 and bounded commutator
with $\A$. Consequently, {for a self-adjoint $\omega$ one gets that 
\begin{equation}
(\A, \HH, D +\omega)
\label{eq:110}
\end{equation}
is a spectral triple \cite{BMvS16}, \emph{Morita equivalent} to $(\A, \HH, D)$. Furthermore, any grading $\Gamma$ of  
$(\A, \HH, D)$, since anticommutes with any $a[D,b]$, hence with $\omega$, thus with $D_R$,  is also a grading of $(\A, \HH, D_R)$. 

However, if $(\A, \HH, D)$ is a real spectral triple with real structure $J$, the later
is not necessarily a real structure for  \eqref{eq:110}. Indeed
$J(D+\omega) =\epsilon'(D+\omega)J$ if and only if $\omega= \epsilon'
J\omega J^{-1}$. This
has no reason to be true, because of the following lemma (whose proof follows from \eqref{eq:4}, \eqref{eq:24}).
\begin{Lemma}
\label{Lemma:fluctreal}
   Let $(\A, \HH, D), J$ be a real spectral triple, and  $\omega=\sum\nolimits_j a_j[D,b_j]\in\Omega^1_D(\A)$. Then   \begin{equation}
    \label{eq:61}
    J\omega J^{-1} = \epsilon '  \Big(\sum\nolimits_j (a_j^*)^\circ[D,(b_j^*)^\circ] \Big).
  \end{equation}
\end{Lemma}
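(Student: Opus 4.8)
The plan is to prove Lemma~\ref{Lemma:fluctreal} by a direct computation on a single term $\omega=a[D,b]$, the general case following by linearity. First I would write $J\omega J^{-1}=Ja[D,b]J^{-1}=(JaJ^{-1})(J[D,b]J^{-1})$, inserting $J^{-1}J=\I$ between the two factors. The first factor is recognised using the definition \eqref{eq:24} of the opposite algebra: $JaJ^{-1}=J(a^*)^*J^{-1}=(a^*)^\circ$.

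Next I would expand $J[D,b]J^{-1}=JDbJ^{-1}-JbDJ^{-1}$ and insert $J^{-1}J$ once more in each term to get $JDJ^{-1}\,JbJ^{-1}-JbJ^{-1}\,JDJ^{-1}$. Now I would use the $KO$-dimension relation $JD=\epsilon' DJ$ from \eqref{eq:4}, which gives $JDJ^{-1}=\epsilon' D$ (using $\epsilon'^2=1$), together with $JbJ^{-1}=(b^*)^\circ$ as above. This yields $J[D,b]J^{-1}=\epsilon'\big(D(b^*)^\circ-(b^*)^\circ D\big)=\epsilon'[D,(b^*)^\circ]$. Combining the two factors gives $J\omega J^{-1}=\epsilon'(a^*)^\circ[D,(b^*)^\circ]$, and summing over $j$ produces \eqref{eq:61}.

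There is essentially no obstacle here: the only point requiring a little care is the correct handling of the antilinear operator $J$ when conjugating, namely that $J(xy)J^{-1}=(JxJ^{-1})(JyJ^{-1})$ still holds because $J$ is invertible and the two $J$'s bracket the product, and that the scalar $\epsilon'=\pm1$ is real so it is unaffected by the antilinearity of $J$. One should also note that $[a,(b^*)^\circ]=0$ by the order-zero condition \eqref{eq:12} is \emph{not} needed for this lemma, only for the subsequent discussion of when $J$ is a real structure. In short, this is a two-line manipulation; I would present it as the displayed chain of equalities
\begin{equation}
  J a[D,b] J^{-1} = (JaJ^{-1})\,J[D,b]J^{-1} = (a^*)^\circ\,\epsilon'\,[D,(b^*)^\circ],
\end{equation}
followed by the remark that linearity extends it to a general $\omega=\sum_j a_j[D,b_j]$, giving \eqref{eq:61}.
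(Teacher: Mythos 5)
Your computation is correct and is exactly the argument the paper intends: the paper states only that the proof ``follows from \eqref{eq:4}, \eqref{eq:24}'', and your chain of equalities is precisely the $\rho=\mathrm{id}$ specialization of the paper's explicit proof of the twisted analogue, Lemma~\ref{Lemma:twistopposite}. Your side remarks (reality of $\epsilon'$ under the antilinear conjugation, and the order-zero condition not being needed) are accurate.
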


\subsubsection{Morita equivalence by left module}
\label{subsec:Moritleft}

To implement $\A$ self-Morita equivalence in a way compatible
with the real structure, one uses $\A$ not only as a right
$\A$-module $\E_R$, but also as a left $\A$-module $\E_L$ (as
explained in this section), then as a
$\A$-bimodule $\E$ (this is the content of \S\ref{subsec:bimod}).

 In defining the Hilbert space $\HH_R$ in \eqref{eq:45}, one takes advantage of the left $\A$-module
structure of $\HH$ induced by the representation $\pi$. Alternatively,
 one has available the right
 $\A$-module structure \eqref{eq:35} of $\HH$, $\psi a = a^\circ \psi$ for $\psi\in\HH, a\in\A$, which offers a
 possibility to implement the
Morita equivalence between $\A$ and $\B$ thanks to a hermitian
finite projective \emph{left} $\A$-module $\E_L$, with $\A$-valued  inner product $\lhs{\cdot}{\cdot}$. One thus
considers the Hilbert space
\begin{equation}
  \label{eq:54}
  \HH_L:= \HH\otimes_\A \E_L,
\end{equation}
with inner product
\begin{equation}
\langle \psi_1\otimes\eta_1, \psi_2\otimes\eta_2\rangle_{\HH_L}
= \langle \psi_1 \, \lhs{\eta_1}{\eta_2} ,\psi_2 \rangle_\HH.\label{eq:176}
\end{equation}
The right action of $\B \simeq \End_\A(\E)$ on $\E_L$  is extended
to $\HH_L$ as 
\begin{equation}
  \label{eq:55}
(\psi\otimes \eta) \, b := \psi \otimes \eta b.
\end{equation}
Again,  the natural action
\begin{equation}
D_L(\psi\otimes \eta):=D\psi\otimes\eta
\label{eq:82}
\end{equation}
of $D$ on $\HH_L$ is not compatible
with the tensor product over $\A$ because
\begin{align}
 \nonumber
  D_L(\psi \otimes a\eta) -D_L(\psi a\otimes\eta)&=  (D\psi)\otimes
   a\eta  - D(\psi
  a)\otimes \eta   = (D\psi)a\otimes \eta - D(\psi
  a)\otimes \eta ,\\
 \label{eq:56bis}& = a^\circ(D\psi)\otimes \eta  -D(a^\circ\psi
  )\otimes \eta  =- [D , a^\circ]\psi
  \otimes \eta
\end{align}
does not vanish. To correct this, one uses a connection
$\nabla^\circ$ on $\E_L$ with value in the $\A$-bimodule 
\begin{equation}
\Omega^1_D(\A^\circ)= \Big\{ \sum\nolimits_j a_j^\circ [D, b_j^\circ],\quad a_j^\circ, b_j^\circ\in\A^\circ\Big\}
\label{eq:69}
\end{equation} 
generated by the  derivation
\begin{equation}
\delta^\circ(a):= [D, a^\circ],
\label{eq:97}
\end{equation}
with bimodule law
\begin{equation}
  a\cdot \omega^\circ\cdot   b := b^\circ \omega^\circ a^\circ .
\label{eq:74}
\end{equation}
This law guarantees that \eqref{eq:69} is indeed a bimodule over $\A$
and $\delta^\circ$ a derivation of $\A$ (not
of $\A^\circ)$ with values in $\Omega^1_D(\A^\circ)$. The relation between $\Omega^1_D(\A^\circ)$ and
$\Omega^1_D(\A)$ is given by the following lemma, whose proof follows
from  \eqref{eq:4} and  \eqref{eq:24}.
\begin{Lemma}
  \label{actionoppositeform}
Any $\omega^\circ= \sum\nolimits_j a_j^\circ [D, b_j^\circ]$ in
$\Omega^1_D(\A^\circ)$ acts on the left on $\HH$ as the
bounded operator
\begin{equation}
\omega^\circ = \epsilon' J\omega J^{-1}\label{eq:162}
\end{equation}
for $\omega=\sum\nolimits_j a_j^*[D,b_j^*] \in\Omega^1_D(\A)$.
\end{Lemma}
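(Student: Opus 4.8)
The quickest observation is that Lemma~\ref{actionoppositeform} is nothing but the specialization of Lemma~\ref{lem:actiontwist} to the trivial automorphism $\rho=\id$. Indeed, for $\rho=\id$ one has $\rho^\circ=\id$, both twisted commutators $[D,\cdot\,]_{\rho}$ and $[D,\cdot\,]_{\rho^\circ}$ reduce to the ordinary commutator $[D,\cdot\,]$, the twisted forms $\Omega^1_D(\A,\rho)$ and $\Omega^1_D(\A^\circ,\rho^\circ)$ become $\Omega^1_D(\A)$ and $\Omega^1_D(\A^\circ)$, and \eqref{eq:112} becomes exactly \eqref{eq:162}, so the statement is already contained in what has been proved. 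Still, since the non-twisted case is logically prior, I would give the short self-contained argument, which uses only \eqref{eq:4} and \eqref{eq:24}.

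First, by linearity it suffices to treat a single summand $\omega^\circ=a^\circ[D,b^\circ]$ with $a,b\in\A$; the general case then follows by summation, with $\omega=\sum\nolimits_j a_j^*[D,b_j^*]$. Substituting $a^\circ=Ja^*J^{-1}$ and $b^\circ=Jb^*J^{-1}$ from \eqref{eq:24} and expanding $[D,b^\circ]=Db^\circ-b^\circ D$ gives
\begin{equation}
\omega^\circ=Ja^*\big(J^{-1}DJ\big)b^*J^{-1}-Ja^*b^*\big(J^{-1}D\big),
\end{equation}
and one then invokes $JD=\epsilon'DJ$ from \eqref{eq:4}, which (using $(\epsilon')^2=1$) yields $J^{-1}DJ=\epsilon'D$ and $J^{-1}D=\epsilon'DJ^{-1}$. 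Both terms thus produce a single factor $\epsilon'$, which can be pulled out, leaving
\begin{equation}
\omega^\circ=\epsilon'\,J\big(a^*Db^*-a^*b^*D\big)J^{-1}=\epsilon'\,J\,a^*[D,b^*]\,J^{-1}=\epsilon'\,J\omega J^{-1},
\end{equation}
with $\omega=a^*[D,b^*]\in\Omega^1_D(\A)$. Finally, $\omega$ is bounded, being a finite sum of products of the bounded operators $\pi(a)$ and $[D,b]$, and $J$ is antiunitary, hence $\epsilon'J\omega J^{-1}$ is bounded as well; this is the asserted formula.

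There is no real obstacle here; the only point that needs a little attention is the sign bookkeeping, i.e. checking that exactly one power of $\epsilon'$ is generated in each of the two terms, so that it factors through the commutator $[D,b^*]$. This Lemma is the non-twisted counterpart of Lemma~\ref{lem:actiontwist}, and it is dual to Lemma~\ref{Lemma:fluctreal}, which reads the same identity in the opposite direction, namely $J\omega J^{-1}$ expressed in terms of opposite-algebra $1$-forms.
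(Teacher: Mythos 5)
Your proof is correct and matches the paper's (the paper does not write out the computation, saying only that the lemma ``follows from \eqref{eq:4} and \eqref{eq:24}'', which is precisely the substitution $a^\circ=Ja^*J^{-1}$, $b^\circ=Jb^*J^{-1}$ followed by commuting $D$ past $J$ that you carry out; it is also the $\rho=\id$ case of the computation in Lemma~\ref{lem:actiontwist}, as you note). The sign bookkeeping is right: each term picks up exactly one factor of $\epsilon'$, so it factors through the commutator as claimed.
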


  The right action of $\A$ and the left action of
  $\Omega^1_D(\A^\circ)$ on $\HH$ (corresponding to a right action of $\Omega^1_D(\A)$) are compatible in the sense of condition
  \eqref{eq:130}: 
  \begin{align}
    \label{eq:43}
    (a\cdot\omega^\circ)\psi&= (\omega^\circ a^\circ)\psi =\omega^\circ(\psi a).
  \end{align}
  The connection $\nabla^\circ$ thus defines an operator
  $\HH_L\to\HH_L$ which satisfies the Leibniz rule \eqref{eq:123b}, therefore the following 
    is a well defined operator on $\HH_L$,
  \begin{equation}
    \label{eq:57}
    D_L(\psi\otimes \eta) := D\psi\otimes \eta + \nabla^\circ(\psi \otimes \eta) .
  \end{equation}

For a left module $\E_L \simeq A^Np$ with $p=p^2\in M_N(\A)$ 
the connection decomposes as
\begin{equation}
\nabla^\circ=\nabla^\circ_0 + \bo^\circ
\label{eq:133}
\end{equation}
with Grassmann connection
\begin{equation}
  \label{eq:50}
  \nabla^\circ_0 \,\eta = (\delta^\circ(\eta_1), \hdots, \delta^\circ(\eta_N))\,p \quad\quad
  \forall\, \eta=(\eta_1, \hdots, \eta_N)\in \E_L, \quad \eta_j \in \A ,
\end{equation}
while $\bo^\circ$ is a  map $\E_L\to
\Omega^1_D(\A^\circ)\otimes_\A \E_L$ which is $\A$-linear in the sense that
\begin{equation}
  \label{eq:134}
  \bo^\circ (a\eta) = a\cdot \bo^\circ(\eta).
\end{equation}
We use this to get a more tractable expression for $D_L$, for a self-Morita equivalence. 
\begin{prop}
\label{prop:Moritaleft}
 For $\B=\A$ and $\E_L=\A$, the construction above yields
  \begin{equation}
    \label{eq:158}
    D_L = D + \omega^\circ = D+\epsilon' J \omega J^{-1}
  \end{equation}
for some $\omega^\circ=\epsilon'J\omega J^{-1}\in \Omega^1_D(\A^\circ)$,  with $\omega\in\Omega^1_D(\A)$.
\end{prop}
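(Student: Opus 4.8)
The plan is to transpose verbatim the argument for the right module (Proposition~\ref{recall:D_R}) to the opposite algebra. Since $\E_L=\A$ is the trivial left module ($p=\I$ in \eqref{eq:133}--\eqref{eq:134}), any $\Omega^1_D(\A^\circ)$-valued connection on it decomposes as $\nabla^\circ=\nabla^\circ_0+\bo^\circ$, with Grassmann part $\nabla^\circ_0(a)=\delta^\circ(a)\otimes\I=[D,a^\circ]\otimes\I$ (cf. \eqref{eq:50}) and $\A$-linear part $\bo^\circ(a)=a\cdot\bo^\circ(\I)$. Writing $\bo^\circ(\I)=\omega^\circ\otimes\I$ for some $\omega^\circ\in\Omega^1_D(\A^\circ)$ and using the bimodule law \eqref{eq:74}, one gets $\bo^\circ(a)=(\omega^\circ a^\circ)\otimes\I$, hence $\nabla^\circ(a)=\big([D,a^\circ]+\omega^\circ a^\circ\big)\otimes\I$.

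Next I would feed this into the definition \eqref{eq:57} of $D_L$, using the induced map \eqref{eq:115b} on $\HH\otimes_\C\E_L$, which gives
\[
D_L(\psi\otimes a)=D\psi\otimes a+\big(([D,a^\circ]+\omega^\circ a^\circ)\psi\big)\otimes\I .
\]
Then I identify $\HH\otimes_\A\A\simeq\HH$ through $\psi\otimes a=a^\circ\psi\otimes\I\mapsto a^\circ\psi$, under which $D\psi\otimes a=a^\circ D\psi\otimes\I\mapsto a^\circ D\psi$. Collecting terms and telescoping $a^\circ D\psi+[D,a^\circ]\psi=Da^\circ\psi$, one finds $D_L(\psi\otimes a)$ corresponds to $(D+\omega^\circ)(a^\circ\psi)$, i.e. $D_L$ acts on $\HH$ as $D+\omega^\circ$. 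Finally I invoke Lemma~\ref{actionoppositeform}, which states precisely that such an $\omega^\circ\in\Omega^1_D(\A^\circ)$ acts on $\HH$ as $\epsilon'J\omega J^{-1}$ for a corresponding $\omega\in\Omega^1_D(\A)$; this yields $D_L=D+\omega^\circ=D+\epsilon'J\omega J^{-1}$.

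There is no genuine obstacle here: compatibility of the right action of $\A$ with the left action of $\Omega^1_D(\A^\circ)$ on $\HH$ has already been checked in \eqref{eq:43}, so the connection does descend to $\HH_L$, and Lemma~\ref{actionoppositeform} supplies the final translation to $\Omega^1_D(\A)$. The only point requiring care — the exact analogue of the single subtlety in the right-module case — is the opposite-algebra bookkeeping: the direction of the $\A^\circ$-action in the bimodule law \eqref{eq:74} and the correct form of the identification $\psi\otimes a=a^\circ\psi\otimes\I$ used to move from $\HH_L$ back to $\HH$.
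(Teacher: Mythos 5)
Your proof is correct and follows essentially the same route as the paper: decompose $\nabla^\circ$ into Grassmann part plus an $\A$-linear part $\bo^\circ(a)=(\omega^\circ a^\circ)\otimes\I$, telescope $a^\circ D\psi+[D,a^\circ]\psi=Da^\circ\psi$ under the identification $\psi\otimes a=a^\circ\psi\otimes\I$, and translate $\omega^\circ$ into $\epsilon' J\omega J^{-1}$ via the relation between $\Omega^1_D(\A^\circ)$ and $\Omega^1_D(\A)$ (the paper cites Lemma~\ref{Lemma:fluctreal} for this last step, which carries the same content as Lemma~\ref{actionoppositeform}).
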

\begin{proof}
The decomposition \eqref{eq:133} will now read $\nabla^\circ = \delta^\circ + \omega^\circ$ with the form 
$\omega^\circ$ be such that 
  \begin{equation}
    \label{eq:105}
    \bo^\circ(a)
    = a\cdot \omega^\circ = (a\cdot \omega^\circ) = \omega^\circ a^\circ .
  \end{equation}
Therefore;
  \begin{align}
    \label{eq:131}
D_L(\psi\otimes a)  &= D\psi \otimes a +\delta^\circ(a)\psi\otimes \I +
                          \omega^\circ a^\circ \psi\otimes \I , \nonumber \\
                        &= (D\psi) a \otimes\I+ (Da^\circ - a^\circ D)\psi \otimes \I +
                          \omega^\circ a^\circ \psi\otimes \I, \nonumber \\
                        &= a^\circ D\psi \otimes\I + (Da^\circ - a^\circ D)\psi \otimes \I +
                          \omega^\circ a^\circ\psi\otimes \I \nonumber \\
                          &= D (a^\circ\psi) \otimes \I +
                          \omega^\circ a^\circ \psi\otimes \I.
  \end{align}
  Identifying $a^\circ\psi\otimes\I = \psi \otimes a = \psi a \otimes \I$ in $\HH_L$ with
  $a^\circ\psi \in\HH$, one obtains that $D_L$ acts as
  $D+\omega^\circ$. The rest of the result follows from Lemma \ref{Lemma:fluctreal}.
\end{proof}
As in the right module case, when $\omega$ is self-adjoint  the datum 
\begin{equation}
(\A, \HH, D + \epsilon J\omega J^{-1})\label{eq:30}
\end{equation}
is a spectral triple,
admitting as grading any grading of $(\A, \HH, D)$.
However it is not a real spectral triple for the real structure $J$,
because $J(D+ \epsilon' J\omega J^{-1}) = (D+ \epsilon' J\omega J^{-1})J$ if and
only if $\omega= \epsilon' J\omega J^{-1}$. This has no reason to be true, by Lemma \ref{Lemma:fluctreal}.

\subsubsection{Morita equivalence by bimodule and the real structure}
\label{subsec:bimod}
To make the real structure compatible with Morita equivalence of
spectral triples, one needs to combine the two constructions above. Explicitly, given a real spectral triple $(\A, \HH,
D)$, one first implements the self-Morita equivalence
of $\A$ by using the right module $\E_R=\A$ --- thus obtaining the spectral
triple \eqref{eq:110}; then uses this with the left module $\E_L=\A$. This
yields the Morita equivalent 
spectral triple $ (\A, \HH, D' )$
where 
\begin{equation}
  \label{eq:33}
 D' =D + \omega_R + \epsilon' J\omega_L J^{-1}
\end{equation}
with $\omega_R$, $\omega_L$  two self-adjoint elements of $\Omega^1_D(\A)$, a priori
distinct. It  is the real structure that forces these two $1$-forms to
be equal.
\begin{prop}
\label{prop:fluctbimodule}
The real structure $J$ of $(\A, \HH, D)$ is a real structure for the
Morita equivalent spectral triple $(\A, \HH, D' )$ if
 and only if there exists $\omega\in\Omega^1_D(\A)$ such that 
 \begin{equation}
   D' = D_\omega:= D + \omega + \epsilon'J\omega J^{-1}.
 \end{equation}
\end{prop}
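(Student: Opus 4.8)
The plan is to mimic, in the non-twisted setting, the short algebraic argument used for Proposition~\ref{prop:twistMorita}. For the ``only if'' direction I would use only the consequence $JD' = \epsilon' D'J$ of $J$ being a real structure for $(\A,\HH,D')$. Starting from the operator \eqref{eq:33}, namely $D' = D + \omega_R + \epsilon' J\omega_L J^{-1}$ with $\omega_R,\omega_L\in\Omega^1_D(\A)$ self-adjoint, I would conjugate by $J$ and use $JDJ^{-1} = \epsilon' D$ together with $J(\epsilon' J\omega_L J^{-1})J^{-1} = \epsilon'\omega_L$ (valid since $J^2 = \epsilon\I$ is central and $\epsilon^2 = 1$), which gives
\[
JD'J^{-1} = \epsilon' D + J\omega_R J^{-1} + \epsilon'\omega_L ,
\qquad
\epsilon' D' = \epsilon' D + \epsilon'\omega_R + J\omega_L J^{-1} .
\]
Equating the two, $JD' = \epsilon' D'J$ is equivalent to
\[
(\omega_R - \omega_L) - \epsilon'\,J(\omega_R - \omega_L)J^{-1} = 0 ,
\]
the exact analogue of \eqref{eq:46bis}.

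The key move, as in the twisted case, is the symmetrisation $\omega := \tfrac12(\omega_R + \omega_L)$, which again lies in $\Omega^1_D(\A)$ (a complex vector space) and is self-adjoint because $\omega_R,\omega_L$ are. Adding $\tfrac12$ of the vanishing quantity above to the right-hand side of \eqref{eq:33} and rearranging gives
\[
D' = D + \tfrac12(\omega_R + \omega_L) + \epsilon'\,J\,\tfrac12(\omega_R + \omega_L)\,J^{-1} = D + \omega + \epsilon' J\omega J^{-1} = D_\omega ,
\]
which proves ``only if'' and produces the desired $1$-form. For the converse, assume $D' = D_\omega$ for some $\omega\in\Omega^1_D(\A)$: a one-line conjugation $JD_\omega J^{-1} = \epsilon' D + J\omega J^{-1} + \epsilon'\omega = \epsilon'(D + \omega + \epsilon' J\omega J^{-1}) = \epsilon' D_\omega$ gives $JD' = \epsilon' D'J$, while the order-zero condition does not involve $D'$ and the first-order condition for $D_\omega$ follows, as usual, from the first-order and order-zero conditions for $D$; hence $J$ is a real structure for $(\A,\HH,D_\omega)$.

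I do not expect a genuine obstacle: the argument is bookkeeping of the $KO$-signs $\epsilon,\epsilon'$ plus the averaging trick $\omega = \tfrac12(\omega_R + \omega_L)$, which is the only slightly non-obvious step. The two points deserving a line of care are that conjugation by $J$ sends $\epsilon' J\omega_L J^{-1}$ back to $\epsilon'\omega_L$ (needing $J^2 = \epsilon\I$), and that the averaged form stays self-adjoint and inside $\Omega^1_D(\A)$; if one prefers, Lemma~\ref{Lemma:fluctreal} can be used to re-express $J\omega J^{-1}$ via opposite-algebra $1$-forms, but that is not needed here.
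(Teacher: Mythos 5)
Your proof is correct and follows essentially the same route as the paper's: reduce $JD'=\epsilon'D'J$ to the vanishing of $(\omega_R-\omega_L)-\epsilon'J(\omega_R-\omega_L)J^{-1}$ and then symmetrise via $\omega=\tfrac12(\omega_R+\omega_L)$, exactly as in the proof of Proposition~\ref{prop:twistMorita}. Your extra remarks on the converse and on the order-zero/first-order conditions are a harmless (and slightly more complete) addition to what the paper writes.
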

\begin{proof}
By an easy computation, one finds that $JD'= \epsilon'DJ$ if and only if
\begin{equation}
  \label{eq:46}
 (\omega_L- \omega_R) - \epsilon' J (\omega_L - \omega_R)J^{-1} =0.
\end{equation}
Adding half of this expression to the r.h.s. of \eqref{eq:33}, one gets
\begin{equation}
  \label{eq:64}
  D' = D + \tfrac 12(\omega_R +\omega_L) + \epsilon' J \tfrac 12(\omega_R + \omega_L) J^{-1}.
\end{equation}
Hence the result with $\omega:= \tfrac 12(\omega_R + \omega_L)$.
\end{proof}
\begin{rem}\label{fluctgrassmann}
  \textup{
  Taking as a connection the Grassmann connection in the definition
  \eqref{eq:29} of $D_R$ (i.e. $\nabla=\nabla_0$), one finds that $D_R$
  coincides with $D$. Similarly, taking $\nabla^\circ=\nabla^\circ_0$
  in \eqref{eq:57} yields $D_L=D$. Then $D'$ in  \eqref{eq:33}
  coincide with $D$ as well. 
In other terms, given a real spectral triple $(\A, \HH, D)$, implementing the self-Morita equivalence of $\A$ using
the Grassmann connection on the $\A$-bimodule $\A$ leaves the Dirac
operator invariant (i.e. it fluctuates with $\omega=0$).
}\end{rem}

\subsection{Gauge transformations}
\label{sec:annexgauge}

Also the material in this section is well known and mainly taken from
\cite{Connes:1996fu} and \cite{Connes-Marcolli}. 

\subsubsection{Gauge transformations on a hermitian module}
\label{subsec:annexgaugemodule}

An endomorphisms $u\in\End_\A(\E)$  of a hermitian $\A$-module
$\E$ is unitary if $u^*u=uu^* = \id_\E$, where
the adjoint of an operator is defined using the hermitian structure by
\begin{equation}
  \label{eq:98}
  \langle T^*\eta, \xi\rangle := \langle \eta, T\xi\rangle , \qquad
  \forall T\in\End_\A(\E),\, \xi,\eta\in\E.
\end{equation}
Unitary endomorphisms form a group ${\cal U}(\E)$,
acting on $\Omega$-valued connections on $\E$ as
  \begin{equation}
    \label{eq:126}
    \nabla^u := u\nabla u^* \quad\quad \forall u\in {\cal U}(\E),
  \end{equation}
where ${\cal U}(\E)$ acts on
$\E\otimes\Omega$ (if $\E$ is a right $\A$-module) or $\Omega\otimes
\E$  (if $\E$ is a left $\A$-module) as
\begin{equation}
  \label{eq:103}
  u \ot \id_{\Omega} ,
  \quad\quad \mbox{or} \quad\quad 
 \id_{\Omega} \ot \, u .
\end{equation}
Not surprisingly, such an action is a \emph{gauge transformation}.
\begin{prop}
The operator $\nabla^u$ is a connection, for any $u\in{\cal U}(\E)$ and connection $\nabla$.
 \end{prop}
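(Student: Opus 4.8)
The plan is to verify directly that $\nabla^u := u\nabla u^*$ satisfies the Leibniz rule characterizing an $\Omega$-valued connection, using only that $u$ is a unitary endomorphism of $\E$ and that $\nabla$ is already a connection. I will treat the right-module case (the left-module case being entirely analogous, with the obvious reflections of the tensor factors). Recall that for a right $\A$-module $\E$ the defining property is \eqref{eq:14}, namely $\nabla(\eta a) - \nabla(\eta)\cdot a = \eta\otimes\delta(a)$, and that $\mathcal U(\E)$ acts on $\E\otimes_\A\Omega$ via $u\otimes\id_\Omega$ as in \eqref{eq:103}; since $u$ is $\A$-linear, this action is well defined on the tensor product over $\A$ and commutes with the right $\A$-action \eqref{eq:18}, i.e. $(u\otimes\id_\Omega)\big((\eta\otimes\omega)\cdot a\big) = \big((u\otimes\id_\Omega)(\eta\otimes\omega)\big)\cdot a$.

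The key computation is then the following. For $\eta\in\E$ and $a\in\A$, using that $u^*$ is an $\A$-linear endomorphism so that $u^*(\eta a) = u^*(\eta)\,a$,
\begin{align*}
\nabla^u(\eta a) - \nabla^u(\eta)\cdot a
&= u\big(\nabla(u^*(\eta a))\big) - u\big(\nabla(u^*\eta)\big)\cdot a \\
&= u\big(\nabla(u^*(\eta)\,a)\big) - u\big(\nabla(u^*\eta)\cdot a\big) \\
&= u\big(\nabla(u^*(\eta)\,a) - \nabla(u^*\eta)\cdot a\big) \\
&= u\big(u^*(\eta)\otimes\delta(a)\big) \\
&= u(u^*\eta)\otimes\delta(a) = \eta\otimes\delta(a),
\end{align*}
where the third equality uses that $u$ (acting as $u\otimes\id_\Omega$) commutes with the right $\A$-action on $\E\otimes_\A\Omega$, the fourth uses the Leibniz rule \eqref{eq:14} for $\nabla$ applied to $u^*\eta\in\E$, and the last uses $uu^*=\id_\E$. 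This is exactly the Leibniz rule for $\nabla^u$, so $\nabla^u$ is an $\Omega$-valued connection on $\E$. The left-module case is identical after replacing \eqref{eq:14} by \eqref{eq:58}, $\E\otimes_\A\Omega$ by $\Omega\otimes_\A\E$, and the action $u\otimes\id_\Omega$ by $\id_\Omega\otimes u$.

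I do not expect any genuine obstacle here: the statement is a standard bookkeeping verification, and the only point requiring a moment's care is to keep track of \emph{where} $\A$-linearity of $u$ and $u^*$ is used --- namely once on $\E$ itself (to pull the scalar $a$ out of $u^*(\eta a)$) and once on the tensor product $\E\otimes_\A\Omega$ (to commute the $\mathcal U(\E)$-action past the right $\A$-action). One should also note in passing that $\nabla^u$ indeed maps $\E$ into $\E\otimes_\A\Omega$ (resp. $\Omega\otimes_\A\E$), which is immediate since $u^*$ preserves $\E$ and $u\otimes\id_\Omega$ preserves $\E\otimes_\A\Omega$. No use of the Hilbert space, the Dirac operator, or the twist is needed; this is a purely algebraic fact about connections on hermitian modules, which is why it is placed in the non-twisted appendix.
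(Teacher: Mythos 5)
Your proof is correct and follows essentially the same route as the paper's: expand $\nabla^u(\eta a)$ using $\A$-linearity of $u^*$, apply the Leibniz rule for $\nabla$ to $u^*(\eta)$, and use that the action $u\otimes\id_\Omega$ commutes with the right $\A$-action on $\E\otimes_\A\Omega$ together with $uu^*=\id_\E$. The only difference is presentational (you compute the difference of the two terms rather than expanding the first term directly), and your explicit remarks on where $\A$-linearity is used are a sound, if standard, clarification.
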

\begin{proof}
In case $\E$ is a right $\A$-module,  one gets from \eqref{eq:103} and \eqref{eq:18} that
\begin{equation}
  \label{eq:140}
  u(\nabla(\eta)\cdot a)=
(u\nabla(\eta))\cdot a \, .
\end{equation}
Hence
  \begin{align*}
    \nabla^u(\eta a) &= u\nabla(u^*(\eta a))= u\nabla(u^*(\eta) a) =
                       u\left( \nabla(u^*(\eta)).a +u^*(\eta)\otimes \delta(a) \right),  \\
                     &= (u\nabla(u^*(\eta))).a +\eta\otimes \delta(a) =\nabla^u(\eta).a +\eta\otimes \delta(a),
  \end{align*}
showing that $\nabla^u$ is a connection. For a left
$\A$-module $\E$ one has from \eqref{eq:67}
\begin{equation}
  \label{eq:141}
   u(a\cdot\nabla(\eta))=
a\cdot u(\nabla(\eta)),
\end{equation}
so that
  \begin{align*}
    \nabla^u(a\eta) &= u\nabla(u^*(a\eta))= u\nabla(au^*(\eta)) =
                       u\left(a\cdot\nabla(u^*(\eta)) + \delta(a)\otimes u^*(\eta) \right),\\
                     &= a\cdot u\nabla(u^*(\eta)) +\delta (a)\otimes
                       \eta =\nabla^u(\eta).a +\delta(a)\otimes \eta.
  \end{align*}
 Hence the result.
\end{proof}

With $\nabla_0$ the Grassmann connections an $\boldsymbol{\omega}$ defined in \eqref{eq:48} or \eqref{eq:133}, 
any connection
\begin{equation}
\nabla=\nabla_0+ \boldsymbol{\omega}
\label{eq:138}
\end{equation}
is  mapped under a gauge transformation to 
\begin{equation}
\nabla^u=\nabla_0+ {\boldsymbol \omega}^u .
\label{eq:142}
\end{equation}
with the gauge transformation fully encoded in the law of transformation of the \emph{gauge potential}
\begin{equation}
{\boldsymbol \omega}\to {\boldsymbol\omega}^u .
\label{eq:136}
\end{equation}
Explicitly, given a right (or left) $\A$-module $\E_R=p\A^N$
(or $\E_L=\A^Np$), a unitary endomorphism is a unitary matrix in
$M_N(\A)$ that commutes with $p$, 
\begin{equation}
{\cal U}(\E_{L,R}):=\left\{u\in M_N(\A) ,\, [u,p]= 0, \, u^*u= \id_\E \right\},
\label{eq:94}
\end{equation} 
and acts by ordinary matrix multiplication
\begin{equation}
  \label{eq:128}
  u(\eta) := p (u\eta) \; \text{ for }\; \eta \in \E_R, \quad\quad 
u(\eta) := (\eta u^*) p  \;\text{ for }\; \eta\in\E_L.
\end{equation}
The choice to act with $u^*$ instead of $u$ in the left-module case is
discussed in Remark \ref{rem:actionunitaire}. 

Given a derivation
$\delta$ of $\A$, we denote by
$\delta(u), \delta(u^*)$ the
elements of $M_N(\Omega)$ with components $\delta(u_{ij})$ or  
$\delta(u^*_{ij})\in\Omega$, $1\leq i,j\leq N$, where $u_{ij}, u^*_{ij}\in\A$
are the components of $u, u^*$. 
\begin{prop}
The gauge transformations on right and left modules are given by 
\begin{align}
  \label{eq:139}
  {\boldsymbol \omega}^u(\eta) &:= p\,u\cdot\delta(u^*)\cdot\eta + u({\boldsymbol \omega}(u^*(\eta))) \qquad \forall \eta \in \E_R,\\
  \label{eq:139bis}
  {{\boldsymbol \omega}}^u(\eta) &:= \eta \cdot\delta(u)\cdot u^* \,p + u({{\boldsymbol \omega}}(u^*(\eta))) \qquad\forall \eta\in \E_L.
\end{align}
\end{prop}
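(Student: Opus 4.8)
The plan is to compute $\nabla^u$ directly from the definitions, exactly as in the abstract proof given just above for the claim that $\nabla^u$ is a connection, but now keeping track of the Grassmann part and the gauge-potential part separately. The key observation is that the gauge action $\nabla \mapsto u\nabla u^*$ fixes the Grassmann connection: indeed $\nabla_0^u(\eta) = u\nabla_0(u^*(\eta))$, and since $\nabla_0$ satisfies the Leibniz rule with zero potential while $u$ is $\A$-linear, one gets $\nabla_0^u = \nabla_0$. Hence writing $\nabla = \nabla_0 + \boldsymbol{\omega}$ we have $\nabla^u = \nabla_0 + \boldsymbol{\omega}^u$ with $\boldsymbol{\omega}^u = u\nabla u^* - \nabla_0$, and the whole content of the proposition is the explicit evaluation of this difference on an element $\eta$ of the module.

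First I would treat the right-module case $\E_R = p\A^N$. Apply $\nabla^u = u\nabla u^*$ to $\eta\in\E_R$ and insert $\nabla = \nabla_0 + \boldsymbol{\omega}$, using $u(\eta) = p(u\eta)$ from \eqref{eq:128}. The term $u\boldsymbol{\omega}(u^*(\eta))$ appears verbatim. For the Grassmann part, compute $u\nabla_0(u^*(\eta))$: with $u^*(\eta) = p(u^*\eta)$ and $\nabla_0$ acting componentwise by $\delta$, the Leibniz rule for $\delta$ on the product $u^*\cdot\eta$ (componentwise in $M_N(\A)$) produces $p\,\delta(u^*)\cdot\eta + p\,u^*\cdot\delta(\eta)$; after multiplying on the left by $u$ and using $[u,p]=0$ together with $uu^*=\id$, the second summand collapses to $\nabla_0(\eta)$ and the first becomes $p\,u\cdot\delta(u^*)\cdot\eta$. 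Subtracting $\nabla_0(\eta) = \nabla_0^u(\eta)$ leaves precisely \eqref{eq:139}. The left-module case $\E_L = \A^N p$ is entirely parallel: now $u(\eta) = (\eta u^*)p$ from \eqref{eq:128}, so $u^*(\eta) = (\eta u)p$, and expanding $\nabla = \nabla_0^\circ + \boldsymbol{\omega}^\circ$ with $\nabla_0^\circ$ acting componentwise by $\delta^\circ$ and the row-vector convention of \eqref{eq:50}, the Leibniz rule applied to the product $\eta\cdot u$ on the right yields the term $\eta\cdot\delta(u)\cdot u^*\,p$ after simplification, giving \eqref{eq:139bis}.

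The only genuine subtlety — and the step I expect to require the most care — is bookkeeping the matrix conventions: that $\boldsymbol{\omega}$ is $\A$-linear in the sense of \eqref{eq:132} (resp. \eqref{eq:134}), that the projection $p$ sits consistently on the correct side, that $[u,p]=0$ is invoked exactly where needed, and that the extra $u^*$ (versus $u$) in the left-module action of \eqref{eq:128} threads through correctly so that the derivation lands on $u$ rather than $u^*$ in \eqref{eq:139bis}. None of this is conceptually hard; it is the routine verification that the componentwise Leibniz rule for $\delta$ (resp. $\delta^\circ$) on products of $M_N(\A)$-valued quantities combines with unitarity of $u$ to isolate the stated gauge-potential term while returning the Grassmann connection unchanged. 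So the proof is short: establish $\nabla_0^u = \nabla_0$, expand $u\nabla u^*$, apply Leibniz, and read off \eqref{eq:139} and \eqref{eq:139bis}.
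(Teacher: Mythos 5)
Your computational core is exactly the paper's proof, and it does arrive at \eqref{eq:139} and \eqref{eq:139bis}. But the ``key observation'' you lead with --- that the gauge action fixes the Grassmann connection, $\nabla_0^u=\nabla_0$ --- is false, and it contradicts your own calculation two sentences later. By definition $\nabla_0^u(\eta)=u\nabla_0(u^*(\eta))$, and as you yourself compute, this equals $p\,u\cdot\delta(u^*)\cdot\eta+\nabla_0(\eta)$; hence
\begin{equation*}
\nabla_0^u=\nabla_0+p\,u\cdot\delta(u^*)\neq\nabla_0
\end{equation*}
unless $\delta(u^*)=0$. (Already for $\E_R=\A$, $p=\I$, one has $u\nabla_0u^*(a)=u\ot\delta(u^*a)=\I\ot u\cdot\delta(u^*)\cdot a+\I\ot\delta(a)$, and the first term is the nonzero pure-gauge potential $u[D,u^*]\,a$.) Indeed, the entire content of the inhomogeneous term in \eqref{eq:139} is that the Grassmann connection is \emph{not} gauge invariant; if it were, the transformation law \eqref{eq:143} would carry no $u[D,u^*]$ term. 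The claimed justification (``$\nabla_0$ satisfies Leibniz with zero potential and $u$ is $\A$-linear'') only shows that $\nabla_0^u$ is again a connection, hence differs from $\nabla_0$ by an $\A$-linear map --- not that this difference vanishes.

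The fix is small: what your argument actually uses is not $\nabla_0^u=\nabla_0$ but the definition \eqref{eq:142}, namely that $\boldsymbol{\omega}^u$ is the $\A$-linear difference $\nabla^u-\nabla_0$ with respect to the \emph{fixed} reference connection $\nabla_0$. With that reading, your steps --- expand $u\nabla u^*$, apply the componentwise Leibniz rule to $\nabla_0(u^*\eta)$, use $[u,p]=0$ and $uu^*=\id_\E$ to collapse the term $u\ot u^*\cdot\delta(\eta)$ to $\nabla_0(\eta)$, and subtract $\nabla_0(\eta)$ --- reproduce the paper's proof verbatim, for both the right- and the left-module cases (including the point that the extra $u^*$ in \eqref{eq:128} puts the derivation on $u$ rather than $u^*$ in \eqref{eq:139bis}). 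So delete the false claim and replace ``subtracting $\nabla_0(\eta)=\nabla_0^u(\eta)$'' by ``subtracting the reference connection $\nabla_0(\eta)$''; the rest stands.
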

\begin{proof}
For $\eta = p (\eta_j) \in\E_R$ (with $\eta_j\in\A$), using that $p$
commutes with $u^*$ and $p\eta=\eta$, one gets 
  \begin{align*}
    \nabla_0(u^*(\eta)) &=     \nabla_0(pu^*\eta)  =
    \nabla_0(u^*p\eta)  =     \nabla_0(u^*\eta) \nonumber \\ &= p \left(\begin{array}{c}\delta(u^*_{1j}
                                       \eta_j)\\ \vdots \\ \delta(u^*_{Nj}
                                       \eta_j)\end{array}\right)
 = p
                                   \left(\begin{array}{c}\delta(u^*_{1j})\cdot\eta_j
                                           + u^*_{1j} \cdot\delta(\eta_j)\\ \vdots \\\delta(u^*_{Nj})\cdot
                                           \eta_j +
                                           u^*_{Nj}\cdot\delta(\eta_j)\end{array}\right) \nonumber \\ & =
                                       p\delta(u^*)\cdot\eta + u^*\nabla_0 (\eta) ,
                                     \end{align*}
with summation on the
index $j=1,..., N$.
Acting with $u$ on the left, one gets
\begin{equation}
  \label{eq:1288}
  \nabla^u_0 = \nabla_0 + p\,u\cdot \delta(u^*).\end{equation}
the result follows from \eqref{eq:138}, \eqref{eq:142}.
Similarly,  for $\eta\in\E_L$,
  one has
 \begin{align*}
\nonumber
    \nabla_0(u^*(\eta)) &= \left(\delta(
                                       \eta_j u_{j1}), \hdots,\delta(
                                       \eta_j u_{jN})\right) p \\ &=
                                   \left(\delta(\eta_j)\cdot u_{j1}
                                           + \eta_j\cdot\delta(u_{j1} ), \hdots, \delta(\eta_j)\cdot u_{jN} +
                                          \eta_j\cdot\delta(u_{jN})\right)
                                      p\\
&=
                                      \nabla_0
                                       (\eta)\cdot u+  \eta\cdot \delta(u)p.
                                     \end{align*}
Acting with the endomorphism $u$ on the left, which by \eqref{eq:128} amounts to
multiply by the matrix $u^*$ on the right, one obtains
\begin{equation}
  \label{eq:144}
  \nabla_0^u(\eta) = \nabla_0(\eta) + \eta\cdot \delta(u)\cdot u^*\,p.
\end{equation}
 Hence the result.
               \end{proof}

\subsubsection{Gauge transformation for a spectral triple}
\label{subsec:annexgaugetriple}

Let $(\A, \HH, D)$ be a real spectral triple, and consider the right $\A$-module $\E_R=\A$, with derivation $\delta(\cdot)=[D, \cdot]$ in the
$\A$-bimodule $\Omega^1_D(\A)$ defined in \eqref{eq:71}. The
equation \eqref{eq:139} yields the usual law of transformation of the gauge potential, 
\begin{equation}
  \label{eq:143}
  \omega^u  = u[D,u^*] + u\,\omega\, u^*.
\end{equation}
Under a gauge transformation, the gauged Dirac operator $D_\omega$ in \eqref{eq:77} is thus mapped to
\begin{equation}
  \label{eq:145}
  D_{\omega^u}= D + \omega^u + \epsilon' J \omega^u J^{-1}.
\end{equation}
\begin{rem}\label{rem:actionunitaire}
\textup{
To write \eqref{eq:145}, one applies the gauge transformation
$\omega\to\omega^u$ on the
operator $D_\omega$ obtained in Proposition \ref{prop:fluctbimodule}, that is once $\omega_L$ and $\omega_R$ have
been identified. \\
For the sake of coherence, let us check that the
same result follows by applying the gauge transformation on $\omega_L$
and $\omega_R$ independently. Consider the left module $\E_L=\A$ with derivation $\delta^\circ(a)=  [D,a^\circ]$ in $\Omega^1_D(\A^\circ)$ defined in \eqref{eq:69}. By Lemma \ref{actionoppositeform}, a  gauge
  potential in $\Omega^1_D(\A^\circ)$ is  $\omega^\circ=\epsilon' J \omega_L
  J^{-1}$ with $\omega_L\in\Omega^1_D(\A)$. The law of transformation  \eqref{eq:139bis} 
 reads
  \begin{align*}
{\omega^\circ}^u &=   \delta^\circ(u)\cdot u^* + u\cdot
                   \omega^\circ\cdot u^* =  {u^*}^\circ \delta^\circ(u)+ {u^*}^\circ \omega^\circ {u}^\circ,\\
& = {u^*}^\circ[D, {u}^\circ] + {u^*}^\circ \omega^\circ {u}^\circ =\epsilon' J u[D,u^*]J^{-1} + \epsilon' J\,u \,\omega_L u^* J^{-1}= \epsilon' J\,\omega_L^{u}\, J^{-1}.
  \end{align*}
Thus the operator $D+\omega_R+ \epsilon' J \omega_L J^{-1}$ in
Proposition  \ref{prop:fluctbimodule} is mapped under a gauge transformation
to $D + \omega_R^u +   \epsilon' J \omega^u_L J^{-1}$, meaning that
$\omega=\tfrac 12(\omega_R+\omega_L)$ is mapped to
\begin{equation}
\tfrac 12(\omega^u_R+\omega^u_L) = u[D, u^*] + u \frac
12(\omega_R+\omega_L)u^* = u[D, u^*] + u \omega u^*.
\label{eq:73}
\end{equation}
One thus finds back \eqref{eq:143}, as expected. 
}
\end{rem}

Remarkably \cite{Connes:1996fu}, the gauge transformation $D_\omega \to D_{\omega^u}$ can be
retrieved from the adjoint action on $\HH$ of the unitary group of
$\A$, defined by using the real structure. That is, for any unitary element $u\in\A$, $u^*u= uu^*=\I$,
one defines
\begin{equation}
  \label{eq:106}
  \Ad(u)\psi := u \,\psi \,u^*  = u J  u J^{-1}\psi\quad \forall \psi\in\HH.
\end{equation}
Under this action, the Dirac operator is mapped to
$\Ad(u)\, D \Ad(u)^{-1}$. By the order zero and the 
first order conditions,  one shows that \cite[Prop. 1.141]{Connes-Marcolli}
\begin{align}
  \label{eq:11}
\Ad(u)\, D \Ad(u)^{-1} =
  D + u[D,  u^*]  + \epsilon' Ju[D, u^*]J^{-1},
\end{align}
which is nothing but the operator $D_{\omega^u}$ of \eqref{eq:145} obtained
for $\omega=0$ so that $\omega^u=u[D,  u^*]$ from \eqref{eq:143}. 
More generally, 
for a gauged Dirac operator 
\begin{equation}
D_\omega= D + \omega + \epsilon'  \, J \omega  J^{-1}\label{eq:18bis}
\end{equation}
where $\omega$  is an arbitrary self-adjoint element of 
$\Omega^1_D(\A)$, one has \cite[Prop. 1.141]{Connes-Marcolli})
\begin{equation}
  \label{eq:5}
  \Ad(u) \, D_\omega\, \Ad(u)^{-1} = D_{\omega^u}
\end{equation}
with $\omega^u$ defined in \eqref{eq:143}.

\newpage

\end{document}